\newtheorem{definition}{Definition}[section]
\newtheorem{prop}{Proposition}[section]
\newcommand{\A}{\ensuremath{\mathrm{A}}\xspace}
\newcommand{\B}{\ensuremath{\mathrm{B}}\xspace}
\newcommand{\C}{\ensuremath{\mathrm{C}}\xspace}
\newcommand{\Ci}[1]{\ensuremath{{{C}}_\text{\sf #1}}}
\newcommand{\Ri}[1]{\ensuremath{{{R}}_\text{\sf #1}}}
\newcommand{\Cii}[2]{\ensuremath{{{C}}_\text{\sf #1}^\text{\sf #2}}}
\newcommand{\CRo}{\ensuremath{C_{R_0}}}
\newcommand{\CRoi}[1]{\ensuremath{C_{R_0}^\text{\sf #1}}}
\newcommand{\Rii}[2]{\ensuremath{{{R}}_\text{\sf #1}^\text{\sf #2}}}
\newcommand{\SNR}{{\sf SNR}}
\newcommand{\INR}{{\sf INR}}
\newcommand{\CNR}{{\sf CNR}}
\newcommand{\Ve}{\ensuremath{V}}
\newcommand{\ve}{\ensuremath{v}}
\newcommand{\IFconf}{\ensuremath{\text{IF}^\text{coop}}}
\newcommand{\bp}[2]{\ensuremath{{\sf{C}}_{#1#2}}}
\newcommand{\rly}[3]{\ensuremath{{\Delta \sf{R}}_{#1#2#3}}}
\newcommand{\spR}[2]{\ensuremath{\tilde{\sf{R}}^{#1}_{#2}}}
\newcommand{\spU}[2]{\ensuremath{\tilde{U}^{#1}_{#2}}}
\newcommand{\spu}[2]{\ensuremath{\tilde{u}^{#1}_{#2}}}
\newcommand{\ceil}[1]{\ensuremath{\lceil{#1}\rceil}}
\renewcommand{\d}[1]{\ensuremath{\delta_{#1}}}
\newcommand{\Var}[1]{\ensuremath{\mathrm{Var}\left({#1}}\right)}
\newcommand{\x}{x}
\newcommand{\y}{y}
\newcommand{\z}{z}
\newtheorem{thm}{Theorem}[section]
\newtheorem{lem}{Lemma}[section]
\newcommand{\qed}{\hfill \mbox{\raggedright \rule{.1in}{.1in}}}
\begin{document}

\title{Interference Channels with Half-Duplex Source Cooperation}

\author{\IEEEauthorblockN{Rui Wu, Vinod Prabhakaran, Pramod Viswanath and Yi  Wang\footnote{R. Wu and P. Viswanath are with the Coordinated Science Laboratory and the department of Electrical and Computer Engineering at the University of Illinois, Urbana-Champaign. V. Prabhakaran is with the Tata Institute of Fundamental Research, Bombay, India. Y. Wang is with Huawei Technologies. This paper was presented in part at the IEEE ISIT 2010 in  Austin, Texas. This work has been supported in part by grants from NSF CCF 1017430, NSF ECCS 1232257 and Intel¡¯s University Research Office. Copyright \copyright\ 2013 IEEE. Personal use of this material is permitted.  }}}

\maketitle

\begin{abstract}
The performance gain by allowing half-duplex source cooperation is studied for Gaussian interference channels.
The source cooperation is {\em in-band}, meaning that each source can listen to the other source's transmission, but
there is no independent (or orthogonal) channel between the sources. The half-duplex constraint supposes that at each time instant  the sources can either transmit or listen, but not do both. 

Our main result is a characterization of the sum capacity when the cooperation is bidirectional and the channel gains are symmetric. With unidirectional cooperation, we essentially have a cognitive radio channel. By requiring the primary to achieve a rate  close to
its link capacity, the best possible rate for the secondary is characterized within a constant. Novel inner and outer bounds are derived as part of these characterizations. 
\end{abstract}

\section{Introduction}
A basic characteristic of the wireless medium is its {\em broadcast}
nature. This manifests itself as interference when multiple users
try to share the medium. An active area of research which
investigates efficient schemes for managing interference has focused
on interference
channels~\cite{IFalign,ElGamalCosta,EtkinTseWang,HanKobayashi}.
However, the broadcast feature is also a blessing in disguise in
that the same transmission could be heard by multiple receivers,
opening up the possibility of cooperation. Traditionally, the
cooperation aspect has been investigated separately using relay
channels in which only one source-destination pair is
present~\cite{CoverElGamal}. Recently, the role of cooperation in
managing interference has come under scrutiny (\cite{CaoChen07,CaoChen09,Host-Madsen06,MYK07,Mohajer08,OLT07,PrabhakaranVi09S,PrabhakaranVi09D,RFL09,SSNPS07,Thejaswi08,Tuninetti07,WT11,YangTuninetti08,YuZhou08}
is an incomplete list of references).

In this paper we investigate reliable communication over the two-user interference channel,
where the two sources may not only transmit but also receive
(Figure~\ref{Fig:HD_channel}). This ability to receive will allow the sources to
cooperate. However, to be realistic about the gains that can be
derived from this cooperation, we impose two key restrictions:
\begin{itemize}

\item {\em In-band cooperation.} No extra orthogonal band is available for
the source nodes to transmit to each other over; all transmission and
reception must happen over the same band. Thus, the sources cooperate by
transmitting and receiving over the same band that is originally available
for the interference channel.

\item {\em Half-duplex operation.} Each source node may either transmit or
receive at a time but cannot do both. This respects the limitations of current
hardware technology.

\end{itemize}

\begin{figure}
  \centering\scalebox{0.6}{\input{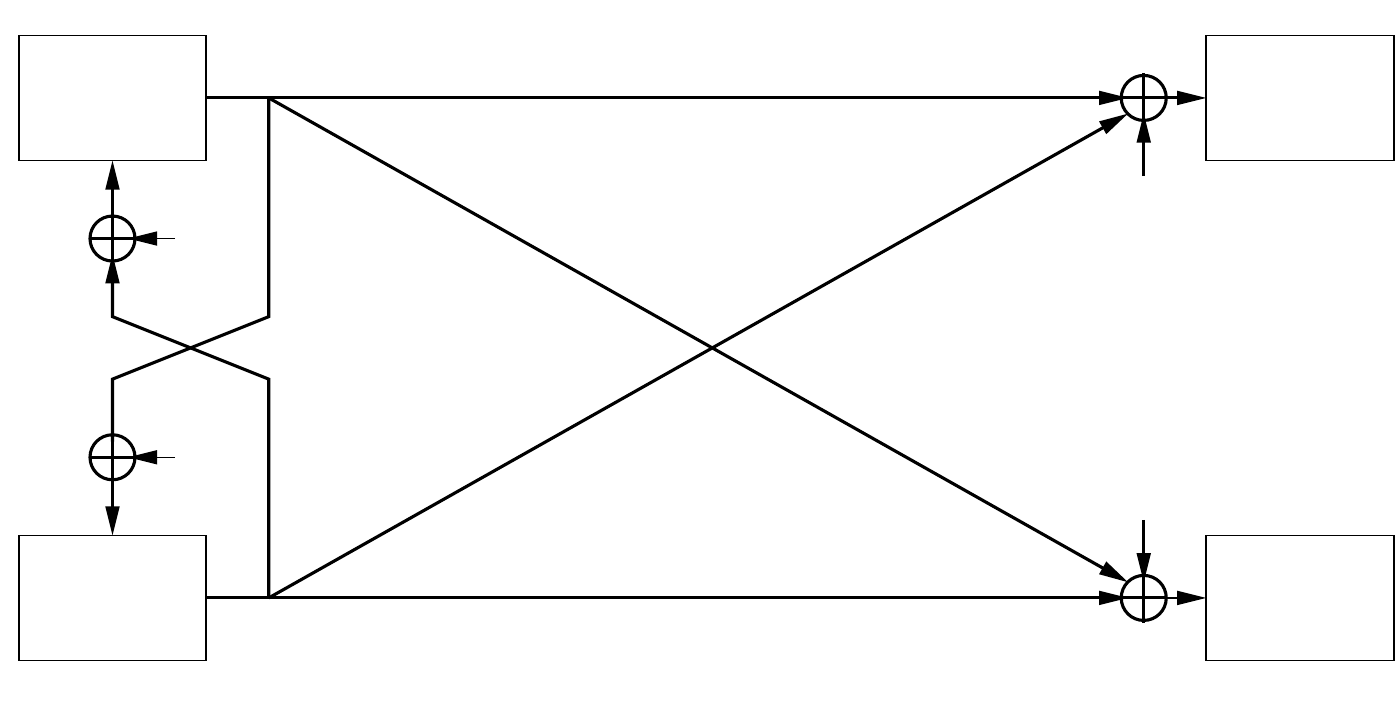_t}}\label{Fig:hd_channel}\\
  \vfil
  \centering{\subfloat[mode $A$]{\scalebox{0.2}{\input{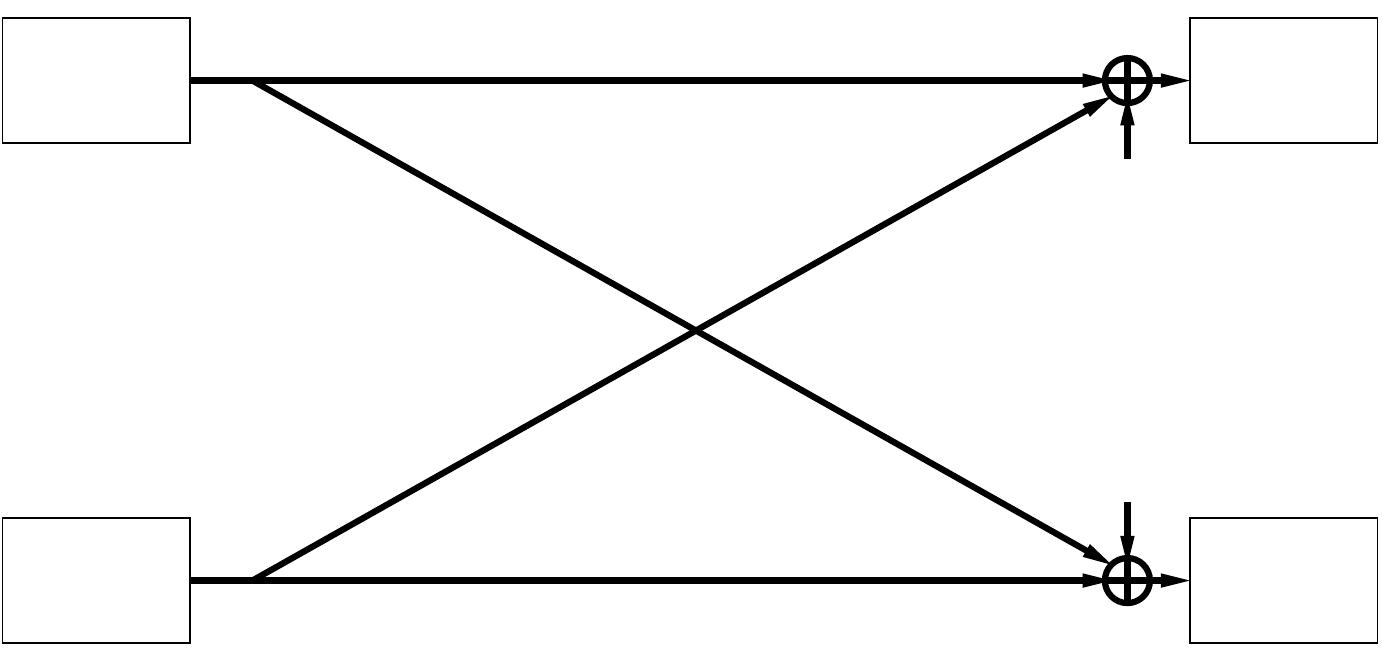_t}}\label{Fig:hd_channel_A}}\hfil
  \subfloat[mode $B$]{\scalebox{0.2}{\input{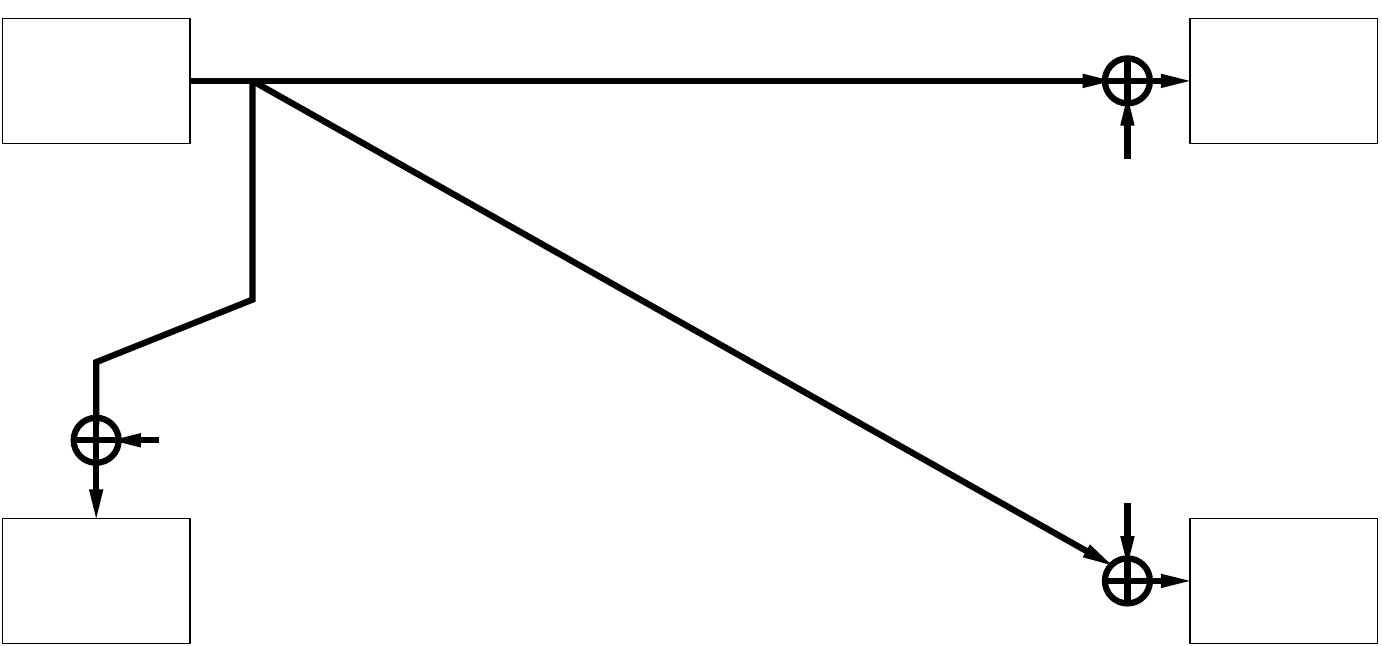_t}}\label{Fig:hd_channel_B}}\hfil
  \subfloat[mode
  $C$]{\scalebox{0.2}{\input{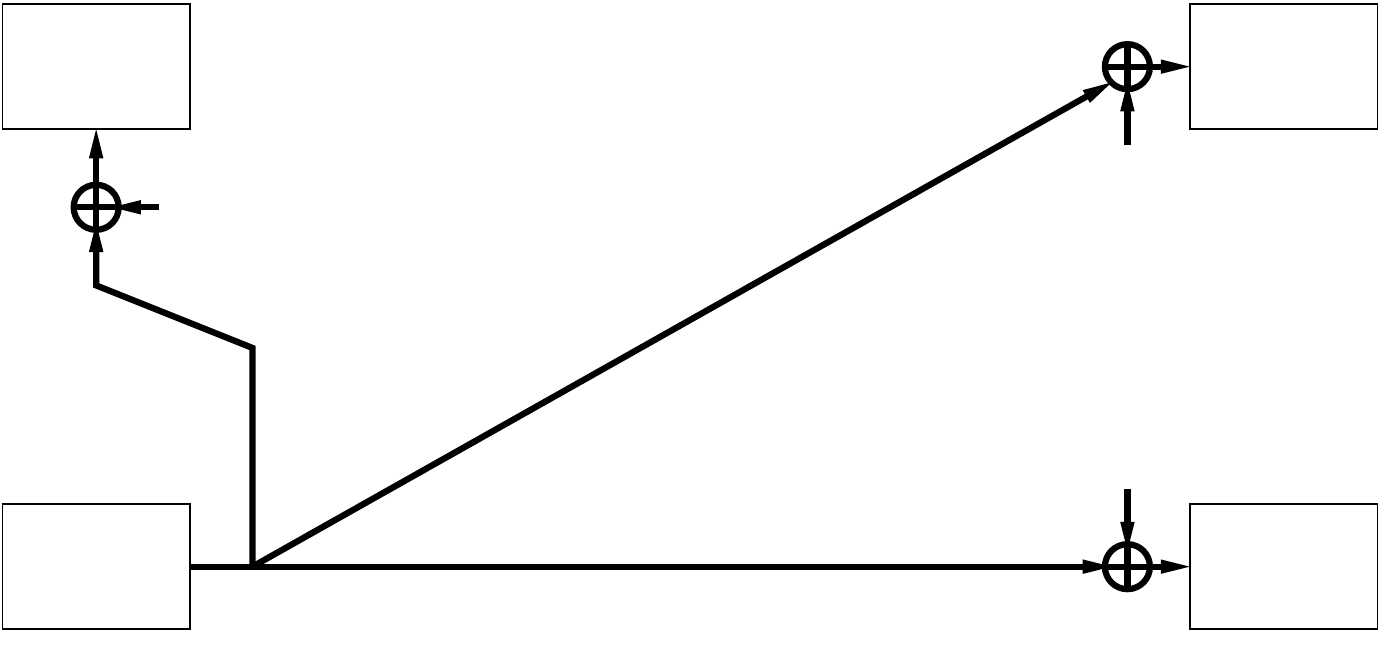_t}}\label{Fig:hd_channel_C}}}\vfil
  \caption{Interference channel with half-duplex source cooperation. The sources can work in three modes: $(\A)$ both sources transmit, $(\B)$ source~1
transmits while source~2 receives, and $(\C)$ source~2 transmits and
source~1 receives.}
  \label{Fig:HD_channel}
\end{figure}

Reliable communication with a half-duplex constraint has been previously studied
in the context of the relay channel in \cite{HD_Relay}. In
\cite{Coop_Diversity}, half-duplex cooperation was used to provide spatial diversity for
fading channels. recent work on full duplex cooperation in interference channels ~\cite{PrabhakaranVi09S,WT11,HD_Coop} are closely related to the present manuscript.
\cite{PrabhakaranVi09S} studied source cooperation under full-duplex assumption.
In \cite{WT11} cooperation is over conferencing links orthogonal to the original channel.
in contrast to in-band cooperation here.
Our
model is identical to the source cooperation part in \cite{HD_Coop}. In \cite{HD_Coop},
an achievable rate region is provided, but the outer bound is only studied the case when
cooperation is very strong.
The work \cite{CRC} considers the half duplex cooperation for relay channels, which are special cases of the interference channels considered in this paper.

The characterization of the capacity region in this setting is quite challenging -- it includes the canonical interference and relay channels as special cases. A complete characterization of the capacity region is also made further complicated by the huge amount of notation and description complexity of the  region. While we do not characterize the entire capacity region in this work, we nevertheless make significant progress in understanding the nature of near optimal communication schemes in this setting.
We present our results in  two different scenarios, aiming at minimizing the notation and description complexity while providing the maximum intuition to the nature of the capacity region as well as gains of interference mitigation via cooperation.
\begin{itemize}
\item In the first scenario, the cooperation is bidirectional and the channel gains are symmetric. Our main result is a characterization of  the sum
capacity of this channel within a constant. Maintaining symmetric channel gains is primarily aimed at reducing the  notational burden.
\item In the second scenario, the cooperation is unidirectional; i.e., source
2 can listen to source 1's transmission but not the other way around. This setting is essentially what is also known in the literature as a ``cognitive radio channel", where  we cansider source
1 and destination 3 the ``primary user" and source 2 and destination 4 the ``secondary user". The main question we address is the following: what rate can the secondary  user achieve without affecting the primary user's
performance by much? The largest such rate, known as  the capacity of the cognitive  radio channel, is characterized in this manuscipt up to a constant.
\end{itemize}

The coding scheme we use to enable reliable communication is quite general and can be applied to all interference channels with half-duplex source cooperation. The
key idea is to turn the half-duplex cooperation problem to a virtual channel problem. A virtual channel is an interference
channel with rate-limited bit-pipes between the two sources and from each source to the destination where it causes interference.
This virtual channel is similar to the channel considered in \cite{WT11} except that there they do not have bit-pipes from sources
to destinations.

The coding scheme for the virtual channel is an extension of the superposition coding scheme for the interference channels
\cite{HanKobayashi}. In addition to public and private messages, we further introduce cooperative private and pre-shared public
messages. Cooperative private messages are shared over the bit-pipes between the two sources so they can be sent using
source beamforming. Pre-shared public messages are shared over the bit-pipes from the sources to the destinations so the
signals corresponding to such messages can be canceled at the other destination and do not cause interference.

To reduce the original channel to a virtual channel, we schedule the transmission in two steps. In the first step, only one
source transmits and the other source listens. The active source can send data to its destination, share information with
the other nodes, or relay data from the other source to the other destination. In the second step, both sources transmit. The
shared information from the previous step and the interference channel together is indeed a virtual channel, and the scheme
mentioned above is applied to this channel. In the end, we optimize over the scheduling parameters to get the best
achievable rate.

An important tool we use to study the Gaussian channel is the linear deterministic model introduced in \cite{ADT08}. The linear deterministic model focuses on modeling the broadcast and interference of the signals. It assumes that the signals are quantized and the noise is negligible, which can be a good approximation in the high SNR regime. For each problem, we will first study the corresponding linear deterministic model and then consider an achievable scheme that mimics the scheme of the linear deterministic model for the Gaussian case. We note that it is possible to get the constant gap result for the Gaussian case directly, but the linear deterministic model allows us to get a clearer understanding of the coding schemes as it is much simpler to deal with.

The rest of the paper is organized as follows. In section~\ref{sec:problem}, we formally state the two problems and in section~\ref{sec:result} the
main results about the sum capacity and the cognitive capacity are given. Section~\ref{sec:sym_LDM} and section~\ref{sec:sym_Gaussian} deal with the
symmetric case and section~\ref{sec:cog_LDM} and section~\ref{sec:cog_Gaussian} are for the cognitive case. In both cases, we start by examining the
corresponding linear deterministic model and use the intuition derived to work with the more complicated Gaussian model. 

\section{Problem Statement}\label{sec:problem}
\subsection{The Symmetric Case}\label{sec:problem_sym}
The Gaussian interference channel with bidirectional source
cooperation is depicted in Figure~\ref{Fig:HD_channel}.

The source nodes 1 and 2 want to communicate with destination nodes
3 and 4, respectively. The communication is over discrete time slots $t = 1, \dots, L$. We assume that the additive noise processes are memoryless and independent across receivers. Without loss of generality, we also assume that the channel is normalized. 
i.e., the additive noise processes $(Z_{it}),
i = 1, 2, 3, 4$ are independent $\mathcal{CN}(0, 1)$, i.i.d. over
time, and the codeword $(X_{it})$ at source i satisfies the power
constraint
\begin{align*}
  \frac{1}{L}\sum_{t=1}^L{E\left[|X_{it}|^2\right]}\leq 1, i = 1, 2.
\end{align*}
Further, we assume the channel is symmetric, i.e.,
$|h_{13}|^2=|h_{24}|^2=\SNR, |h_{14}|^2=|h_{23}|^2=\INR, |h_{12}|^2=|h_{21}|^2=\CNR$.

As the cooperation is half-duplex, the first source chooses to transmit (send) or listen at each time $t=1,2,\ldots,n$ based on its message $W_1$ and what it has received so far $Y_1^{t-1}$. Thus, the first source's input to the channel is $(X_1,S_1)\in{\mathbb C}\times\{1,0\}$ and the encoding function is $(X_{1,t},S_{1,t})=f_{1,t}(W_1,Y_1^{t-1})$. Furthermore the power constraint at the first source's transmitter is $(1/n)\sum_{t=1}^n E[|X_{1,t}|^21_{S_{1,t}=1}]$. Similary for the second source. The channel outputs are as follows:
\begin{align*}
Y_{1,t}&= (h_{21}X_{2,t}+Z_{1,t})1_{S_{1,t}=0}\\
Y_{2,t}&= (h_{12}X_{1,t}+Z_{2,t})1_{S_{2,t}=0}\\
Y_{3,t}&= h_{13}X_{1,t}1_{S_{1,t}=1} + h_{23}X_{2,t}1_{S_{2,t}=1} + Z_{3,t},\\
Y_{4,t}&= h_{14}X_{1,t}1_{S_{1,t}=1} + h_{24}X_{2,t}1_{S_{2,t}=1} + Z_{4,t}.
\end{align*}
More specifically, the channel can be in one of the following three modes. In mode $A$, both sources transmit. The
nodes receive
\begin{align*}
  &Y_{1t} = 0,\\
  &Y_{2t} = 0,\\
  &Y_{3t} = h_{13}X_{1t}+h_{23}X_{2t}+Z_{3t},\\
  &Y_{4t} = h_{14}X_{1t}+h_{24}X_{2t}+Z_{4t}.
\end{align*}
In mode $B$, source 1 transmits and source 2 listens. Then
\begin{align*}
  &Y_{1t} = 0,\\
  &Y_{2t} = h_{12}X_{1t}+Z_{2t},\\
  &Y_{3t} = h_{13}X_{1t}+Z_{3t},\\
  &Y_{4t} = h_{14}X_{1t}+Z_{4t}.
\end{align*}
In mode $C$, source 2 transmits, source 1 listens, and
\begin{align*}
  &Y_{1t} = h_{21}X_{2t}+Z_{1t},\\
  &Y_{2t} = 0,\\
  &Y_{3t} = h_{23}X_{2t}+Z_{3t},\\
  &Y_{4t} = h_{24}X_{2t}+Z_{4t}.
\end{align*}

A block length-$L$ codebook of rate $(R_1, R_2)$ for the channel consists of a schedule function $\varphi(t)\in\{\A,\B,\C\}$ and a sequence of encoding functions $f_{it}$ and decoding functions
$g_{i+2}$, $i=1, 2$, $t = 1, 2, \dots, L$. The scheduling function specifies which mode the channel is in at time $t$. The source messages $W_i\in \{1, 2, \dots, 2^{LR_i}\}$, $i = 1, 2$ are independent and uniformly distributed. The sources transmit $X_{it} = f_{it}(W_i, Y_i^{t-1})$, where $Y_i^{t-1} = (Y_{i1}, \dots, Y_{i(t-1)})$. Note that the encoding functions are causal. Further, the encoding functions also are constrained by a scheduling function $\varphi(t)$; i.e., we have $X_{2t} = f_{2t}(W_2, Y_2^{t-1}) = 0$ when $\varphi(t)=\B$ and $X_{1t} = f_{1t}(W_1, Y_1^{t-1}) = 0$ when $\varphi(t) = \C$. \mbox{Destination-$(i+2)$} estimates the message intended for it as $\hat{W_i} = g_{i+2}(Y_{i+2}^L)$, $i=1,2$. We say that a rate pair $(R_1, R_2)$ is {\em achievable} if there is sequence of rate $(R_1,R_2)$ codebooks such that as $L\to \infty$,
\begin{align*}
P(\hat{W_i} \ne W_i)\to 0, i = 1, 2.
\end{align*}
The capacity region $\mathscr{C}$ is the collection of all
achievable $(R_1, R_2)$. The {\em sum-capacity} $C_{sum}$ of the channel is
defined as the largest $R_1+R_2$ such that $(R_1,R_2)\in
\mathscr{C}$. In Section \ref{sec:result} we will provide a
characterization of the sum-capacity within a constant.

\subsection{The Cognitive Case}\label{sec:problem_cog}
The Gaussian interference channel with unidirectional source
cooperation is depicted in Figure~\ref{Fig:HD_channel_cog}. This channel has no cooperation link from source 2 to
source 1.

The source nodes 1 and 2 want to communicate with destination nodes
3 and 4, respectively. The communication is over discrete time slots $t = 1, \dots, L$. Without loss of generality, we assume the
channel is normalized; i.e., the additive noise processes $(Z_{it}),
i = 2, 3, 4$ are independent $\mathcal{CN}(0, 1)$, i.i.d. over time,
and the codeword $(X_{it})$ at source i satisfies the power
constraint
\begin{align*}
  \frac{1}{L}\sum_{t=1}^L{E\left[|X_{it}|^2\right]}\leq 1, i = 1, 2.
\end{align*}
Here, we assume that the channel gains are asymmetric in general. We
can view source 1 as the primary user and source 2 as the secondary
user, and the secondary can listen to the primary's transmission and
adapt its behavior accordingly. Hence, this case corresponds to the
cognitive scenario.

\begin{figure}
  \centering\scalebox{0.6}{\input{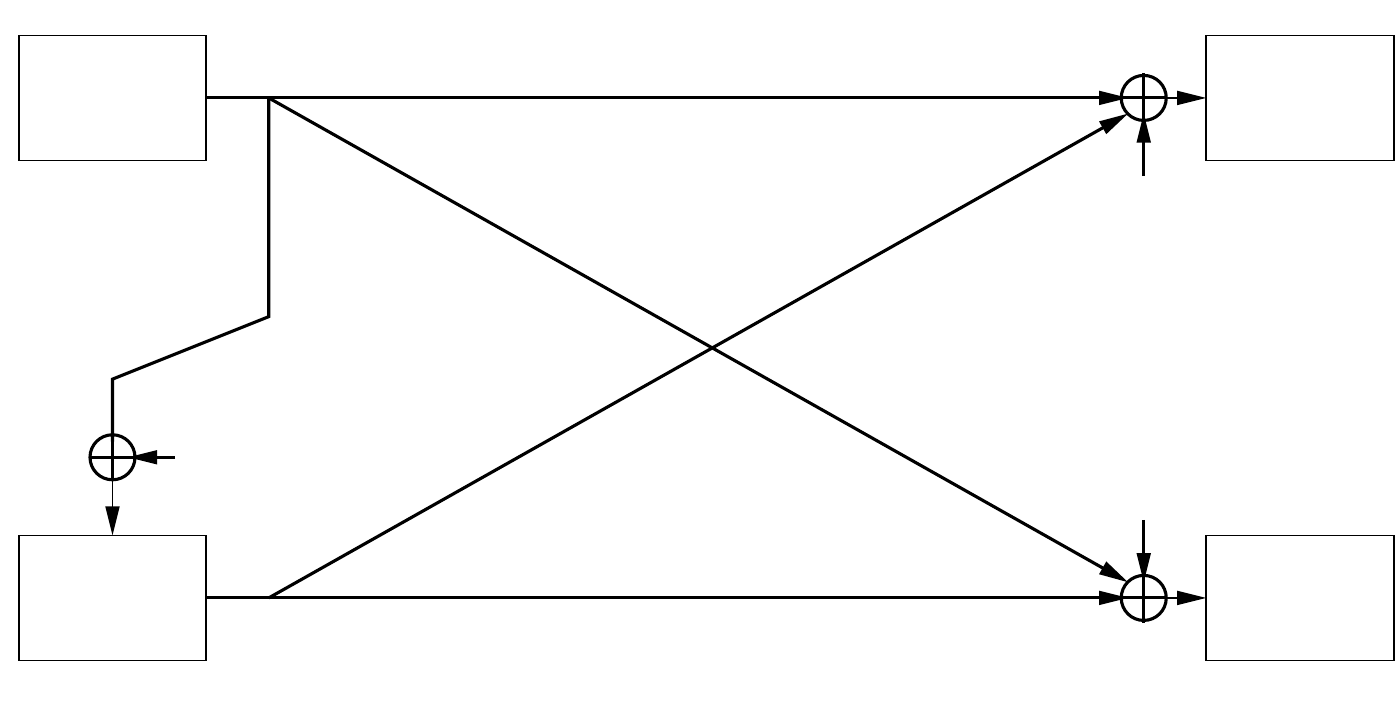_t}}
  \caption{Interference channel with unidirectional half-duplex source cooperation.}\label{Fig:HD_channel_cog}
\end{figure}

As there is only one-side half-duplex cooperation, the secondary sender chooses to transmit (send) or listen at each time $t=1,2,\ldots,n$ based on its message $W_2$ and what it has received so far $Y_2^{t-1}$. Secondary transmitter's input to the channel is $(X_2,S_2)\in{\mathbb C}\times\{1,0\}$ and the encoding function is $(X_{2,t},S_{2,t})=f_{2,t}(W_2,Y_2^{t-1})$. Furthermore the power constraint at the secondary transmitter is $(1/n)\sum_{t=1}^n E[|X_{2,t}|^21_{S_{2,t}=1}]$. The channel outputs are as follows:
\begin{align*}
Y_{1,t}&=0\\
Y_{2,t}&= (h_{12}X_{1,t}+Z_{2,t})1_{S_{2,t}=0}\\
Y_{3,t}&= h_{13}X_{1,t} + h_{23}X_{2,t}1_{S_{2,t}=1} + Z_{3,t},\\
Y_{4,t}&= h_{14}X_{1,t} + h_{24}X_{2,t}1_{S_{2,t}=1} + Z_{4,t}.
\end{align*}
More specifically, the channel can be in one of the following two modes. In mode $A$, both sources
transmit. The nodes receive
\begin{align*}
  &Y_{1t} = 0,\\
  &Y_{2t} = 0,\\
  &Y_{3t} = h_{13}X_{1t}+h_{23}X_{2t}+Z_{3t},\\
  &Y_{4t} = h_{14}X_{1t}+h_{24}X_{2t}+Z_{4t}.
\end{align*}
In mode $B$, source 1 transmits and source 2 listens. Then
\begin{align*}
  &Y_{1t} = 0,\\
  &Y_{2t} = h_{12}X_{1t}+Z_{2t},\\
  &Y_{3t} = h_{13}X_{1t}+Z_{3t},\\
  &Y_{4t} = h_{14}X_{1t}+Z_{4t}.
\end{align*}

Let $\SNR_1 = |h_{13}|^2, \SNR_2 = |h_{24}|^2, \INR_1 = |h_{23}|^2, \INR_2 = |h_{14}|^2, \CNR =
|h_{12}|^2$.

The codebook definition is similar to that in the symmetric case except that now the scheduling function
$\varphi(t)$ only takes value in $\{A, B\}$ and the encoding function $f_{1t}$ is only a function of
$W_1$, as $Y_{1t}$ is always 0. In this case, instead of the sum
capacity, we are more interested in another question from the cognitive
perspective: what can the secondary achieve if we do not sacrifice the
primary's performance? This motivates us to consider the following definition.
\begin{definition}
  Let $C_0 = \log(1+\SNR_1)$ be the capacity achieved by source 1 when $X_{2t}=0, \forall
  t$. Then $R_0$-capacity for the secondary user is defined as
  \begin{align*}
    C_{R_0} = \max_{\begin{subarray}{c} (R_1, R_2)\in \mathscr{C}\\R_1\geq C_0-R_0
    \end{subarray}}R_2.
  \end{align*}
\end{definition}
This definition specifies the best performance the secondary user can get, given that the primary user
backs off less than $R_0$ from its link capacity. In
Section~\ref{sec:result}, the $R_0$-capacity is characterized when $R_0$ is
larger than some constant.

To see why we introduce a back-off in the primary rate, consider the
Z-channel where $\CNR=\INR_2 = 0, \SNR_1 = \SNR_2 = \SNR>1, \INR_1 = \INR = 1$.
Let $R_{ij}$ denote the rate from source $i$ to destination $j$. For a complex Gaussian
Z-channel with weak interference (i.e., $\INR < \SNR$), the achievable rate-tuple
$(R_{11}, R_{21}, R_{22})$ must satisfy~\cite[Theorem 2]{Z_channel}
\begin{align*}
  R_{21}\leq & \log\left(1+\frac{(1-\beta)\INR}{1+\beta \INR}\right)\\
  R_{22}\leq & \log(1+\beta \SNR)\\
  R_{11}+R_{21}\leq & \log\left(1+\frac{\SNR+(1-\beta)\INR}{\beta \INR+1}\right),
\end{align*}
for some $0\leq \beta\leq 1$. In our case, $R_{11} = R_1, R_{22} = R_2$ and $R_{21} =
0$, thus the above constraints reduce to
\begin{align*}
  R_2\leq & \log(1+\beta \SNR)\\
  R_1\leq & \log\left(1+\frac{\SNR+(1-\beta)\INR}{\beta \INR+1}\right),
\end{align*}
for some $0\leq \beta\leq 1$. If no back-off is allowed, i.e., we insist that
$R_1=C_0 = \log(1+\SNR)$, then we must have $\beta\leq \frac{1}{1+\SNR}$, which gives
$R_2\leq \log(1+\frac{\SNR}{1+\SNR})\leq 1$ bit. However, if the primary can back off its rate
by 1 bit, then the secondary can send to its destination at full power and achieve a nonconstant rate
$R_2 = \log(1+\SNR)$. Notice that the gap between the two is unbounded when $\SNR$ scales to
$\infty$. Since we are more interested in the high-SNR region and would want to
characterize capacity only up to a constant, the definition above with back-off
better serves our purpose.

We further remark that this definition is not a constant gap
characterization of the upper-right corner point of the capacity region
$\mathscr{C}$. In fact, with the help of the
secondary, the primary can do strictly better than $C_0$ in some channel parameter settings.

\section{Results}\label{sec:result}
The main result of this paper is the approximate characterization of the
sum capacity of the symmetric case and the $R_0$-capacity of the cognitive
case for $R_0$ larger than some constant. We state them in the
following two theorems and highlight the gains we can get
from half-duplex cooperation.
To prove these theorems, we first motivate the schemes we use by studying the corresponding linear deterministic model in Section~\ref{sec:sym_LDM} and \ref{sec:cog_LDM}. We then sketch the proofs in Section \ref{sec:sym_Gaussian} and
\ref{sec:cog_Gaussian}, with details taken up in the appendices.

\subsection{The Symmetric Case}\label{sec:sym result}

Let $\theta_{ij}$ be the phase angle of $h_{ij}$ and define $\theta$ to be the angle difference between the direct links and the interference links, i.e., $\theta = \theta_{13}+\theta_{24}-\theta_{14}-\theta_{23}$ . We say the channel is aligned if $\SNR = \INR$ and $\theta = 0$. The following theorem characterizes the sum capacity of the symmetric channel within a constant.

\begin{thm}\label{thm:sumrate}
  Define $\overline{\Ci {sum}} = \max_{\delta}\overline{\Ci {sum}}(\delta) = \max_{\delta}\min(u_1, u_2, u_3,
u_4)$, where
  \begin{align*}
  u_1 = &\frac{2}{2+\delta}\Big[\delta \log(1+\SNR)+\log(1+\SNR+\CNR)\Big]\\
  u_2 = &\frac{1}{2+\delta}\Big[\delta \log(1+2\SNR+2\INR)+\log(1+\SNR)+\log(1+\SNR+\INR+\CNR)\\&+\delta\log(1+\frac{\SNR}{1+\INR})\Big]\\
  u_3 = &\frac{2}{2+\delta}\Big[\delta\max\{\log(1+\INR+\frac{2\SNR+\INR}{1+\INR}), \log(1+2\INR)\}+\log(1+\SNR+\INR+\CNR)]\\
  u_4 = &\frac{1}{2+\delta}\Big[\delta\log(1+4\SNR+4\INR+\SNR^2+\INR^2-2\SNR\INR\cos\theta)+2\log(1+\SNR+\INR)\Big].
  \end{align*}
  Then the sum capacity $\Ci {sum}$ of the symmetric channel defined in section~\ref{sec:problem_sym} satisfies
  $\overline{\Ci {sum}}-17 \leq \Ci {sum} \leq \overline{\Ci {sum}}+7$.
\end{thm}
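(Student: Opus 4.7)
My plan is to establish the achievability $\Ci{sum} \ge \overline{\Ci{sum}} - 17$ and the converse $\Ci{sum} \le \overline{\Ci{sum}} + 7$ separately, using the linear deterministic model as the combinatorial scaffold for both directions. The LDM analysis carried out in Section~\ref{sec:sym_LDM} replaces $\log(1+\SNR)$, $\log(1+\INR)$, $\log(1+\CNR)$ by integer bit-level counts and exposes which genie leakages and which message splits are tight; the Gaussian constants $17$ and $7$ then arise from the standard quantization losses of order one bit per degree of freedom when lifting the LDM scheme back to the Gaussian setting.

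For the achievability, I would exhibit a scheme with a deterministic schedule that spends fraction $2/(2+\delta)$ of the block in mode $\A$ and fraction $\delta/(2+\delta)$ in each of modes $\B$ and $\C$, for a parameter $\delta\ge 0$ to be optimized at the end. During each listening mode the active source simultaneously transmits to its own destination and broadcasts to the silent source and to the unintended destination, effectively creating two rate-limited virtual bit-pipes of approximate capacities $\log(1+\CNR)$ (source-to-source) and $\log(1+\INR)$ (source-to-non-intended destination). Mode $\A$ then operates on the interference channel augmented by the information carried over these pipes, using an extended Han-Kobayashi construction with four message layers: public, private, cooperative-private (shared between the two sources via the $\CNR$ pipe and transmitted by coherent beamforming on the intended direct links), and pre-shared public (shared between a source and the unintended destination via the $\INR$ pipe so that the destination can subtract its contribution). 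A joint-typicality analysis followed by Fourier-Motzkin elimination on the resulting virtual interference channel yields an achievable sum-rate of at least $\overline{\Ci{sum}}(\delta)$ minus a constant; maximizing over $\delta$ concludes this direction.

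For the converse, I would prove four sum-rate upper bounds, one for each $u_i$. Let $\lambda_\A,\lambda_\B,\lambda_\C$ denote the time fractions spent in each mode; by symmetry the effective parameter collapses to $\delta=(\lambda_\B+\lambda_\C)/\lambda_\A$. Bound $u_1$ comes from a cut separating both sources from both destinations: in mode $\A$ the two-user MAC into a single destination is bounded by $\log(1+\SNR)$ per user because the sources cannot form a coherent beam within a single symbol, whereas in modes $\B$ and $\C$ the single active transmitter together with the cooperation link delivers $\log(1+\SNR+\CNR)$. Bounds $u_2$ and $u_3$ are Etkin-Tse-Wang-style genie arguments in which one destination is handed either the other source's signal or its private component, sharpened by the observation that during a listening mode each silent source already has a noisy copy of the other's transmission for free. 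Bound $u_4$ is a two-receiver cut in which both $Y_3^L$ and $Y_4^L$ are retained and a joint MIMO MAC is invoked, whose determinant $\det(I+HH^\ast)$ expands to produce the phase-sensitive term $-2\,\SNR\,\INR\,\cos\theta$.

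The hardest step is $u_4$: the $\cos\theta$ dependence means the bound cannot be derived from a phase-oblivious genie, and one must carefully track the joint covariance of the two received signals across modes and match it to the MIMO determinant before interpolating over the mode fractions. A secondary nuisance is that the schedule $\varphi(t)$ can itself be data-dependent, since the model permits the mode indicator to be a causal function of the past received signals; I plan to dominate it by its expected fraction via a standard time-sharing convexity argument and absorb the resulting slack, together with the bit-per-dimension quantization losses incurred in each of the four bounds, into the constants $17$ and $7$.
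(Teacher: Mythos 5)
Your overall architecture is the one the paper uses — LDM as scaffold, modes $\B$, $\C$ realizing noiseless bit-pipes, a Han--Kobayashi-style virtual channel with public/private/cooperative-private/pre-shared-public layers, Fourier--Motzkin, and four sum-rate outer bounds parameterized by $\delta$. However, there are two genuine problems in the way you've laid it out.

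First, on achievability, you cannot obtain $\Ci{sum}\ge\overline{\Ci{sum}}-17$ by ``lifting the LDM scheme back to the Gaussian setting'' plus a uniform per-dimension quantization loss. The bound $u_4$ carries the phase-dependent term $-2\,\SNR\,\INR\cos\theta$, and in the near-aligned regime $\SNR\approx\INR$ the LDM sets $n_D=n_I$ and sees a rank-one, degenerate channel: the LDM sum capacity collapses to $n_D$ there, while the Gaussian $u_4$ can be nearly $\tfrac{2(1+\delta)}{2+\delta}\log(1+\SNR)$ when $\cos\theta\approx 0$. An LDM-mimicking scheme cannot close this gap. The paper handles this by introducing an intermediate target that keeps $u_4$ in Gaussian form (rather than replacing it by an integer $u_4'$), and then treats $\SNR\approx\INR$, $\CNR>\SNR$ as a separate region in which the cooperative-private zero-forcing beamforming variance is computed directly from the invertible $2\times 2$ complex channel matrix; the effective beamforming SNR $\beta_1\SNR$ with $\beta_1=\tfrac{\SNR^2+\INR^2-2\SNR\INR\cos\theta}{\SNR(\SNR+\INR)}$ is then the quantity that has to be matched against $u_4$. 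Your proposal has no mechanism to produce this term, so it would only prove $\Ci{sum}\ge\min(u_1,u_2,u_3,u_4')-\text{const}$, which is strictly weaker than the claimed achievability whenever the channel is nearly aligned but well-conditioned.

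Second, on the converse, you have assigned the wrong cut to $u_1$. The cut separating both sources from both destinations is the MIMO bound $I(X_1,X_2;Y_3,Y_4\mid Q)$, and it is exactly what produces $u_4$ with its determinant and $\cos\theta$ term — you have in effect described the same cut twice. The bound $u_1$ instead comes from summing the two single-source cutset bounds $I(X_1;Y_2,Y_3\mid X_2,Q)$ and $I(X_2;Y_1,Y_4\mid X_1,Q)$: the conditioning on the other user's input is what caps the mode-$\A$ contribution at $\log(1+\SNR)$ per user. Your stated reason (``the sources cannot form a coherent beam within a single symbol'') is not the mechanism — with cooperative-private messages they can form a coherent beam; it is the conditioning in the genie cut that nullifies the beamforming gain, not any physical impossibility. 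As a minor slip, your time-fractions are also reversed: with $\delta=|\varphi^{-1}(\A)|/|\varphi^{-1}(\B)|$ the fraction in mode $\A$ is $\delta/(2+\delta)$ and the fraction in each of $\B$, $\C$ is $1/(2+\delta)$, consistent with the $\delta$-weight multiplying the mode-$\A$ terms in all four $u_i$.
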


In the coding scheme, we consider a symmetric scheduling and the number of time slots spent in mode $\B$ and $\C$ are the same, i.e., $|\varphi^{-1}(B)| = |\varphi^{-1}(\C)|$ where by definition $\varphi^{-1}(\B)$ is the set of time slots scheduled for mode $\B$ and $|S|$ denotes the cardinality set $S$. We define the scheduling parameter $\delta = \frac{|\varphi^{-1}(\A)|}{|\varphi^{-1}(B)|}$, which is also the optimization parameter in the above theorem.

\begin{figure*}[!t]
\centering{\subfloat{\includegraphics[width=0.5\textwidth]{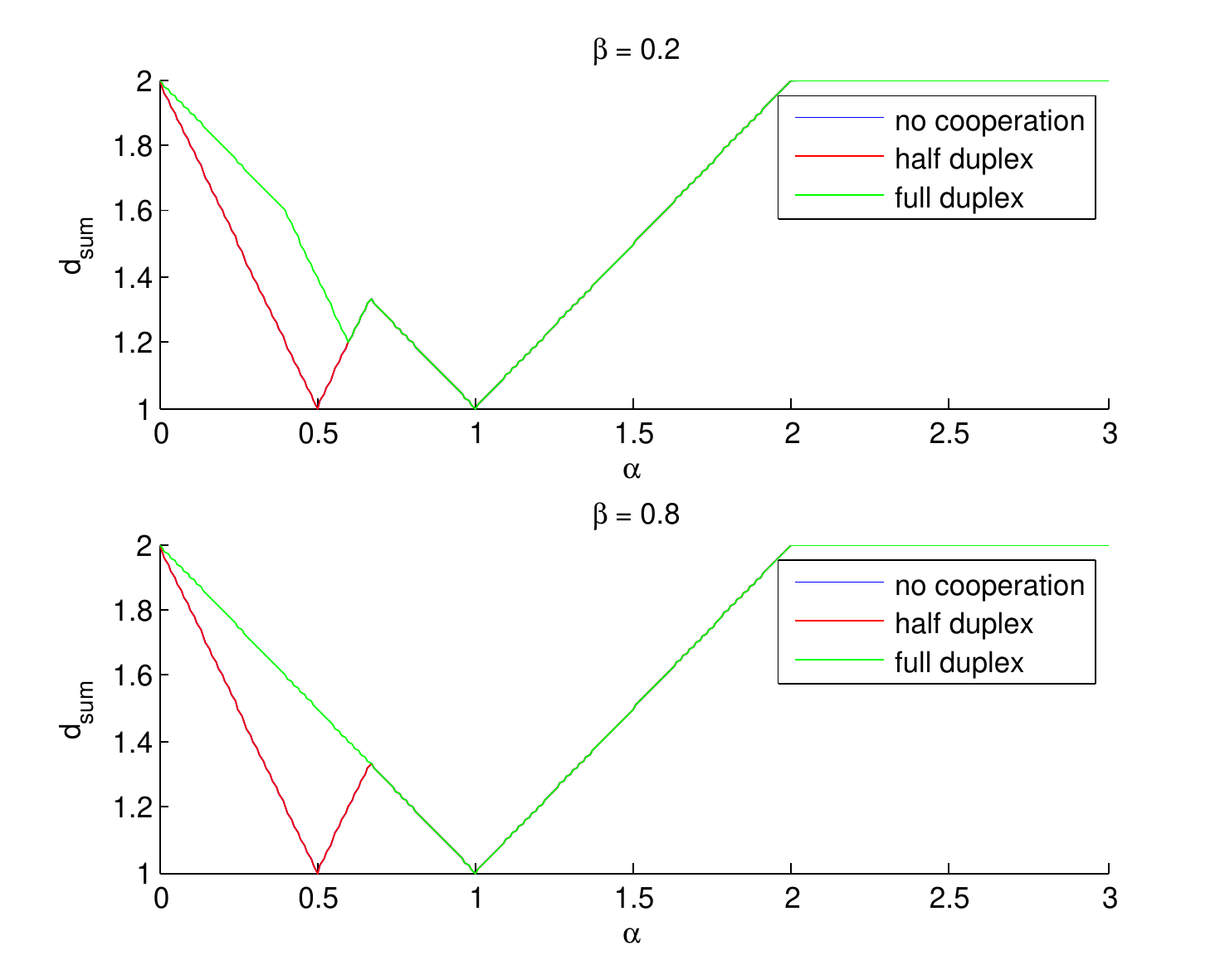}}\hfil \subfloat{\includegraphics[width=0.5\textwidth]{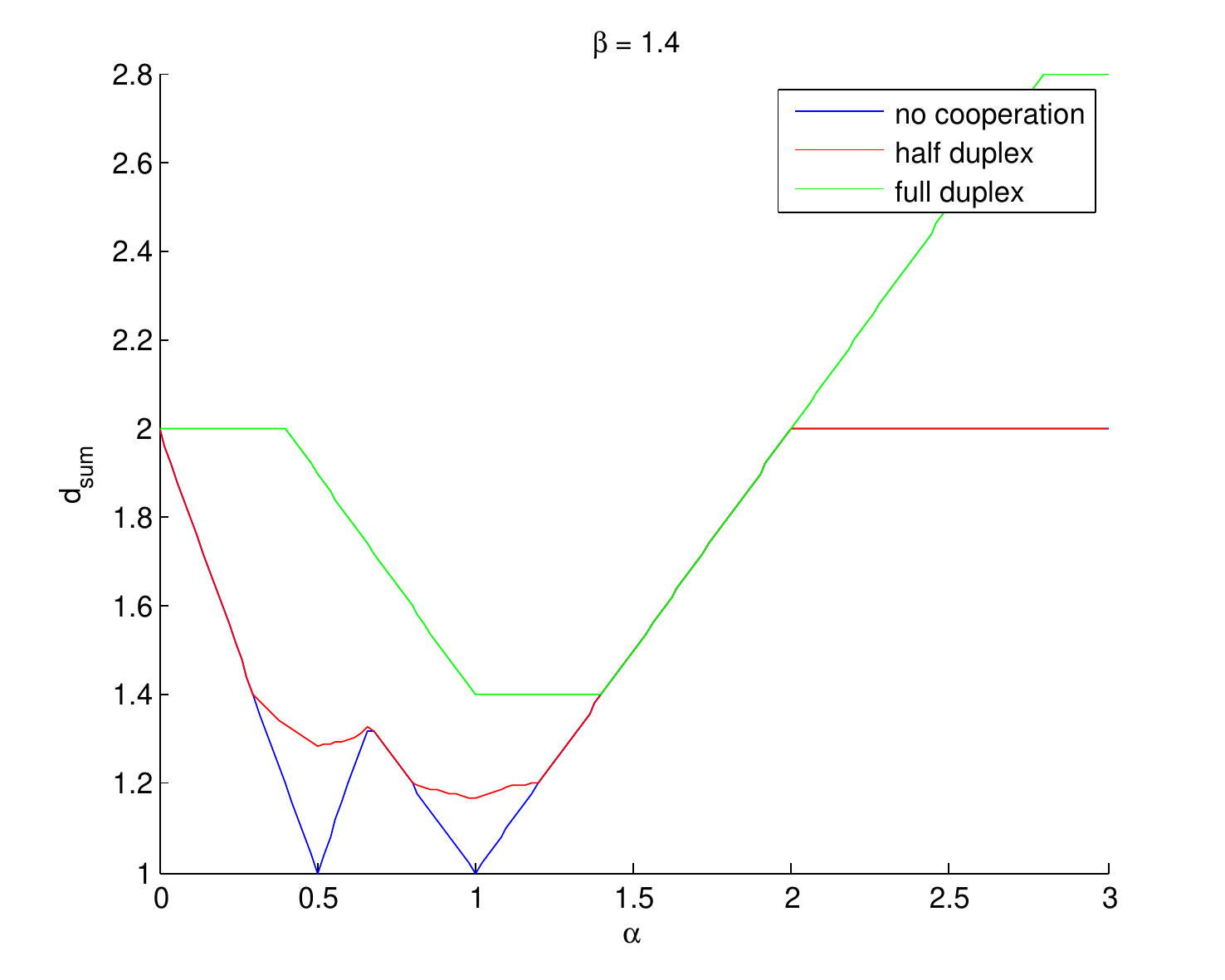}}\\
\subfloat{\includegraphics[width=0.5\textwidth]{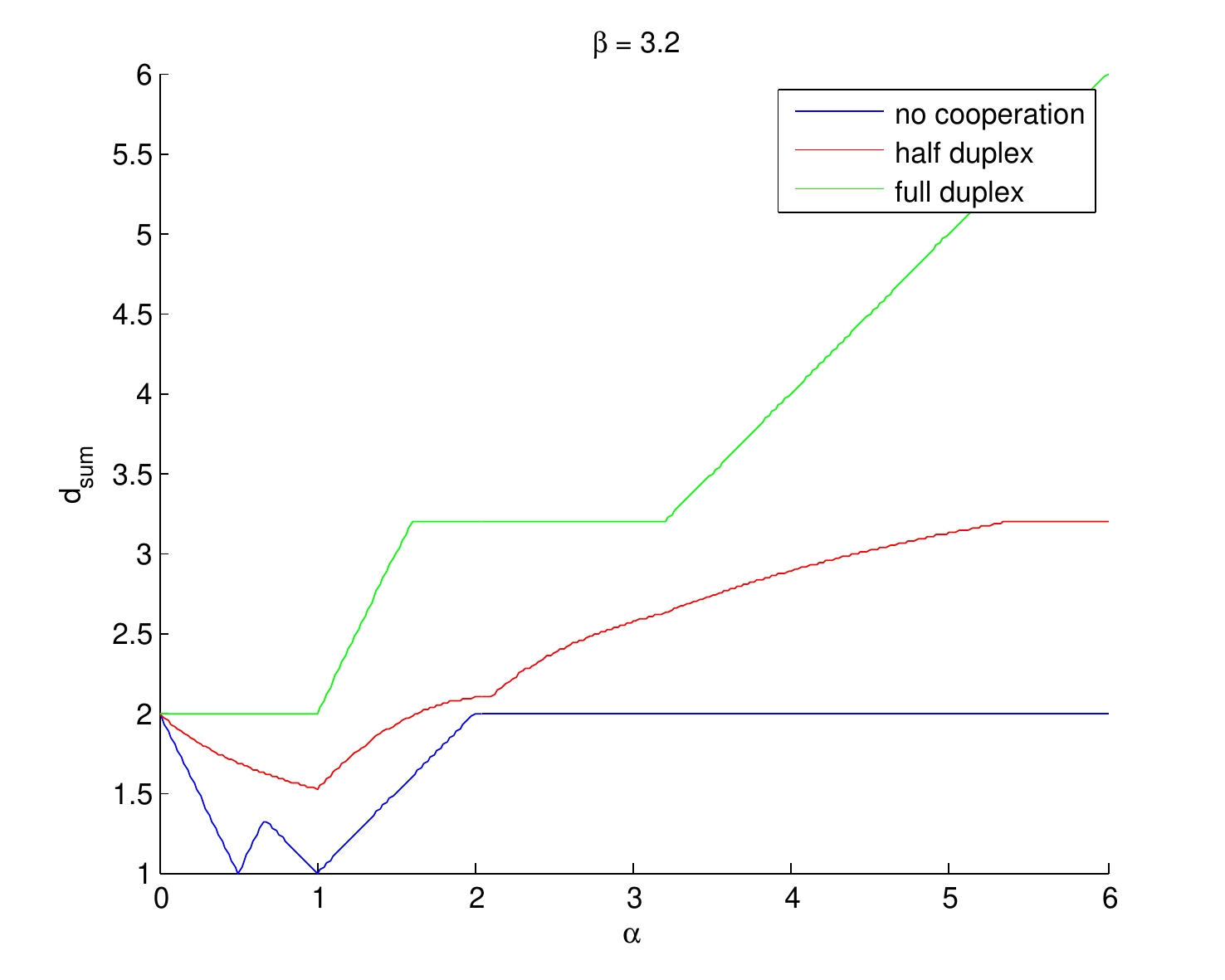}}\hfil \subfloat{\includegraphics[width=0.5\textwidth]{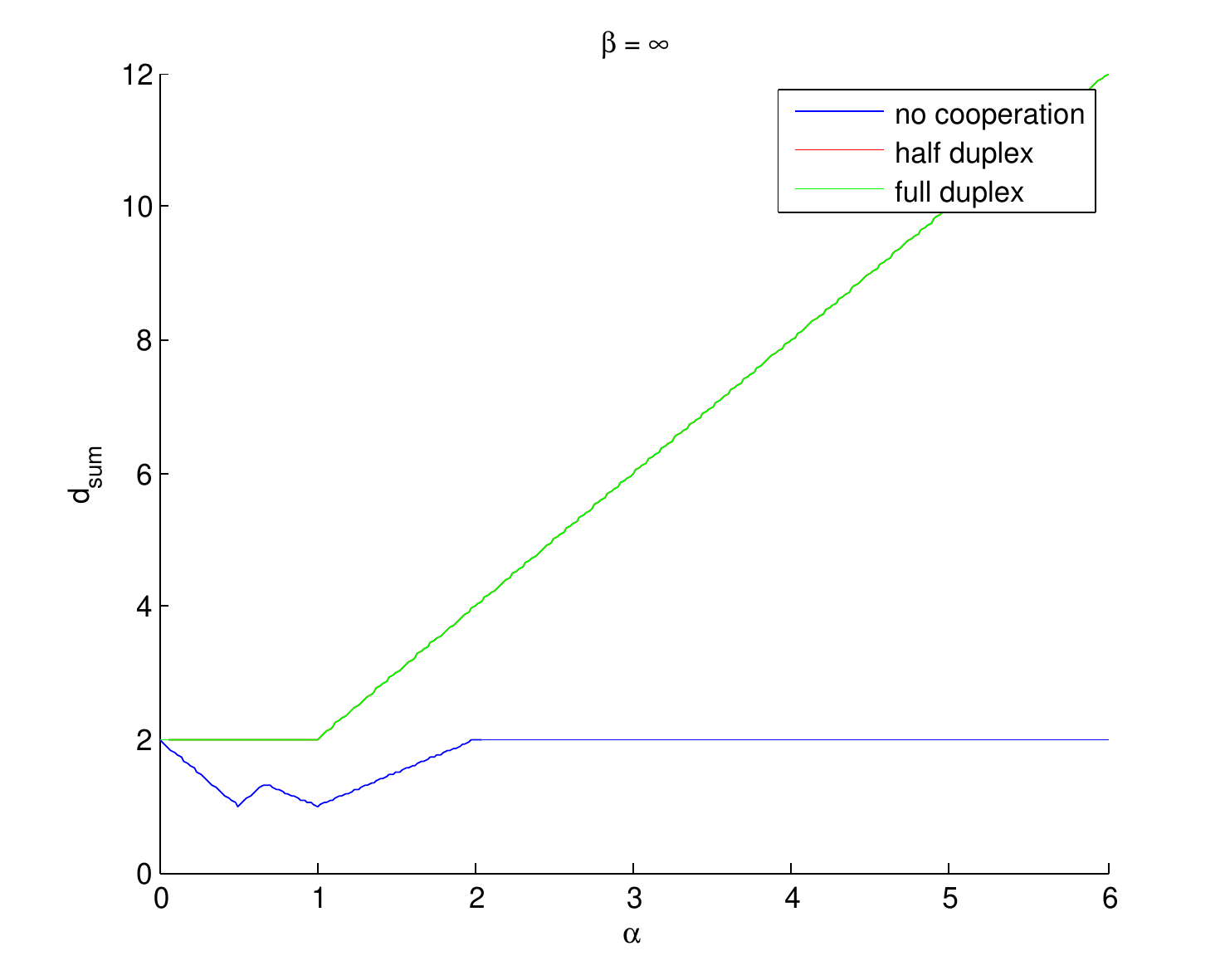}}}\\
\caption{Sum capacity of the interference channel with half-duplex
source cooperation.}\label{Fig:sum_performance}
\end{figure*}

To demonstrate the gains from cooperation, we plot the
generalized degree of freedom \cite{EtkinTseWang} of the sum capacity.
Here we use the natural generalization of the original definition given in
\cite{WT11}. Assume
\begin{align*}
  \lim_{\SNR\to \infty}\frac{\log \INR}{\log \SNR} = \alpha,
  \lim_{\SNR\to \infty}\frac{\log \CNR}{\log \SNR} = \beta.
\end{align*}
Then the generalized degree of freedom for fixed $\alpha, \beta$ is
\begin{align*}
  d_{sum}(\alpha, \beta) = \lim_{\begin{subarray}{c} \text{fix} (\alpha,
  \beta)\\\SNR\to\infty\end{subarray}}\frac{\Ci {sum}}{\log \SNR}.
\end{align*}

Note that $d_{sum}$ is well-defined for $\alpha\ne 1$. When $\alpha=1$,
$d_{sum}$ can take two different values, and we need to treat them
separately.
\begin{enumerate}
  \item $h_{13}h_{24}=h_{14}h_{23}$. Consider the cut-set bound with
  sources on one side and destinations on the other. The upper bound on the sum
  capacity of the interference channel reduces to the capacity of
  a degenerated multiple input multiple output (MIMO) point-to-point channel. As the degree of freedom of the latter channel
  is only 1, therefore we get $d_{sum} = 1$.
  \item $h_{13}h_{24}\ne h_{14}h_{23}$. For this setting, the
  channel is well-conditioned and $d_{sum}$ is a continuous function
  with respect to $\alpha$ at $\alpha=1$.
\end{enumerate}

In Figure~\ref{Fig:sum_performance}, we show plots of $d_{sum}$ against $\alpha$ for different $\beta's$
under the more interesting assumption $h_{13}h_{24}\ne h_{14}h_{23}$. We
also compare it with the result for full-duplex source cooperation \cite{PrabhakaranVi09S}. In
\cite{PrabhakaranVi09S}, the sources are allowed both to listen and transmit at the same
time instant. Under such full-duplex assumption, the channel has only one mode: both sources transmit
and listen. The resulted $d_{sum}$ is a piecewise linear function of $\alpha$. For our
half-duplex channe, however, we need to switch between three modes, and
the optimization over the scheduling parameter $\delta$ makes each piece a smoothed curve rather than a linear function.
From the plots, we first observe that half-duplex cooperation is helpful only when $\beta>1$,
while full-duplex cooperation is helpful for all $\beta>0$. When $\beta$ is large enough
(for example, $\beta=3.2$), the sum capacity of our channel can be strictly better
than that of the usual interference channel. Moreover, when $\beta = \infty$, the sources can
get to know both messages in negligible amount of time with either half-duplex or full-duplex
cooperation. Therefore the channel essentially become a broadcast channel with two antennas at the source, and
the channel with half-duplex source cooperation has the same sum capacity as the channel with full-duplex source cooperation.

\subsection{The Cognitive Case}
The following theorem characterizes the $R_0$-capacity of the cognitive channel within a constant.

\begin{thm}\label{thm:cog}
  Define $\overline{C_{R_0}} = \max_{\delta}\overline{C_{R_0}}(\delta)=
  \max_{\delta}\min(u_1, u_2, u_3, u_4)$, where
  \begin{align*}
    u_1= & \frac{1}{1+\delta}\log(1+\SNR_2)+1\\
    u_2= &\frac{1}{1+\delta}\Big[\log (1+2\SNR_2+2\INR_2)-\log(1+\SNR_1)+\delta\log(1+\frac{\INR_2+\CNR}{1+\SNR_1})\\&+\log(1+\frac{\SNR_1}{1+\INR_2})\Big]+2+R_0\\
    u_3= &\frac{1}{1+\delta}\left[\log(1+2\SNR_1+2\INR_1)- \log(1+\SNR_1)+\log (1+\frac{\SNR_2}{1+\INR_1})\right]+2+R_0\\
    u_4= &\frac{1}{1+\delta}\Big[\log(1+2\SNR_1+2\INR_1)-\log(1+\SNR_1)+\log(1+\frac{\SNR_1}{1+\INR_2})-\log(1+\SNR_1)\\
    &+\max(\log(1+\INR_2+\frac{2\SNR_2+\INR_2}{1+\INR_1}), \log(1+2\INR_2))+\delta
    \log(1+\frac{\INR_2+\CNR}{1+\SNR_1})\Big]\\&+3+2R_0.
  \end{align*}
  Then when $R_0\geq 7$, the $R_0$-capacity $C_{R_0}$ of the cognitive channel defined in
  section~\ref{sec:problem_cog}
  satisfies $\overline{C_{R_0}}-23-2R_0\leq C_{R_0}\leq \overline{C_{R_0}}$.
\end{thm}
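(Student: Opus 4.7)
My plan is to treat the two directions separately and to use the linear deterministic model (LDM) as a scaffold for both. The overall structure mirrors the symmetric case in Theorem~\ref{thm:sumrate}, but with two simplifications that I want to exploit: only modes $\A$ and $\B$ occur (so the single scheduling parameter $\delta=|\varphi^{-1}(\A)|/|\varphi^{-1}(\B)|$ suffices), and the primary's encoding is non-causal in $Y_1$ since $Y_{1,t}\equiv 0$. I would first derive the analogous statement in the LDM to see which of the four $u_i$ expressions comes from which physical cut, then lift the LDM scheme and LDM outer bound to the Gaussian model, absorbing all rounding/quantization errors into the additive $23+2R_0$ and $0$ gaps.

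For the \emph{upper bound} $C_{R_0}\le \overline{C_{R_0}}$, I would fix any achievable $(R_1,R_2)$ with $R_1\ge C_0-R_0$ and express the sum in terms of the fraction of time spent in each mode. The factor $1/(1+\delta)$ normalization in every $u_i$ reflects converting a per-codeword bound into a per-channel-use rate after marginalizing over the schedule. The single-letter bound $u_1$ is just the point-to-point capacity $\log(1+\SNR_2)$ weighted by the fraction of mode-$\A$ slots, plus one bit of slack coming from the half-duplex scheduling indeterminacy; this I get from a direct cut-set argument around destination~4. The bounds $u_2,u_3,u_4$ I would obtain as genie-aided cut-set bounds in the style of Etkin--Tse--Wang and Prabhakaran--Viswanath: hand destination~4 a side-information signal proportional to the cross link plus noise, expand $L(R_1+R_2)$ using Fano and the chain rule split across modes $\A$ and $\B$, and carefully subtract the term $L(R_1-(C_0-R_0))\ge 0$ whenever the bound would otherwise depend on $R_1$. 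Each subtraction is what introduces the $+2+R_0$ or $+3+2R_0$ constants in the $u_i$. The term $-\log(1+\SNR_1)$ that appears in $u_2,u_3,u_4$ is exactly the primary's link capacity being ``credited back'' after assuming the genie reveals $X_1^L$ or a function of it to destination~4. The $\max$ inside $u_4$ reflects the standard two cases in the interference converse (use $Y_3$ as genie vs.\ use $X_2$-free combination).

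For the \emph{lower bound} $\overline{C_{R_0}}-23-2R_0\le C_{R_0}$, I would realize the scheme described in the introduction: in the fraction of time corresponding to mode $\B$, source~1 transmits a superposition of (i) pure primary data for destination~3, (ii) a ``cooperative private'' layer to be decoded by source~2 for later beamforming, and (iii) a ``pre-shared public'' layer to be decoded by source~2 so that in mode~$\A$ the corresponding signal from source~2 can be used to pre-cancel at destination~3. In mode~$\A$, both sources transmit Han--Kobayashi-style messages augmented by the cooperative private and pre-shared public layers. This gives a virtual interference channel with rate-limited bit-pipes from source~1 to source~2 and from the sources to destination~3/4 whose capacities are, up to constants, $\delta^{-1}\log(1+\CNR/(1+\SNR_1))$ and $\delta^{-1}\log(1+\INR_2/(1+\SNR_1))$ (these are exactly the logarithmic terms that multiply $\delta$ inside $u_2$ and $u_4$). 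I would first verify in the LDM that this scheme meets each $u_i$ exactly, then do the Gaussian analysis: Gaussian codebooks with power split $(P,Q_{\text{coop}},Q_{\text{pub}})$ tuned to balance the three layers, apply joint typicality decoding, and use the standard $\log(1+\mathrm{SNR})$ $\mapsto$ $\lfloor\log\mathrm{SNR}\rfloor$ correspondence between LDM rates and Gaussian rates, losing a bounded number of bits per mode. The constraint $R_1\ge C_0-R_0$ is enforced by keeping the pure-primary layer at rate at least $C_0-R_0$; the additional slack of $R_0$ on each ``genie corner'' is what makes the $+2R_0$ gap appear.

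The main technical obstacle I anticipate is the converse, specifically showing that the back-off $R_1\ge C_0-R_0$ can be exchanged into the sum-rate inequalities cleanly. The hypothesis $R_0\ge 7$ is there precisely so that after the $R_1$ subtraction, each $u_i$ is a legitimate upper bound and not a vacuous one, and I would need to check that the Fano slack terms $L\epsilon_L$ and the $\log\det$ rearrangements in the MIMO-style cut (the well-conditioned case analogous to $h_{13}h_{24}\ne h_{14}h_{23}$) stay uniformly bounded in $\SNR_i,\INR_i,\CNR$. A secondary nuisance is that the scheduling function $\varphi$ need not be deterministic across the block; I would handle this by conditioning on the random schedule as in the symmetric proof, paying at most $1$ bit per source for the schedule description, which is already absorbed in the constants. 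Once the converse is written at the LDM level, the Gaussian converse follows by replacing each LDM entropy bound by its Gaussian analog and using the $\log(1+\cdot)$ vs.\ $\log(\cdot)$ conversion; all the detail-heavy algebra I would relegate to the appendix as the authors indicate.
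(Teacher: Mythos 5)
Your overall architecture (LDM scaffold, superposition coding for achievability, genie-aided converse, LDM-to-Gaussian rounding losses, Fano plus a chain-rule split over the schedule) matches the paper's proof, which splits into Theorem~\ref{thm:cogachievability} (Appendix~C) and Theorem~\ref{thm:cog_outerbound}/Lemma~\ref{lem:cog_outerbound} (Appendices~D--E). But three of your specific claims are wrong or miss the central technical idea. First, the paper's cognitive achievability scheme explicitly sets $\bp 14 = \bp 23 = \rly 123 = \rly 214 = 0$ and uses neither the pre-shared public message nor relaying; the only bit-pipe realized is $\bp 12$, carrying the cooperative-private $V_1$ (and $V_2=0$). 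Your item~(iii) misdescribes the mechanism: in the general scheme of Section~\ref{sec:achievability}, the pre-shared public $V_i'$ is shared by source~$i$ directly with the \emph{interfered} destination, not decoded by the other source, and it is absent here. Consequently the quantity $\log(1+\frac{\INR_2+\CNR}{1+\SNR_1})$ is not a pair of bit-pipe capacities on the achievability side at all; it arises only in the converse, as the residual capacity of the mode-$\B$ MISO cut $I(X_1;Y_2,Y_3,Y_4\mid S_2=0)$ after crediting back $\log(1+\SNR_1)$.

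Second, you place the hypothesis $R_0\ge 7$ on the converse, but the paper needs it only for achievability: $C_{R_0}\le\overline{C_{R_0}}$ holds unconditionally (evaluate the region of Lemma~\ref{lem:cog_outerbound} at $R_1=\log(1+\SNR_1)-R_0$), whereas $R_0\ge 7$ guarantees the superposition scheme of Appendix~C can still support a primary rate of at least $\log(1+\SNR_1)-R_0$ after power is diverted to the cooperative-private layer and Gaussian-vs.-LDM constants are lost. Third --- the gap with the most technical content --- ``genie-aided cut-set bounds in the style of Etkin--Tse--Wang'' does not suffice for a channel with generalized feedback at the secondary. The converse is built on the Telatar--Tse injective semi-deterministic decomposition $p(u_1,u_2,y_2,y_3,y_4\mid x_1,x_2)$; the $R_1+R_2$ bounds are imported from Tuninetti's conferencing outer bound; and a \emph{new} $2R_1+R_2$ inequality \eqref{eq:cog-gen-2R1plusR2}, whose derivation conditions on both $U_2$ and the secondary's feedback $Y_2$, must be proved to generate $u_4$ and the $+2R_0$ gap. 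Without this inequality your sketch only produces $u_1,u_2,u_3$ and the characterization fails in the regime where $u_4$ is the active constraint.
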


In the coding scheme, we define the scheduling parameter $\delta$ to be the ratio of the number of time slots spent in mode $\B$ and $\A$, i.e., $\delta = \frac{|\varphi^{-1}(\B)|}{|\varphi^{-1}(\A)|}$. We note that this definition of $\delta$ is a little bit different from the one for the
symmetric case, as it is now proportional to $\varphi(B)$, which is more convenient for presenting the result.

To demonstrate the gains from cooperation, we plot the
generalized degree of freedom \cite{EtkinTseWang} of the $R_0$-capacity.
To be consistent in notations with later sections, we consider a reference $\SNR$ that goes to infinity, and assume
\begin{align*}
  &\lim_{\SNR\to \infty}\frac{\log \SNR_1}{\log \SNR} = n_1,
  \lim_{\SNR\to \infty}\frac{\log \SNR_2}{\log \SNR} = n_2,\\
  &\lim_{\SNR\to \infty}\frac{\log \INR_1}{\log \SNR} = \alpha_1,
  \lim_{\SNR\to \infty}\frac{\log \INR_2}{\log \SNR} = \alpha_2,\\
  &\lim_{\SNR\to \infty}\frac{\log \CNR}{\log \SNR} = \beta.
\end{align*}
For the discussion with generalized degree of freedom in this section, we simply take $\SNR = \SNR_1$ thus $n_1=1$ . Then the generalized degree of freedom for given $n_2, \alpha_1, \alpha_2, \beta$ is
\begin{align*}
  d_{\text{\sf cog}}(n_2, \alpha_1, \alpha_2, \beta) = \lim_{\SNR\to \infty}\frac{C_{R_0}}{\log
  \SNR}.
\end{align*}
Unlike the symmetric case, this limit always exists, i.e., $d_{\text{\sf cog}}$ is always well-defined.
In particular, when $|h_{13}||h_{24}| = |h_{23}||h_{14}|$, $d_{\text{\sf cog}}$ is the same as that of
an interference channel without cooperation, which is essentially saying that cooperation
is not quite helpful even when the absolute value of the channel gains are aligned. Phases do
not matter here. Moreover, $d_{\text{\sf cog}}$ is continuous when the channel gains are close to being aligned.
Figure~\ref{Fig:cog_performance} shows two typical plots of $d_{\text{\sf cog}}$ against $\alpha_1$ for various
$\beta$ while $n_2, \alpha_2$ are held fixed.

In our model, $\beta=0$ corresponds to an interference channel without
cooperation. The above plot shows that when $\beta\leq \alpha_2\vee n_1$, where $x\vee y = \max(x, y)$, the
generalized degree of freedom is the same as that of $\beta = 0$. Hence, cooperation is
not very helpful unless it is above the threshold. This behavior is the same as what happens to the symmetric
channel case. On the other hand, when $\beta= \infty$, the cooperation link is so strong
that the secondary can decode the primary's message in a negligible amount of time. This
case is equivalent to the cognitive radio channel model in \cite{JovicicViswanath09}, where
the secondary is assumed to know both messages. One other interesting thing to notice is that
when $n_2\leq \alpha_1\leq n_1$, $d_{\text{\sf cog}}$ is always 0, even with infinite cooperation.
This is because in this region, what destination 4 gets from source 2 is only a noisy version of
what destination 3 gets from source 2, which implies that destination $3$ can further decode $W_2$ after decoding $W_1$.
Since we require the primary to achieve a rate near its link capacity, the rate allowed for $W_2$ must be at most a constant in the high-SNR region.

\begin{figure*}[!t]
\centering{\subfloat{\includegraphics[width=0.5\textwidth]{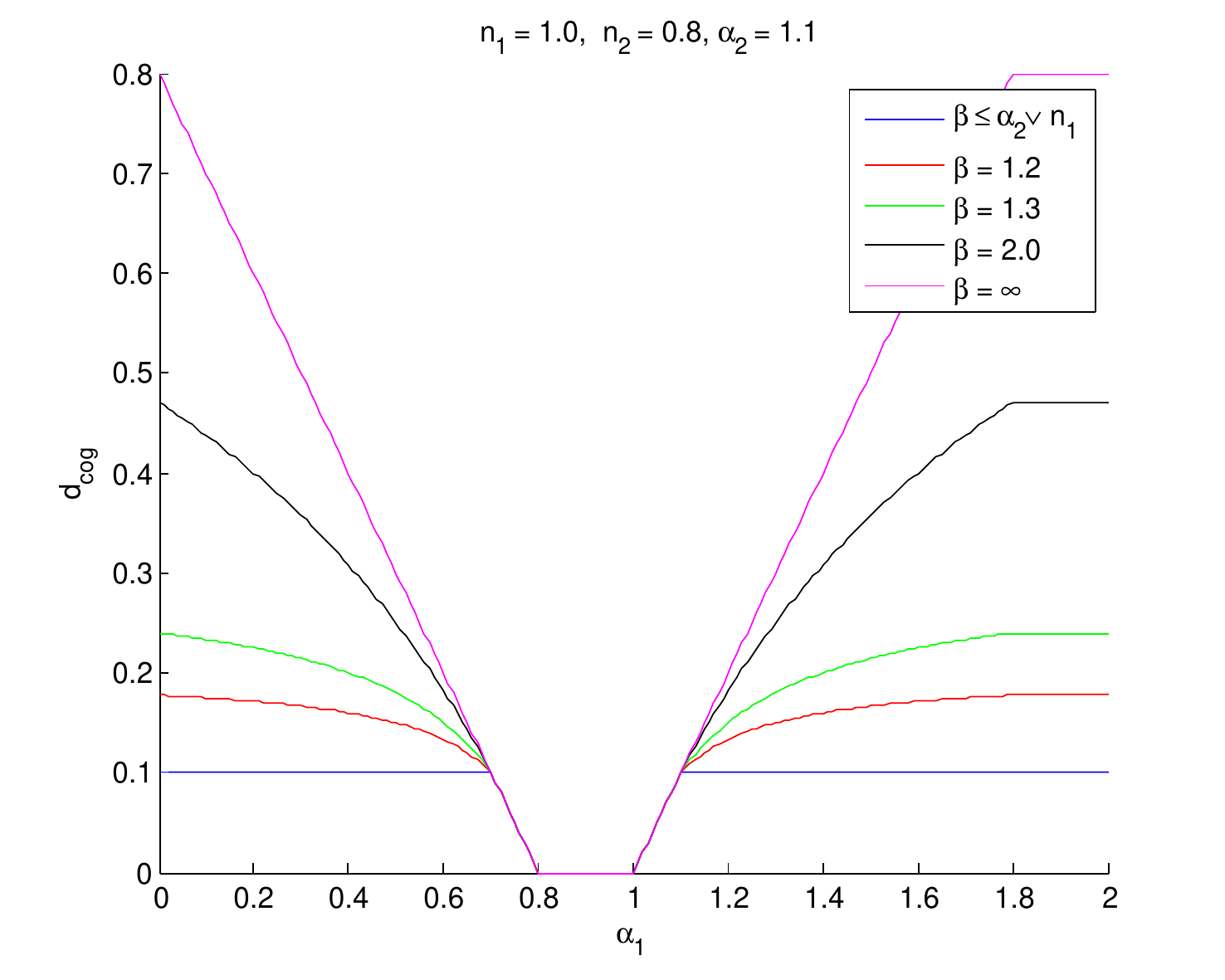}}\hfil \subfloat{\includegraphics[width=0.5\textwidth]{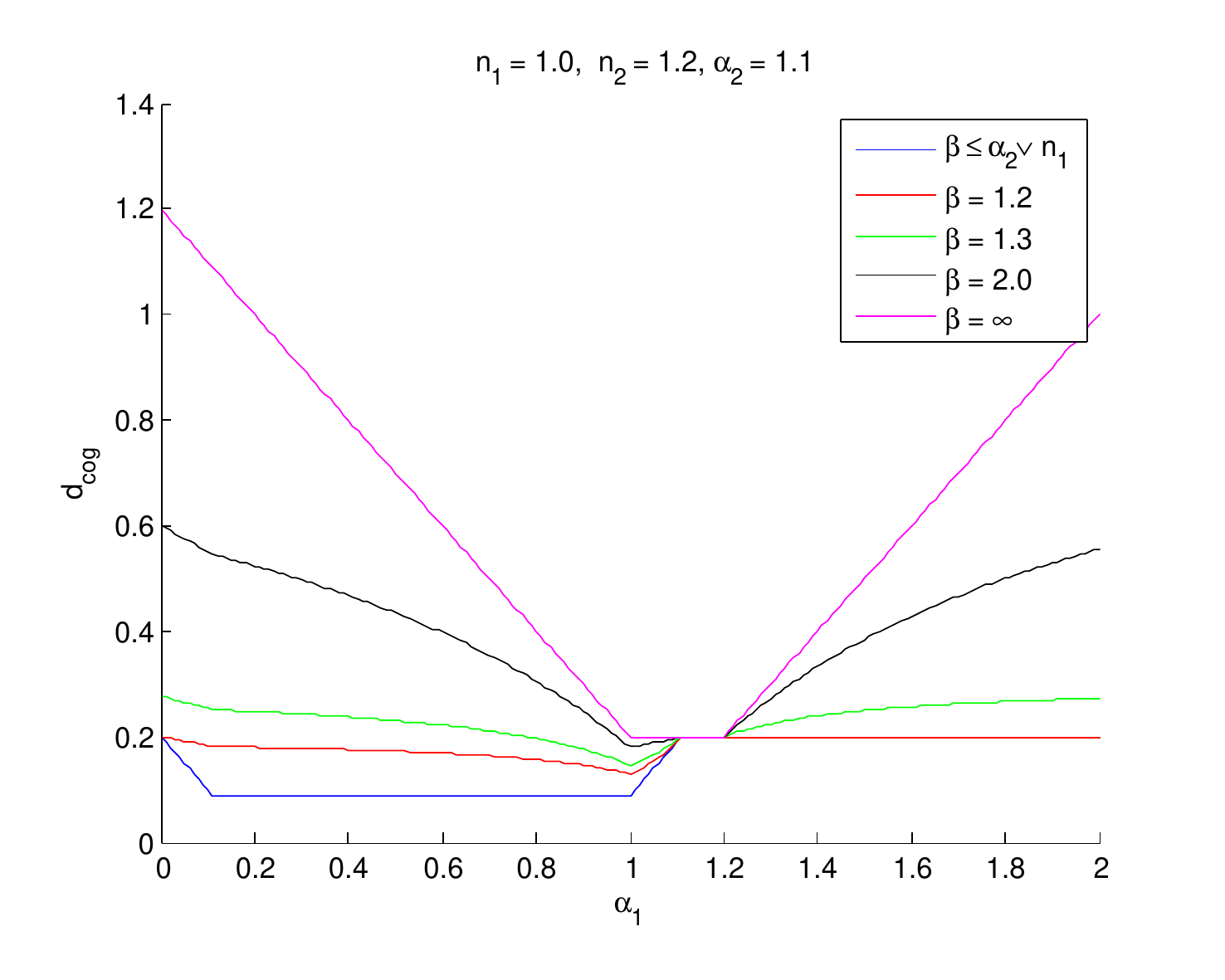}}}\\
\caption{Cognitive capacity of the interference channel with half-duplex
source cooperation.}\label{Fig:cog_performance}
\end{figure*}

\section{Achievability}\label{sec:achievability}
Our coding scheme turns the two-user interference channel of mode $\A$
(Fig.~\ref{Fig:hd_channel_A}) into a virtual two-user interference channel
(Fig.~\ref{Fig:IFconf}), with rate-limited (noiseless) bit-pipes between the
two sources and from each source to the destination node where it causes
interference.  The bit-pipes are realized by operating in modes $\B$ and
$\C$ (Fig.\ref{Fig:hd_channel_B} and \ref{Fig:hd_channel_C}) where only one
of the source nodes transmits while the other receives. In these modes, in addition to sending data to its own destination, the
transmitting source sends messages to the other nodes as well to establish the noiseless links.
In this section, we first describe a coding scheme
and characterize an achievable rate region for the virtual channel. Then we
will use this characterization to obtain an achievable rate region for the
two-user interference channel with half-duplex source cooperation. We note that it is possible to obtain an achievable scheme using strategies in \cite{PrabhakaranVi09S, YangTuninetti11},
However, we do not pursue this route in this paper, as it is as complicated specializing known schemes as describing our coding scheme.

\subsection{Interference Channel with Bit-pipes}

We denote the {\em virtual channel} in Fig.~\ref{Fig:IFconf} by
$\IFconf(p_{Y_3,Y_4|X_1,X_2},\bp 12,\bp 21,\bp 14,\bp 23)$, where $\bp ij$
are the rates of the bit-pipes between node $i\in\{1,2\}$ and node
$j\in\{1,2,3,4\}$.
The virtual channel is converted from mode $\A$, therefore it lasts for the duration of mode $\A$. With a little abuse of notation,
we assume the communication is over discrete time slots $t = 1, \dots, L$ in this subsection for simplicity.
For this new channel, we limit ourselves to
block-coding schemes of the following type:
\begin{enumerate}
  \item First, the sources send at most $L \bp ij$ bits over the bit-pipes,
where $L$ is the blocklength. These bits are functions only of the message
of the source sending the bits.
  \item Then, the sources transmit over the interference channel with each
of their channel inputs (of blocklength $L$) being functions of their
message and the bits exchanged in the first step. For the Gaussian channel,
these transmissions are required to satisfy average power
constraints of unity.
\end{enumerate}

\begin{figure}
\centering\scalebox{0.5}{\input{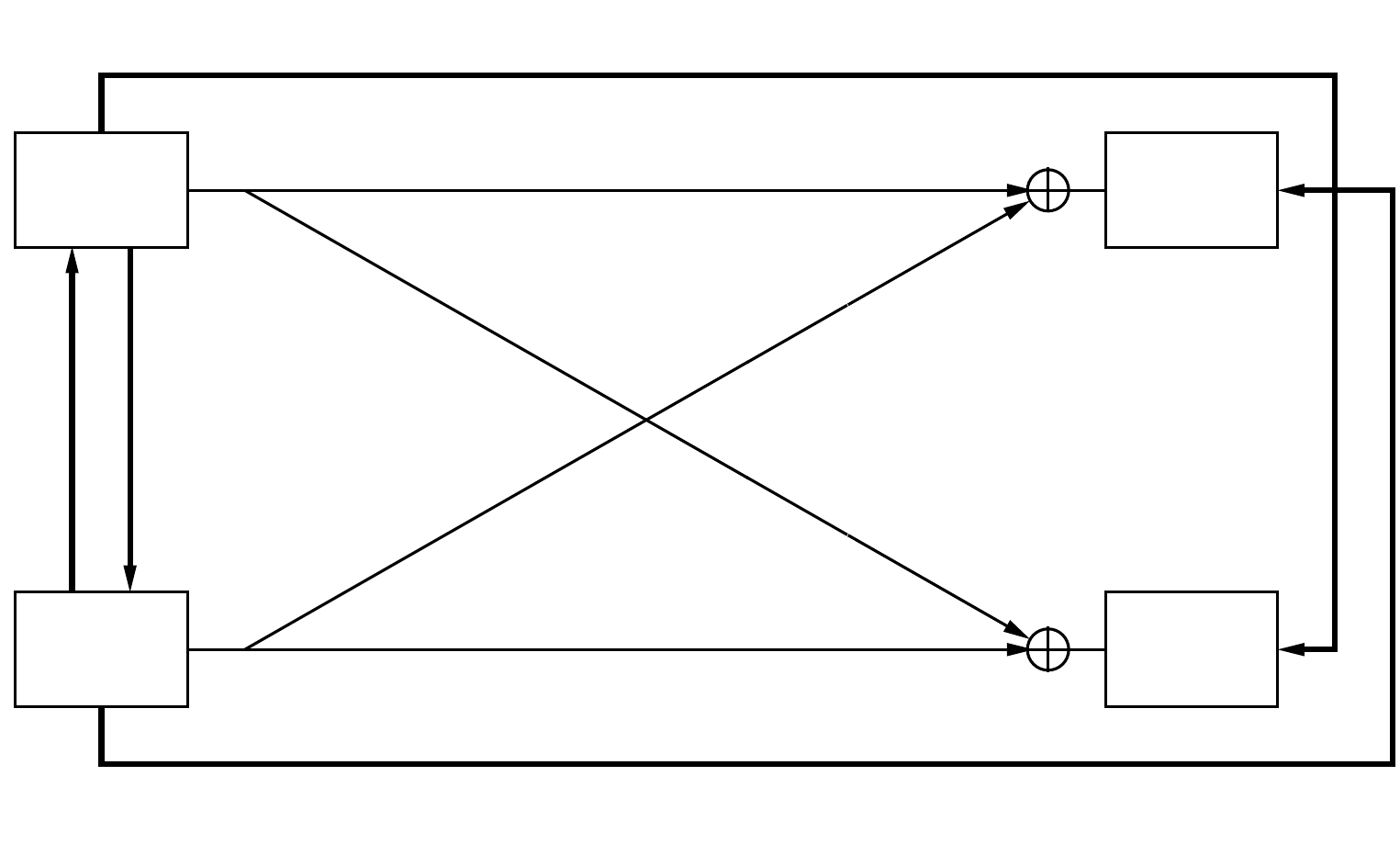_t}}
\caption{Interference channel with bit-pipes. The rate-limited bit-pipes (shown in bold) run between the two sources and from each
source to the destination node where it causes interference.}
\label{Fig:IFconf}
\end{figure}

A rate pair $(R_1,R_2)$ is defined to be {\em achievable} for this channel
along the same lines as in Section~\ref{sec:problem}.  In the rest of the section,
we first discuss an achievable region ${\mathcal R}_\text{\sf virtual}(\bp
12,\bp 21, \bp 14, \bp 23)$ for the virtual channel\footnote{We drop the
channel $p_{Y_3,Y_4|X_1,X_2}$ from the notation since the channel will be
clear from the context.}. Then using this result, an achievable region for
the half-duplex channel will be presented.

Our coding scheme for this virtual channel is a
generalization of the superposition coding scheme given by Han and Kobayashi
for interference channels.
The scheme of Han and
Kobayashi in this context may be interpreted as follows. Each source
node transmits its information in two parts:
\begin{itemize}
\item {\em public message} is decoded by both destinations (even though it
is meant for only one of the destinations),
\item {\em private message} is decoded only by one of the destinations, the
one to which it is intended.
\end{itemize}
Our scheme also uses superposition coding and involves two additional parts
each of which takes advantage of one of the two types of bit-pipes available.
\begin{enumerate}
\item {\em cooperative private message.} These messages are shared in
advance between the sources over the bit-pipes between them. The
messages are then sent out cooperatively by the two sources. But
they are only decoded by the intended destination. Below, we will use
superposition coding and beamforming for transmitting these messages.
\item {\em pre-shared public message.} Each source shares this type of
message with the unintended destination in advance over the bit-pipes to
that destination. This ensures that when it appears as interference in the
transmission over the interference channel, the destination can treat it as
known interference while decoding.
\end{enumerate}
In slightly greater detail, our coding scheme is as follows: We fix
the input distribution
\begin{align*}
  &p(x_{V_1'}, x_{V_2'}, x_{W_1}, x_{W_2}, x_{U_1}, x_{U_2}, v_1, v_2, x_{V_1}, x_{V_2}, x_1, x_2)\\
  = &p(x_{V_1'})p(x_{V_2'})p(x_{W_1}|x_{V_1'})p(x_{W_2}|x_{V_2'})p(x_{U_1}|x_{W_1}, x_{V_1'})p(x_{U_2}|x_{W_2}, x_{V_2'})\\
  & p(v_1)p(v_2)p(x_{V_1}, x_{V_2}|v_1, v_2)p(x_1|x_{U_1}, x_{V_1})p(x_2|x_{U_2}, x_{V_2}).
\end{align*}
\noindent{\em Codebook construction and encoding:} Source $i\in\{1,2\}$ divides its message into
four parts $m_i = (m_{W_i}, m_{U_i}, m_{V_i},m_{V_i'})$, where $W$ stands
for (noncooperative) {\em public}, $U$ for (noncooperative) {\em private},
$V$ for {\em cooperative private}, and $V'$ for {\em pre-shared public}.
First, $m_{V_i}$ is shared with the other source and $m_{V_i'}$ is shared
with the other destination over the bit-pipes. Superposition codewords are
then transmitted over the interference channel. A random codebook
construction for these codewords is as follows:
\begin{enumerate}
  \item At source $i\in\{1,2\}$, generate the pre-shared public codeword
$X_{V_i'}^L(m_{V_i'})$ independently according to distribution
$p(x_{V_i'}^L)=\prod_{t=1}^{L}p(x_{V_i', t})$, where
$m_{V_i'}\in\{1,2,\ldots,2^{L(R_{V_i'}-\epsilon)}\}$.
  \item At source $i$, for each $m_{V_i'}$, generate the public codeword
$X_{W_i}^L(m_{V_i'},m_{W_i})$ independently according to distribution
$p(x_{W_i}^L|x_{V_i'}^L(m_{V_i'}))=
\prod_{t=1}^{L}p(x_{W_i, t}|x_{V_i', t}(m_{V_i'}))$, where
$m_{W_i}\in\{1,2,\ldots,2^{L(R_{W_i}-\epsilon)}\}$.
  \item At source $i$, for each pair of $(m_{W_i},m_{V_i'})$,
generate the private codeword $X_{U_i}^L(m_{U_i},m_{W_i},m_{V_i'})$
according to distribution
$p(x_{U_i}^L|x_{W_i}^L(m_{W_i},m_{V_i'}),x_{V_i'}^L(m_{V_i'}))=
\prod_{t=1}^{L}p(x_{U_i, t}|x_{W_i, t}(m_{W_i},m_{V_i'}),x_{V_i', t}(m_{V_i'}))$, where $m_{U_i}\in\{1,2,\ldots,2^{L(R_{U_i}-\epsilon)}\}$.
  \item Generate, for $i\in\{1,2\}$, the {\em auxiliary} cooperative private
codewords $\Ve_i^L({m}_{V_i})$, according to distribution
$p_{\ve_i^L}=\prod_{t=1}^Lp(\ve_{i, t})$, where
$m_{V_i}\in\{1,2,\ldots,2^{L(R_{V_i}-\epsilon)}\}$. For every pair
$(m_{V_1},m_{V_2})$, define the cooperative private codewords
$(X_{V_1}^L, X_{V_2}^L)(m_{V_1}, m_{V_2})$ according to distribution
$$p(x_{V_1}^L, x_{V_2}^L|\ve_1^L(m_{V_1}), \ve_2^L(m_{V_2}))=
\prod_{t=1}^{L}p(x_{V_1, t}, x_{V_2, t}| \ve_{1, t}(m_{V_1}),
 \ve_{2, t}(m_{V_2})).$$
  \item At source $1$, generate the codewords to be transmitted $X_1^L(m_{W_1},m_{U_1},m_{V_1'},m_{V_1},m_{V_2})$ according to distribution
  \begin{align*}
    p(x_1^L|x_{U_1}^L(m_{U_1},m_{W_1},m_{V_1'}), x_{V_1}^L(m_{V_1},m_{V_2})) = \prod_{t = 1}^L p(x_{1, t}|x_{U_1, t}(m_{U_1},m_{W_1},m_{V_1'}), x_{V_1, t}(m_{V_1},m_{V_2})).
  \end{align*}
  At source $2$, generate $X_2^L(m_{W_2},m_{U_2},m_{V_2'},m_{V_2},m_{V_1})$ similarly.

\end{enumerate}

\noindent{\em Decoding:} Destination~3 looks for a unique
$(m_{W_1}, m_{U_1}, {m}_{V_1}, m_{V_1'})$ such that
\begin{align*}
  (Y_3^L, X_{V_1'}^L(m_{V_1'}), X_{W_1}^L(m_{W_1},m_{V_1'}),
X_{U_1}^L(m_{U_1},m_{W_1},m_{V_1'}), \Ve_1^L({m}_{V_1}),
X_{W_2}^L(\hat{m}_{W_2}),X_{V_2'}^L(m_{V_2'}))
\end{align*}
is jointly typical, for some $\hat{m}_{W_2}$. Note that $m_{V_2'}$ is
available to destination~3 via the bit-pipe from source~2.  Destination~4
uses the same decoding rule with index 1 and 2 exchanged.

\begin{thm}\label{thm:IFconf}
The rate pair $(R_{W_1}+R_{U_1}+R_{V_1}+R_{V_1'},
R_{W_2}+R_{U_2}+R_{V_2}+R_{V_2'})$ is achievable if
$R_{W_1},R_{W_2},R_{U_1},R_{U_2},R_{V_1},R_{V_2},R_{V_1'},R_{V_2'}$ are
non-negative reals which satisfy the following constraints.

Constraints at destination $3$: 
\begin{align*}
  R_{V_1'}&\leq  \bp 14\\
  R_{U_1}&\leq I(X_{U_1}; Y_3 | X_{W_1}, \Ve_1, X_{V_1'},
X_{W_2}, X_{V_2'})\\
  R_{W_1}+R_{U_1}&\leq I(X_{W_1}, X_{U_1}; Y_3 | \Ve_1, X_{V_1'},
X_{W_2}, X_{V_2'})\\
  R_{V_1'}+R_{W_1}+R_{U_1}&\leq I(X_{W_1}, X_{U_1}, X_{V_1'}; Y_3 | \Ve_1, X_{W_2}, X_{V_2'})\\
  R_{V_1}&\leq I(\Ve_1; Y_3 | X_{W_1}, X_{U_1}, X_{V_1'},
X_{W_2}, X_{V_2'})\\
  R_{V_1}+R_{U_1}&\leq I(X_{U_1}, \Ve_1; Y_3 | X_{W_1}, X_{V_1'},
X_{W_2}, X_{V_2'})\\
  R_{V_1}+R_{W_1}+R_{U_1}&\leq I(X_{W_1}, X_{U_1}, \Ve_1; Y_3 | X_{V_1'},
X_{W_2}, X_{V_2'})\\
  R_{V_1}+R_{V_1'}+R_{W_1}+R_{U_1}&\leq I(X_{W_1}, X_{U_1}, \Ve_1, X_{V_1'}; Y_3 |
X_{W_2}, X_{V_2'})\\
  R_{W_2}+R_{U_1}&\leq I(X_{W_2}, X_{U_1}; Y_3 | X_{W_1}, \Ve_1, X_{V_1'},
X_{V_2'})\\
  R_{W_2}+R_{W_1}+R_{U_1}&\leq I(X_{W_2}, X_{W_1}, X_{U_1}; Y_3 | \Ve_1, X_{V_1'},
X_{V_2'})\\
  R_{W_2}+R_{V_1'}+R_{W_1}+R_{U_1}&\leq I(X_{W_2}, X_{W_1}, X_{U_1}, X_{V_1'}; Y_3 |\Ve_1,
X_{V_2'})\\
  R_{W_2}+R_{V_1}&\leq I(X_{W_2}, \Ve_1; Y_3 | X_{W_1}, X_{U_1}, X_{V_1'},
X_{V_2'})\\
  R_{W_2}+R_{V_1}+R_{U_1}&\leq I(X_{W_2}, X_{U_1}, \Ve_1; Y_3 | X_{W_1}, X_{V_1'},
X_{V_2'})\\
  R_{W_2}+R_{V_1}+R_{W_1}+R_{U_1}&\leq I(X_{W_2}, X_{W_1}, X_{U_1}, \Ve_1; Y_3 | X_{V_1'},
X_{V_2'})\\
  R_{W_2}+R_{V_1}+R_{V_1'}+R_{W_1}+R_{U_1}&\leq
     I(X_{W_1}, X_{U_1}, X_{V_1'}, \Ve_1, X_{W_2}; Y_3 | X_{V_2'}).
\end{align*}

Constraints at destination $4$: Above with index $1, 2$ exchanged and index $3, 4$ exchanged.

Constraints at sources:
\begin{align*}
  R_{V_1}\leq \bp 12, \quad R_{V_2}\leq \bp 21.
\end{align*}
for some
\begin{align*}
  p( x_{W_1}, x_{U_1}, x_{V_1}, x_{V_1'}, x_{W_2}, x_{U_2}, x_{V_2},
x_{V_2'}, \ve_1, \ve_2)
&= p(x_{V_1'},x_{W_1},x_{U_1})p(x_{V_2'},x_{W_2},x_{U_2})
p(\ve_1)p(\ve_2)p(x_{V_1},x_{V_2}|\ve_1, \ve_2).
\end{align*}
For the Gaussian channel, the joint distribution must satisfy
\begin{align*}
\Var{X_{U_i}}+\Var{X_{V_i}} \leq 1,\qquad i\in\{1,2\}.
\end{align*}

\end{thm}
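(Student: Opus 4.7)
The plan is to establish the theorem by a standard random coding and joint typicality decoding argument tailored to the virtual channel. Encoding proceeds in two stages. In the first stage the sources use the noiseless bit-pipes to share $m_{V_i}$ with the other source (over $\bp 12$ and $\bp 21$) and $m_{V_i'}$ with the other destination (over $\bp 14$ and $\bp 23$); this immediately yields the four bit-pipe constraints in the statement, including $R_{V_1'}\leq \bp 14$ and $R_{V_1}\leq \bp 12$. In the second stage the sources transmit the superposition codewords $X_1^L,X_2^L$ over the interference channel, built from a random code ensemble realizing the factorization already written in the statement.

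The only conceptually new ingredient beyond standard Han--Kobayashi is the cooperative-private layer. Since $(X_{V_1}^L,X_{V_2}^L)$ must jointly depend on both $m_{V_1}$ and $m_{V_2}$ yet each destination decodes only the cooperative-private index of its own source, I would introduce the independent auxiliary codewords $\Ve_1^L(m_{V_1})$ and $\Ve_2^L(m_{V_2})$ and generate the transmit codewords conditionally via $p(x_{V_1},x_{V_2}\mid \ve_1,\ve_2)$. Destination~3 then decodes the auxiliary $\ve_1$ rather than $X_{V_1}$ directly, which is precisely why $\Ve_1$ and not $X_{V_1}$ appears in the mutual-information terms at destination~3. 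The pre-shared public $X_{V_1'}$ sits at the bottom of source~1's superposition stack, with $X_{W_1}$ above and $X_{U_1}$ on top.

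Error analysis at destination~3 is a union bound over the nonempty subset $S\subseteq\{W_1,U_1,V_1,V_1'\}$ of tuple indices decoded incorrectly, together with a binary choice for whether the existentially quantified $\hat m_{W_2}$ also differs from the true value; the index $m_{V_2'}$ is known for free through $\bp 23$. For each such $S$, the packing lemma applied to the appropriate marginal of the code ensemble (conditioned on the correctly-decoded indices) yields an inequality of the form $\sum_{s\in S}R_s\leq I(X_S;Y_3\mid X_{S^c})$, where $X_{V_1}$ is replaced by $\Ve_1$ whenever $V_1\in S$. Matching each resulting raw bound to one of the fifteen constraints listed at destination~3 (after absorbing the dominated ones as in the usual Han--Kobayashi reduction) and running the symmetric argument at destination~4 completes the achievability.

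The step I expect to be the main obstacle is the combinatorial bookkeeping: enumerating the error events, checking that the factorization produces the correct conditional independences so that the packing-lemma computation collapses to exactly the conditional mutual informations written in the statement, and verifying that each of these matches a listed constraint. The Gaussian specialization presents no new difficulty once the distributions are chosen: take $X_{V_i'},X_{W_i},X_{U_i}$ independent zero-mean Gaussians and $\ve_i$ independent Gaussian auxiliaries, with $X_{V_1},X_{V_2}$ formed as correlated Gaussians (via beamforming) given $(\ve_1,\ve_2)$; the unit average-power budget at source $i$ then reduces to $\Var{X_{U_i}}+\Var{X_{V_i}}\leq 1$ after allocating power to the public and pre-shared-public layers.
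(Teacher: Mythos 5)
Your proposal is correct and matches the paper's intended approach, which the authors omit as ``standard arguments for superposition coding'': random codebooks following the factorization and superposition order given in the construction preceding the theorem, the auxiliary $\Ve_i$ standing in for $X_{V_i}$ so that each destination decodes the cooperative-private index of its own source only, $m_{V_2'}$ treated as side information at destination~3, and a joint-typicality union bound whose error events are the feasible (up-closed in the superposition hierarchy $V_1' \prec W_1 \prec U_1$, times an independent choice for $V_1$, times the existential $\hat m_{W_2}$) subsets, reducing to exactly the fourteen decoding constraints listed once dominated events are absorbed.
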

We denote this rate region by ${\mathcal R}_\text{\sf virtual}(\bp 12, \bp
21, \bp 14, \bp 23)$.
\begin{proof} The proof is omitted since it follows from standard
arguments for superposition coding.\end{proof}

\subsection{Achievablity for Half-Duplex
Channel}\label{sec:ach.half-duplex}

Now we give a scheme for the original channel. The rate region will be
given in terms of ${\mathcal R}_\text{\sf virtual}$ in
Theorem~\ref{thm:IFconf}. Our coding scheme consists of a sequence of
blocks.  Each block is $\ceil{\d \A L} + \ceil{\d \B L} + \ceil{\d \C L}$
long $(\d \A, \d \B, \d \C\geq 0)$. Let us denote, $L_\A=\ceil{\d \A L}$,
$L_\B=\ceil{\d \B L}$ and $L_\C=\ceil{\d \C L}$.  In each block, the first
$1,2,\ldots,L_\B$ and $L_\B+1,L_\B+2,\ldots,L_\B+L_\C$, respectively are
operated in modes $\B$ and $\C$ respectively. The rest $L_\A$ long duration
is in mode $\A$. During mode~$\B$ and $\C$ of each block, we
will realize the bit-pipes of the virtual channel. This will allow us to
implement our coding scheme for the virtual channel during mode~$\A$. In {\em addition} to realizing the virtual channels, modes~$\B$ and $\C$ also involve communication of additional data directly to the intended destination as well as by relaying through the other source node as explained next.

Notice that in mode~$\B$ (resp. $\C$), we have a broadcast channel with source node 1 (resp. 2) as the sender and three receivers, namely, the two destinations nodes 3 \& 4 and the other souce node 2 (resp. 1). We describe mode~$\B$; mode~$C$ is symmetric.
In addition to realizing the bit-pipes of the virtual channel,
during mode~$\B$, the source node 1
\begin{enumerate}
\item[(i)] sends data to its own destination node 3, and
\item[(ii)] implements a simple block Markov decode-and-forward scheme in conjunction with source node 2 by
 (a) sending data to the other source node 2 to be {\em relayed} by source node 2 to
the intended destination node 3 in mode~$\C$ of the next block, and (b) relaying data
received from the other source node 2 during mode~$\C$ of the previous block to its
intended destination node 4.
\end{enumerate}

In mode $\B$, source node~1 uses superposition coding to send
messages to each of the other nodes. In particular, it sends at a rate of
$R_{1\B}$ to destination~3, at a rate $\frac{\d \A}{\d \B}\bp 12 + \rly
123$ to the other source (node~2) and at a rate of $\frac{\d \A}{\d \B}\bp
14 + \frac{\d \C}{\d \B} \rly 214$ to destination node~4. The transmissions
at rates $\frac{\d \A}{\d \B}\bp 12$ and $\frac{\d \A}{\d \B} \bp 14$ are
used to realize the bit-pipes originating from source node~1 to nodes~2 and
4, respectively in the virtual channel. Similarly, source node $2$ realizes the bit-pipes to the other nodes in mode~$\C$. With these
bit-pipes in place, the channel in the following mode~$\A$ is effectively
transformed into the virtual channel we described before. The transmission at rate $\rly 123$ is
meant to be relayed on by source node~2 to destination node~3 in the
following mode $\C$. And the transmission at rate $\frac{\d \C}{\d
\B} \rly 214$ is of the data node~1 received from source node~2 in mode
$\C$ of the previous block that is intended to be relayed to destination
node~4. Similarly, in mode $\C$, source node~2 sends using
superposition coding at rates $R_{2\C}$, $\frac{\d \A}{\d \C} \bp 21 + \rly
214$, and $\frac{\d \A}{\d \C} \bp 23 + \frac{\d \B}{\d \C} \rly 123$ to
nodes~4, 1, and 3, respectively. Note that in mode $\B$ for the first block, there is no relay data available for node~1 to relay to
node~4. But, by increasing the number of blocks, the resulting deficit in
rate can be made as small as desired.

For the degraded broadcast channel of mode $B$ (resp. $C$), we will use the natural ordering of users for
superposition coding-successive cancellation decoding, i.e., the strongest user's message is superposed on the codeword resulting from superposing the next stronger user's message on the weakest user's codeword. To denote all possibilities together, we adopt the
following notation. Let
\begin{align*}
\spR \B 3 &= R_{1\B}, & \spR \C 4 &= R_{2\C},\\
\spR \B 2 &= \frac{\d \A}{\d \B}\bp 12 + \rly 123, &
\spR \C 1 &= \frac{\d \A}{\d \C}\bp 21 + \rly 214,\\
\spR \B 4 &= \frac{\d \A}{\d \B}\bp 14 + \frac{\d \C}{\d \B}\rly 214,
\text{ and} &
\spR \C 3 &= \frac{\d \A}{\d \C}\bp 23 + \frac{\d \B}{\d \C}\rly 123.
\end{align*}
Then, by superposition coding, the above rates are achievable if there are
permutations $\phi^{\B}$ of $\{2,3,4\}$ and $\phi^{\C}$ of $\{1,3,4\}$, and
a joint distribution\\ $p(\spu \B 1)p(\spu \B 2)p(\spu
\B 3)p(x_1|\spu \B 1,\spu \B 2, \spu \B 3)p(\spu \C 1)p(\spu \C 2)p(\spu
\C 3)p(x_2|\spu \C 1,\spu \C 2, \spu \C 3)$, (which satisfies the condition
$\Var{X_1}$, $\Var{X_2} \leq 1$ for the Gaussian case) such that
\begin{align}
\sum_{j=1}^i \spR \B {\phi^{\B}(j)} \leq I(\spU \B 1,\ldots,\spU \B i;Y_{\phi^{\B}(i)}),
\qquad i\in\{1,2,3\}, \label{eq:spcond1}\\
\sum_{j=1}^i \spR \C {\phi^{\C}(j)} \leq I(\spU \C 1,\ldots,\spU \C i;Y_{\phi^{\C}(i)}),
\qquad i\in\{1,2,3\}. \label{eq:spcond2}
\end{align}
Note that, for a given channel, we will use only the permutations $\phi^{\B},\phi^{\C}$ corresponding to the natural ordering described above. Also, note that the $\tilde{U}^B$'s are auxiliary random variables corresponding to the messages superposition coded in mode~$\B$ (similary, $\tilde{U}^C$ for mode~$\C$).
Thus, we have proved the following theorem:
\begin{thm}\label{thm:half-duplex}
The rate pair $(R_1,R_2)$ is achievable for the half-duplex channel, where
\begin{align*}
R_1=\frac{\d \A R_{1\A} + \d \B R_{1\B} + \d \B \rly 123 }{\d \A+\d \B + \d \C},\\
R_2=\frac{\d \A R_{2\A} + \d \C R_{2\C} + \d \C \rly 214 }{\d \A+\d \B + \d \C},
\end{align*}
for parameters as defined in the above discussion such that
\eqref{eq:spcond1}-\eqref{eq:spcond2} hold and
\[ (R_{1\A},R_{2\A})\in {\mathcal R}_\text{\sf virtual}(\bp 12,\bp 21,\bp
14, \bp 23).\]
\end{thm}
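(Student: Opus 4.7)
The plan is to construct a block Markov scheme over $N$ consecutive blocks, each of length $L_\A + L_\B + L_\C = \lceil \d \A L \rceil + \lceil \d \B L \rceil + \lceil \d \C L \rceil$, in which the first $L_\B$ slots of every block operate in mode $\B$, the next $L_\C$ in mode $\C$, and the last $L_\A$ in mode $\A$. Each user's message is split into $N$ independent sub-messages, and each sub-message is then further partitioned into three streams: one of rate $R_{i\A}$ per channel use of mode $\A$ to be carried over the virtual channel, one of rate $R_{1\B}$ (resp.\ $R_{2\C}$) sent directly in the relevant broadcast mode, and one of rate $\rly 123$ (resp.\ $\rly 214$) that will be decode-and-forward relayed through the other source in the following block.

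In mode $\B$, source 1 faces a broadcast channel whose receivers are source 2, destination 3, and destination 4. The plan is to superposition-code three independent streams at rates $\spR \B 2, \spR \B 3, \spR \B 4$ using the auxiliary variables $\spU \B 1, \spU \B 2, \spU \B 3$ drawn from the specified product-form joint distribution. Hypothesis~\eqref{eq:spcond1}, applied under the natural ordering permutation $\phi^\B$, is precisely the classical simultaneous-decoding condition for the degraded broadcast channel, so every receiver recovers its designated stream reliably as $L \to \infty$. These three streams carry, respectively, (i) the $\tfrac{\d \A}{\d \B}\bp 12$ bit-pipe bits of the current block together with the relay packet $\rly 123$ to source 2, (ii) the direct $R_{1\B}$-rate message to destination 3, and (iii) the pre-shared $\tfrac{\d \A}{\d \B}\bp 14$ bits for the current block together with the $\tfrac{\d \C}{\d \B}\rly 214$ forwarded relay bits received by source 1 in mode $\C$ of the previous block, to destination 4. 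Mode $\C$ is handled symmetrically with source 2 broadcasting and hypothesis~\eqref{eq:spcond2}.

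Once modes $\B$ and $\C$ of the current block have populated all four bit-pipes $(\bp 12, \bp 21, \bp 14, \bp 23)$ and refreshed the relay streams, mode $\A$ reduces to a single instance of the virtual channel over $L_\A$ uses, and invoking Theorem~\ref{thm:IFconf} gives a coding scheme achieving any $(R_{1\A}, R_{2\A}) \in {\mathcal R}_\text{\sf virtual}(\bp 12, \bp 21, \bp 14, \bp 23)$ with vanishing probability of error. Over the $N$ blocks, each destination combines its decodings of (a) the mode-$\A$ virtual-channel messages, (b) the mode-$\B$ or mode-$\C$ direct messages, and (c) the relay packets forwarded by the other source one block later; a standard block Markov argument (either sliding-window or backward decoding) drives the overall error probability to zero by first sending $L \to \infty$ and then $N \to \infty$. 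The relay streams are delivered over only $N-1$ of the $N$ blocks, incurring a multiplicative rate factor $(N-1)/N$ that vanishes in the limit.

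Counting new bits decoded at destination 3 per block in steady state gives $L_\A R_{1\A} + L_\B R_{1\B} + L_\B \rly 123$, and dividing by the block length $L_\A + L_\B + L_\C$ yields the claimed $R_1$; the calculation for $R_2$ is symmetric. The only non-routine step is the bookkeeping that verifies the rates $\spR \B 2, \spR \B 4, \spR \C 1, \spR \C 3$ deliver \emph{exactly} the bits needed both to realize the virtual-channel bit-pipes during the subsequent $L_\A$ slots and to sustain the block Markov relay chain into the next block; this is immediate from their definitions. All other ingredients---superposition coding for degraded broadcast, the achievability result of Theorem~\ref{thm:IFconf}, and block Markov decode-and-forward---are standard, so this scheduling-and-splitting bookkeeping is the main (and mild) obstacle.
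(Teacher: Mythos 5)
Your proposal reproduces the paper's own argument: the paper has no separate formal proof of this theorem but rather states "Thus, we have proved the following theorem" after the discussion in Section~\ref{sec:ach.half-duplex}, which sets up exactly the block structure, the superposition-coded broadcast in modes $\B$ and $\C$ under conditions \eqref{eq:spcond1}--\eqref{eq:spcond2}, the one-block-delayed decode-and-forward relay, the bit-pipe realization feeding the virtual channel in mode $\A$ via Theorem~\ref{thm:IFconf}, and the vanishing first-block deficit. The only (inconsequential) terminological quibble is that you call \eqref{eq:spcond1} a "simultaneous-decoding" condition, whereas the paper describes it as superposition coding with successive-cancellation decoding under the natural ordering.
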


\section{The Symmetric Case: LDM}\label{sec:sym_LDM}

In this section, we study the linear deterministic model (LDM) of the symmetric half
duplex source cooperation problem, and characterize the sum capacity for this LDM. In particular, there is a natural way to divide this problem into several different parameter regions, and in each region we explicitly characterize how the achievable scheme allocates rates for various messages. For the Gaussian model in the next section, we will divide the problem into parameter regions that correspond to the regions for the LDM. Our achievable scheme for the Gaussian model in each region mostly follows from the intuition we gain from the LDM.

\subsection{Channel Model and Sum Capacity}
The linear deterministic channel \cite{ADT08} corresponding to the symmetric case is parameterized by nonnegative integers
\begin{align*}
  n_D = \lfloor\log \SNR\rfloor^+, n_I=\lfloor\log \INR\rfloor^+, n_C=\lfloor\log
\CNR\rfloor^+.
\end{align*}
The channel is depicted in Figure~\ref{Fig:HD_channel_LDM}. Let $S_n$ be the shift matrix in $\mathbb{F}_2^{n \times n}$, where $\mathbb{F}_2$ is the finite field with two elements, i.e., 

\begin{align*}
S_n = \left[
\begin{array}{ccccc}
0&0&0&\cdots&0\\
1&0&0&\cdots&0\\
0&1&0&\cdots&0\\
\vdots&\ddots&\ddots&\ddots&\vdots\\
0&\cdots&0&1&0
\end{array}
\right]_{n\times n}. 
\end{align*}

\begin{figure}
  \centering\scalebox{0.6}{\input{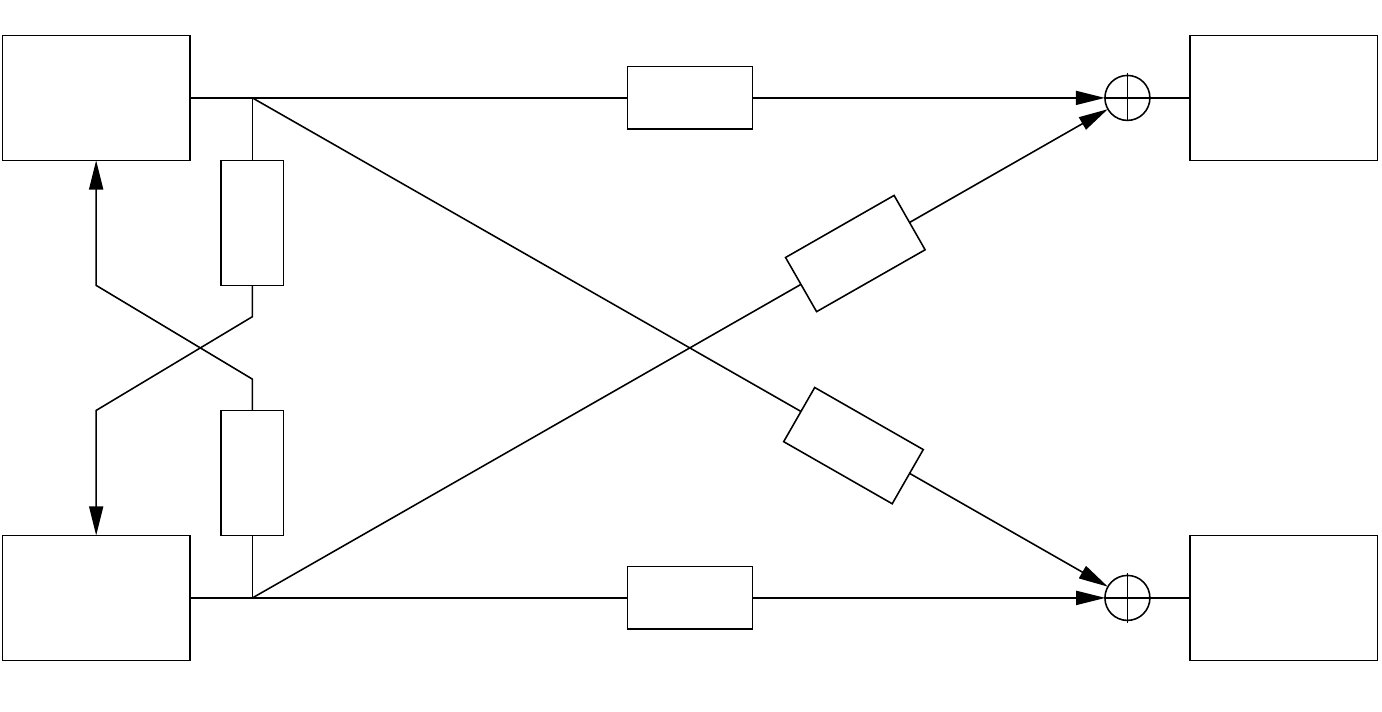_t}}
  \caption{Linear deterministic interference channel with half-duplex source cooperation.}\label{Fig:HD_channel_LDM}
\end{figure}

The sources can work in one of the three modes. In mode $\A$, both sources transmit and the channel inputs $X_{1t}, X_{2t}$ are in $\mathbb{F}_2^{\max\{n_D, n_I\}}$. The nodes receive:
\begin{align*}
Y_{1t} &= 0,\\
Y_{2t} &= 0,\\
Y_{3t} &= S_{\max\{n_D, n_I\}}^{\max\{n_D, n_I\}-n_D}X_{1t}\oplus S_{\max\{n_D, n_I\}}^{\max\{n_D, n_I\}-n_I}X_{2t},\\
Y_{4t} &= S_{\max\{n_D, n_I\}}^{\max\{n_D, n_I\}-n_D}X_{2t}\oplus S_{\max\{n_D, n_I\}}^{\max\{n_D, n_I\}-n_I}X_{1t}.
\end{align*}

In mode $\B$, source 2 listens and the channel inputs $X_{1t}, X_{2t}$ are in $\mathbb{F}_2^{\max\{n_D, n_I, n_C\}}$. Then,
\begin{align*}
Y_{1t} &= 0,\\
Y_{2t} &= S_{\max\{n_D, n_I, n_C\}}^{\max\{n_D, n_I, n_C\}-n_C}X_{1t},\\
Y_{3t} &= S_{\max\{n_D, n_I, n_C\}}^{\max\{n_D, n_I, n_C\}-n_D}X_{1t},\\
Y_{4t} &= S_{\max\{n_D, n_I, n_C\}}^{\max\{n_D, n_I, n_C\}-n_I}X_{1t}.
\end{align*}

In mode $\C$, source 1 listens and the channel inputs $X_{1t}, X_{2t}$ are in $\mathbb{F}_2^{\max\{n_D, n_I, n_C\}}$. Then,
\begin{align*}
Y_{1t} &= S_{\max\{n_D, n_I, n_C\}}^{\max\{n_D, n_I, n_C\}-n_C}X_{2t},\\
Y_{2t} &= 0,\\
Y_{3t} &= S_{\max\{n_D, n_I, n_C\}}^{\max\{n_D, n_I, n_C\}-n_I}X_{2t},\\
Y_{4t} &= S_{\max\{n_D, n_I, n_C\}}^{\max\{n_D, n_I, n_C\}-n_D}X_{2t}.
\end{align*}

\begin{thm} The sum capacity of the interference
channel in Figure~\ref{Fig:HD_channel_LDM} is
\begin{align*}
C_{sum} = \max_{\delta\geq 0}\min\{l_1(\delta), l_2(\delta),
l_3(\delta), l_4(\delta)\},
\end{align*}
\indent where
\begin{align*}
l_1(\delta) =& \frac{2}{2+\delta}\left(\delta n_D+\max\{n_D,
n_C\}\right),\\
l_2(\delta) =& \frac{1}{2+\delta}(\delta \max\{2n_D-n_I,
n_I\}+n_D\\
&+\max\{n_D, n_I, n_C\}),\\
l_3(\delta) =& \frac{2}{2+\delta}\left(\delta\max\{n_I,
n_D-n_I\}+\max\{n_D, n_I, n_C\}\right)\\
l_4(\delta) =&
\begin{cases}
\frac{2(1+\delta)}{2+\delta}\max\{n_D, n_I\}, &n_D \ne n_I\\
n_D, &n_D = n_I\\
\end{cases}. 
\end{align*}
\end{thm}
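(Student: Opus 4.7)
The strategy is to prove matching inner and outer bounds on $C_{sum}$, both of which equal $\max_{\delta\geq 0}\min_i l_i(\delta)$. The achievability will be a direct specialization of Theorem~\ref{thm:half-duplex} (the virtual-channel scheme) to the LDM; the converse will consist of four outer-bound arguments, one aligned with each $l_i$. For both directions the parameter $\delta$ plays a dual role: it is the scheduling ratio $|\varphi^{-1}(\A)|/|\varphi^{-1}(\B)|$ (together with the symmetric constraint $|\varphi^{-1}(\B)|=|\varphi^{-1}(\C)|$), producing mode fractions $\delta/(2+\delta),1/(2+\delta),1/(2+\delta)$, and also the optimization variable on the outside of the $\min$.

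\textbf{Achievability.} Because the broadcast channels of modes~$\B$ and $\C$ are deterministic on the LDM, their capacity regions are easy to write down, and this gives the set of jointly realizable bit-pipe rates $(\bp{1}{2},\bp{1}{4},\rly{1}{2}{3},R_{1\B})$ together with its $\C$-symmetric counterpart. Plugging these rates into $\mathcal{R}_{\sf virtual}$ of Theorem~\ref{thm:IFconf}, which on the LDM collapses to an explicit polytope (mutual informations become integer rank counts), reduces the sum-rate maximization to a finite case analysis. I will partition the parameter space $(n_D,n_I,n_C)$ into the familiar Han--Kobayashi regimes (very weak, weak, moderate, strong, very strong interference), each further split by the position of $n_C$ relative to $n_D$ and $n_I$, and in each case exhibit a concrete rate split among public ($W$), private ($U$), cooperative-private ($V$), and pre-shared-public ($V'$) messages that attains $\min_i l_i(\delta)$ for the corresponding optimal $\delta$.

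\textbf{Converse.} The bound $l_1$ is a genie-aided individual rate bound: giving destination~3 the listening-mode observation $Y_2^L$ as side information yields $R_1\leq (\delta n_D+\max\{n_D,n_C\})/(2+\delta)$ since in mode~$\A$ the side information is zero and in mode~$\B$ the pair $(Y_2,Y_3)$ is a deterministic function of $X_1$ of rank $\max\{n_D,n_C\}$, while in mode~$\C$ source~1 is silent; the symmetric bound on $R_2$ sums to $l_1$. The bound $l_4$ is the cut-set placing $\{1,2\}$ against $\{3,4\}$: in mode~$\A$ the two sources form a $2\times 2$ deterministic MIMO channel of rank $2$ unless $n_D=n_I$ (in which case the direct and cross links align and the channel collapses to rank $1$), producing the case distinction; in listening modes the cut gives only $\max\{n_D,n_I\}$. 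The bounds $l_2$ and $l_3$ are Etkin--Tse--Wang-style sum bounds, one tailored to the weak/moderate regime and the other to the strong-interference regime; a suitable genie gives per-use bounds of $\max\{2n_D-n_I,n_I\}$ and $2\max\{n_I,n_D-n_I\}$ respectively in mode~$\A$, while the listening-mode contribution is controlled by a hybrid argument that charges one direct-link use plus one broadcast-cut use (yielding the $n_D+\max\{n_D,n_I,n_C\}$ term in $l_2$ and the $\max\{n_D,n_I,n_C\}$ term in $l_3$).

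\textbf{Main obstacle.} The most delicate step is tailoring the Etkin--Tse--Wang genie arguments to the half-duplex setting, so that (i) the genie depends only on causally available quantities given the schedule $\varphi$, and (ii) the ``relay gain'' that source~2 (resp.~1) extracts while listening in mode~$\B$ (resp.~$\C$) is correctly absorbed into the listening-mode term rather than double-counted in mode~$\A$. I plan to handle this by conditioning on $\varphi$, deriving per-mode Fano-type inequalities separately, and then averaging using the mode fractions; because the inequalities hold for every schedule, they survive the supremum over coding strategies. Matching these bounds with the achievable rates in each parameter regime, and finally taking $\max_{\delta\geq 0}$ on both sides, completes the characterization.
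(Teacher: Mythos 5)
Your proposal is essentially correct and follows the same architecture as the paper: achievability by specializing the virtual-channel scheme (Theorem~\ref{thm:half-duplex} with Theorem~\ref{thm:IFconf}) to the LDM under symmetric scheduling, and a converse consisting of four genie-aided/cut-set bounds matching $l_1,\ldots,l_4$, which is exactly the Telatar--Tse/Tuninetti-style family of bounds the paper proves in the Gaussian setting (Theorem~\ref{thm:ifc-general-outerbound}) and invokes by analogy for the LDM. Two minor stylistic differences worth noting: the paper's achievability partitions directly by the ordering of $(n_D,n_I,n_C)$ (three coarse regions, two of them split into subregions) rather than by the classical Han--Kobayashi interference regimes, since the Fourier--Motzkin-reduced formula for $R_{\sf sum}^{\sf virtual}$ already encapsulates the HK regimes via the inner $\min$; and your converse sketch should state explicitly that when $n_D=n_I$ the sum capacity reduces to the non-cooperative interference-channel value (the paper observes this separately), so that the degenerate branch of $l_4$ is covered on the achievability side as well as the converse side.
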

The parameter $\delta$ is a scheduling parameter the same as the scheduling parameter used in Theorem~\ref{thm:sumrate}. The proof for the converse of the theorem is similar to that of the Gaussian case and is omitted in this paper. Below we describe the achievable coding scheme for the LDM. Note that when $n_I = n_D$ or $n_C\leq n_D$, the sum capacity reduces to that of the interference channel without cooperation. Hence, it can be achieved with the optimal interference channel scheme. In the following discussions, we assume $n_I\ne n_D$ and $n_C>n_D$.

\subsection{Coding Scheme}

To characterize the sum capacity, it is sufficient to consider only symmetric
schemes. The induced virtual channel is also symmetric. The
symmetric virtual channel has an interference channel determined by
$(n_D, n_I)$ and its bit-pipes have rates \bp 12 = \bp 21 = \bp ss and \bp 14 = \bp 23 = \bp sd. We denote this type of virtual
channel by \IFconf $((n_D, n_I), \bp ss, \bp sd)$.

For simplicity, let $n = \max\{n_D, n_I\}$. For source $i\in \{1, 2\}$, we define the public,
pre-shared and  private  auxiliary random variables $W_i, V_i', U_i$ to be independent random variables on $\mathbb{F}_2^n$. In particular, the public and pre-shared auxiliary random
variables are uniformly distributed over $\mathbb{F}_2^n$. The private auxiliary random variables are uniformly distributed over the set of length $n$ vectors in $\mathbb{F}_2^n$ whose upper $n-(n_D-n_I)^+$ elements are fixed to be 0.
In Theorem~\ref{thm:IFconf}, we set
\begin{align*}
  X_{V_i'} &= V_i',\\
  X_{W_i} &= V_i' + W_i,\\
  X_{U_i} &= V_i' + W_i + U_i.
\end{align*}
Note that the private auxiliary random variable $U_i$ occupies the lower
$(n_D-n_I)^+$ levels so that it does not appear at the other destination.
This is similar to the choice made in~\cite{EtkinTseWang} for the
(non-cooperative) intereference channel.

For the cooperative private codebook, we choose the auxiliary random
variables $V_i, i = 1,2$ independent of each other and all the other
auxiliary random variables, and distributed uniformly over $\mathbb{F}_2^n$.
We choose $(X_{V_1},X_{V_2})$ as deterministic functions of $(V_1,V_2)$
such that
\begin{align*}
  \left[
  \begin{array}{c}
    V_1\\
    V_2
  \end{array}
  \right] = \left[
  \begin{array}{cc}
    S_n^{n-n_D}&S_n^{n-n_I}\\
    S_n^{n-n_I}&S_n^{n-n_D}
  \end{array}
  \right]\left[
  \begin{array}{c}
    X_{V_1}\\
    X_{V_2}
  \end{array}
  \right]
\end{align*}
As the channel matrix is invertible, we can always find such $X_{V_i}$
for arbitrary $V_i$. For the particular choice of $X_{V_i}$, the sources are effectively doing {\em zero-forcing} beamforming such that each destination receives the message $V_i$ intended for it.

Using these definitions, source $i$ sends $X_{U_i}^L+X_{V_i}^L, i = 1, 2$. The
induced channel $p_{Y_3,Y_4|V_1',V_2',W_1,W_2,U_1,U_2,V_1,V_2}$ is as follows:
\begin{align*}
  Y_3 &= S_n^{n-n_D}(W_1+U_1+V_1')+S_n^{n-n_I}W_2+V_1\\
  Y_4 &= S_n^{n-n_D}(W_2+U_2+V_2')+S_n^{n-n_I}W_1+V_2,
\end{align*}
where the unintended pre-shared public signals which the receivers know in advance are removed.
We choose symmetric rates for the four types of messages: i.e.,
$R_{V'_1}=R_{V'_2}=R_{V'}$, and so on. When $n_I<n_D$, the sources only send data to their own destinations in modes $\B$ and $\C$, thus we set $\bp sd = 0$ and the pre-shared message rate $R_{V'}=0$. By Theorem~\ref{thm:IFconf} the rate pair $(R_W+R_U+R_V,
R_W+R_U+R_V)$ is achievable if
\begin{align*}
  2R_W+R_V+R_U &\leq n_D\\
  R_U+R_W &\leq \max\{n_I,
  n_D-n_I\}\\
  R_U&\leq n_D-n_I
\end{align*}
with $R_W\geq 0, R_U\geq 0, 0\leq R_V\leq \bp ss$.
When $n_I>n_D$ we set the private message rate $R_U = 0$ as the interference is strong. By Theorem~\ref{thm:IFconf} the rate
pair $(R_W+R_V+R_V', R_W+R_V+R_V')$ is achievable if
\begin{align*}
  2R_W+R_V+R_{V'}&\leq n_I\\
  R_W+R_{V'}&\leq n_D
\end{align*}
with $R_W\geq 0, 0\leq R_V\leq \bp ss, 0\leq R_V'\leq \bp sd$.
By the Fourier-Motzkin elimination, we arrive at
\begin{thm}
The following is an achievable sum rate $R_\text{sum}^\text{virtual}$ for
$\IFconf((n_D, n_I), \bp ss, \bp sd)$.
  \begin{enumerate}
    \item When $n_I< n_D, \bp sd=0$,
    \begin{align*}
      R_\text{sum}^\text{virtual} = 2\min \left\{
      \begin{array}{c}
      n_D,\\
      n_D-\frac{1}{2}n_I+\frac{1}{2}\bp ss,\\
      \max\{n_I, n_D-n_I\}+\bp ss
      \end{array}
      \right\},
  \end{align*}
  \item when $n_I>n_D$,
  \begin{align*}
    R_\text{sum}^\text{virtual} = 2\min \left\{
    \begin{array}{c}
    n_D+\bp ss,\\
    \frac{n_I+\bp ss+\bp sd}{2},\\
    n_I
    \end{array}
    \right\}.
  \end{align*}\\
  \end{enumerate}

\end{thm}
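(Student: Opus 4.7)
The theorem gives an explicit closed form for the achievable sum rate of $\IFconf((n_D,n_I),\bp ss,\bp sd)$ arising from Theorem~\ref{thm:IFconf} under the specific auxiliary-variable assignment described just above (level-sliced $W_i,U_i,V_i'$ and the zero-forcing cooperative-private pair $(X_{V_1},X_{V_2})$). The inequality system in the per-stream rates $(R_W,R_U,R_V,R_{V'})$ induced by this assignment is already listed immediately above the statement, so what remains is purely a linear-programming problem on a low-dimensional polytope.

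\textbf{Approach.} My plan is a direct Fourier--Motzkin elimination of the auxiliary rate variables, done separately in the two interference regimes. In the weak regime $n_I<n_D$ I set $\bp sd=R_{V'}=0$ and maximize $T=R_W+R_U+R_V$ subject to
\begin{align*}
\text{(A)}\ 2R_W+R_V+R_U\le n_D,\quad \text{(B)}\ R_W+R_U\le \max\{n_I,n_D-n_I\},\quad \text{(C)}\ R_U\le n_D-n_I,
\end{align*}
together with $0\le R_V\le \bp ss$ and $R_W,R_U\ge 0$. The three bounds in part~(1) then arise from explicit nonnegative combinations of these constraints: $T\le n_D$ from (A) at $R_W=0$; $T\le n_D-\tfrac12 n_I+\tfrac12 \bp ss$ from $\tfrac12[\text{(A)}+\text{(C)}+(R_V\le\bp ss)]$; and $T\le \max\{n_I,n_D-n_I\}+\bp ss$ from (B)$\,+\,(R_V\le\bp ss)$. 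Doubling $T$ gives the first displayed formula. The strong regime $n_I>n_D$ is handled analogously: set $R_U=0$ and maximize $R_W+R_V+R_{V'}$ under $2R_W+R_V+R_{V'}\le n_I$, $R_W+R_{V'}\le n_D$, $R_V\le\bp ss$, $R_{V'}\le\bp sd$. The three bounds $n_D+\bp ss$, $(n_I+\bp ss+\bp sd)/2$, and $n_I$ follow respectively from $(R_W+R_{V'}\le n_D)+(R_V\le\bp ss)$, from $\tfrac12[(2R_W+R_V+R_{V'}\le n_I)+(R_V\le\bp ss)+(R_{V'}\le\bp sd)]$, and from $(2R_W+R_V+R_{V'}\le n_I)$ alone.

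\textbf{Main obstacle and checks.} The elimination itself is bookkeeping; the real care is in verifying that the three listed facets are precisely the projection of the polytope onto $T$, with no dominated facet silently omitted and no tighter combination missed. I would confirm this by exhibiting, for each bound in each regime, an explicit rate assignment in $\mathbb{F}_2^n$ attaining it, so that all three constraints must be tight at some corner. Two sanity checks nail things down: at $\bp ss=\bp sd=0$ both formulas should collapse to the standard deterministic interference-channel sum capacity $2\min\{n_D,\max\{n_I,n_D-n_I\}\}$; and as $\bp ss,\bp sd\to\infty$ the weak formula saturates at $2n_D$ (the MAC cut at each destination) while the strong formula saturates at $2n_I$ (the broadcast-MIMO cut), matching the extremal arguments of the respective $\min$'s.
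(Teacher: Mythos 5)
Your proposal is correct and matches the paper's approach: the paper also derives the theorem by Fourier--Motzkin elimination of $(R_W,R_U,R_V,R_{V'})$ from exactly those constraint sets, and your explicit nonnegative combinations are the correct dual certificates for the three facets in each regime (I verified that the elimination yields no additional non-dominated bounds). The paper simply states ``by the Fourier--Motzkin elimination'' without spelling out the combinations or the sanity checks you add.
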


Now we can show the achievability of the sum capacity $\Ci {sum}$ using a
symmetric version of the scheme in Section~\ref{sec:ach.half-duplex}. Set $\delta_B = \delta_C = 1, \delta_A  = \delta$. For superposition
coding in modes $\B$ and $\C$, the sources set the data rates $R_{1B}=R_{2C} =
n_D$ and choose the shared rates \bp 12 = \bp 21 = \bp ss, \bp 14 = \bp 23
= \bp sd and relay rates \rly 123 = \rly 214 = $\Delta R$. The constraints
\eqref{eq:spcond1}-\eqref{eq:spcond2} translate to
\begin{align*}
  \delta \bp ss+\Delta R &\leq (n_C-n_D)^+,\\
  \delta \bp sd+\Delta R &\leq (n_I-n_D)^+,\\
  \delta \bp ss+\delta \bp sd+2\Delta R&\leq (\max\{n_I, n_C\}-n_D)^+.
\end{align*}
By Theorem~\ref{thm:half-duplex}, the sum rate achieved by this scheme is
\begin{align*}
  R_\text{sum} = \max_{\delta\geq 0}\frac{1}{2+\delta}(2n_D+2\Delta R+\delta
R_\text{sum}^\text{virtual}(n_D, n_I, \bp ss, \bp sd)).
\end{align*}

The optimization problem for $R_\text{sum}$ naturally divides in to the following parameter regions. For our choice of rates $\bp ss, \bp sd$ and $\Delta R$, tt is not hard to verify that the above constraints are satisfied and $R_\text{sum} = C_\text{sum}$ in all regions.
\begin{enumerate}
  \item $n_I<n_D<n_C$. $\bp ss = (n_C-n_D)/\delta$, $\bp sd = 0$ and
  $\Delta R = 0$. The interference link is weak in this region. We do not use it for sharing information or relay.
  \item $n_D<n_I\leq n_C$. $\bp sd=0$. The cooperation link dominates the interference link in this region, so we do not share data over the interference link. When the cooperation is strong enough, we use the additional capacity to relay data.
  \begin{enumerate}
    \item $n_C-n_D\leq \delta n_I$. $\bp ss =(n_C-n_D)/\delta$ and $\Delta R = 0$.
    \item $n_C-n_D>\delta n_I$. $\bp ss = n_I$ and
    \begin{align*}
      \Delta R = \min \left(\frac{n_C-n_D-\delta n_I}{2}, n_I-n_D\right)
    \end{align*}
  \end{enumerate}
  \item $n_D<n_C<n_I$. The interference link dominates in this region. We always use it for sharing data. When the cooperation link and the interference link are both strong enough, we further use them to relay data.
  \begin{enumerate}
    \item $n_I-n_D\leq \delta n_I$ or $n_C-n_D\leq \delta (n_I-n_D)$. $\bp ss = (n_C-n_D)/\delta,
    \bp sd = (n_I-n_C)/\delta$ and $\Delta R=0$.
    \item $n_I-n_D> \delta n_I$ and $n_C-n_D> \delta (n_I-n_D)$. $\bp ss = n_I-n_D, \bp ss+\bp
    sd=
    n_I$ and
    \begin{align*}
      \Delta R = \min \left(n_C-n_D-\delta (n_I-n_D), \frac{n_I-n_D-\delta
n_I}{2}\right)
    \end{align*}
  \end{enumerate}
\end{enumerate}

{\em Remark:} Primarily, cooperation enables better rates of
transmission over the interference channel. When both $n_C$ and $n_I$ are
large relative to $n_D$, relaying also comes into play. In the Gaussian model, we divide the problem into parameter regions as above. The basic idea for the coding scheme is to allocate the power
for the signals according to the intuition provided by the LDM, such that the rates for the messages in the Gaussian model and the corresponding LDM differ by at most a constant. Then it is sufficient to apply the achievable coding scheme for the LDM. Note that when $\SNR \approx \INR$, which corresponds to the case $n_D = n_I$, the achievable rate obtained by directly applying the LDM result is not tight with respect to the upper bound. In fact, we need to further consider the angle difference $\theta$ for the channel gains to show the constant gap result.

\section{The Symmetric Case: Gaussian Model}\label{sec:sym_Gaussian}
\label{sec:sumrateachievability}

We follow the intuition from the linear deterministic channel and consider a
symmetric version of the coding scheme in section~\ref{sec:achievability}
as well. The auxiliary random variables in Theorem~\ref{thm:IFconf} for the
induced symmetric virtual channel are chosen as follows: for source
$i=1,2,$ we define the auxiliary random variables $W_i,U_i,V_i'$ to be
independent, zero-mean Gaussian random variables with variances
$\sigma_W^2, \sigma_U^2, \sigma_{V'}^2$, respectively. Set
\begin{align*}
  X_{V_i'} &= V_i',\\
  X_{W_i} &= V_i' + W_i,\\
  X_{U_i} &= V_i' + W_i + U_i.
\end{align*}
The variance $\sigma_U^2$ for the private message is set below the
noise power level at the destination where it causes interference.  Following the
intuition from the linear deterministic case, we will employ {\em
zero-forcing beamforming} for the cooperative private messages. We choose
$V_1, V_2$ to be zero-mean Gaussian random variables with variance $\sigma_V^2$, independent of each other and all previously defined
auxiliary random variables, . When the channel matrix is
invertible, $X_{V_i}, i = 1,2$ are chosen such that
\begin{align*}
  \left[
    \begin{array}{c}
    V_1\\
    V_2
    \end{array}
    \right] =
    \left[
    \begin{array}{cc}
    h_{13} & h_{23}\\
    h_{14} & h_{24}
    \end{array}
    \right]
    \left[
    \begin{array}{c}
    X_{V_1}\\
    X_{V_2}
    \end{array}
    \right]
\end{align*}
where $X_{V_i}, i= 1, 2$ are correlated Gaussian random variables with variance
\begin{align*}
  \Var{X_{V_i}} =
\frac{\SNR+\INR}{\SNR^2+\INR^2-2\SNR\,\INR\cos\theta}\sigma_V^2.
\end{align*}
When the channel matrix is not invertible, we simply set $\sigma_V^2=0$ and $X_{V_1}=X_{V_2}=0$, i.e., there will be no cooperative private message.
The variance parameters must satisfy the power constraint
\begin{align*}
  \sigma_W^2+\sigma_U^2+\sigma_{V'}^2+\Var{X_{V_i}}\leq 1, \quad i = 1, 2.
\end{align*}
After removing the unintended pre-shared public signals, the destinations receive
\begin{align*}
  Y_3 &= h_{13}(W_1+U_1+V_1')+h_{23}W_2+V_1+h_{23}U_2+Z_3\\
  Y_4 &= h_{24}(W_2+U_2+V_2')+h_{24}W_1+V_2+h_{14}U_1+Z_4.
\end{align*}
We set the rates for the four types of messages to be symmetric, i.e.,
$R_{W_1}=R_{W_2}=R_W$ and so on. Also, in
Theorem~\ref{thm:half-duplex}, we set $\bp 12 = \bp 21 = \bp ss$, $\bp
14 = \bp 23 = \bp sd$, and $\rly 123 = \rly 214 = \Delta R$.

With the above definitions of
auxiliary random variables, there exist power and rate allocations such
that the rate $\overline{\Ci {sum}}$, defined in
Theorem~\ref{thm:sumrate}, is achievable within a constant. Specifically,
\begin{thm}\label{thm:sumrateachievability}
$\Ci {sum} \geq \overline{\Ci {sum}}-17$.
\end{thm}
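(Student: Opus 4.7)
The plan is to apply Theorem~\ref{thm:half-duplex} with the specific symmetric coding scheme described just above, and for each value of the scheduling parameter $\delta$ to exhibit explicit power and rate allocations $(\sigma_W^2,\sigma_U^2,\sigma_{V'}^2,\sigma_V^2,\bp{ss},\bp{sd},\Delta R)$ whose resulting sum rate is at least $\overline{\Ci{sum}}(\delta)-17$. Maximizing over $\delta$ then yields the theorem. Since Theorem~\ref{thm:half-duplex} combines (a) an inner rate from the virtual channel of Theorem~\ref{thm:IFconf} in mode~$\A$ and (b) the broadcast constraints \eqref{eq:spcond1}-\eqref{eq:spcond2} in modes~$\B,\C$, the main work is to show that both pieces can be satisfied simultaneously at the targeted operating point while losing only a bounded number of bits.

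First I would specialize the mutual-information bounds in Theorem~\ref{thm:IFconf} to the Gaussian inputs above. Setting $\sigma_U^2=1/(1+\INR)$ places the private signal below the noise floor at the unintended destination, in the spirit of Etkin-Tse-Wang, so that treating it as noise costs at most a constant per user. In modes~$\B$ and $\C$ the source-to-three-receivers channel is a degraded Gaussian broadcast channel, so the standard superposition-coding capacity formulas apply with the natural ordering dictated by the relative magnitudes of $\SNR$, $\INR$, and $\CNR$; this yields a clean achievable region for the triple $(\bp{ss},\bp{sd},\Delta R)$ in terms of the channel gains.

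Next, mirroring the LDM case analysis of Section~\ref{sec:sym_LDM}, I would partition the parameter space into the same regions: (i)~$\INR<\SNR<\CNR$, (ii)~$\SNR<\INR\leq\CNR$ (split further by whether the cooperation link is strong enough to permit relaying), and (iii)~$\SNR<\CNR<\INR$ (similarly split). In each region I would adopt the LDM allocation translated to the Gaussian setting: pick $\bp{ss},\bp{sd},\Delta R$ as prescribed, with the bit-pipe levels interpreted as $\log\SNR,\log\INR,\log\CNR$; split the unit power budget among the four message components accordingly; and verify both the constraints of Theorem~\ref{thm:IFconf} and the constraints \eqref{eq:spcond1}-\eqref{eq:spcond2}. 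Each standard approximation ($\log(1+x)\approx \log x$, treating private interference as noise, MAC decoding slack) costs a bounded number of bits, and a careful count across all terms in the worst-case region yields the overall constant~$17$.

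The hard part will be the regime $\SNR\approx\INR$, where the $2\times 2$ channel matrix formed by $h_{13},h_{23},h_{14},h_{24}$ is nearly singular. Zero-forcing beamforming for the cooperative private messages then inflates the required transmit power by roughly $(\SNR+\INR)/(\SNR^2+\INR^2-2\SNR\,\INR\cos\theta)$, which is precisely the penalty already captured by $u_4$ in Theorem~\ref{thm:sumrate}; the proof must therefore track this factor exactly, and must fall back to $\sigma_V^2=0$ (no cooperative private message) whenever the matrix is too ill-conditioned to accommodate a nontrivial $V$ under the unit-power constraint, showing that one of $u_1,u_2,u_3$ is then the binding bound. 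Lining up this boundary regime with the upper bound so that the cumulative gap never exceeds $17$ bits in any region is where the bulk of the technical effort (deferred to the appendix) concentrates.
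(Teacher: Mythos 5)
Your proposal follows essentially the same route as the paper's Appendix~\ref{app:sumrateachievability}: invoke Theorem~\ref{thm:half-duplex} with the symmetric Gaussian scheme, mirror the LDM region decomposition, track constants region by region with the private signal set below the unintended receiver's noise floor, and treat the near-singular $\SNR\approx\INR$ regime (where zero-forcing beamforming inflates power and one must fall back to $\sigma_V^2=0$) as the delicate case. The only organizational difference is that the paper splits the bookkeeping by inserting the intermediate quantity $\overline{\Cii{sum}{LDM}}$ (Lemma~\ref{lem:GaussianLDMlink} gives $\overline{\Ci{sum}}\leq\overline{\Cii{sum}{LDM}}+10$, Lemma~\ref{lem:GaussianLDMachievability} gives $\Ci{sum}\geq\overline{\Cii{sum}{LDM}}-7$), whereas you aim directly at the combined constant~$17$; the substance and case analysis are the same, though you will want to be careful that the unit power constraint involves $\Var{X_{V_i}}$, not $\sigma_V^2$, since these differ by the $\theta$-dependent factor appearing in $u_4$.
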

\begin{proof}
  We sketch how we prove the theorem and refer the reader to Appendix~\ref{app:sumrateachievability} for details. We show achievability in the following five parameter regions. In the first four regions, we consider the coding schemes for the corresponding LDM and show that the sum capacity of the LDM can be achieved within a constant. The last region is unique for Gaussian channel, where the scheme according to the LDM can be strictly suboptimal.
  \begin{enumerate}
    \item $\CNR\leq \SNR$ or $\CNR\leq 1$ or $\INR\leq 1$. In this region, the condition implies that either the cooperation is not helpful or there is little interference. Therefore, the previous schemes for the interference channel are enough to achieve the upper bound within a constant.
    \item $2\INR<\SNR<\CNR$. This region corresponds to the case $n_I<n_D<n_C$.
    \item $2\SNR<\INR<\CNR$. This region corresponds to the case $n_D<n_I\leq n_C$. We further divide this region into two subregions as for the LDM.
    \item $\SNR<\CNR<\INR$. This region corresponds to the case $n_D<n_C\leq n_I$. We further divide this region into two subregions as for the LDM.
    \item $\SNR\approx \INR<\CNR$. This region corresponds to the case $n_D = n_I$. In LDM, if $n_D = n_I$, the channel is degenerated and the channel matrix $S$ has only rank $n_D$. However, in the Gaussian case, whether the channel is degenerated further depends on the angles of the channel gains. In particular, when $\cos \theta\approx 0$, the channel matrix $H$ is well conditioned and cooperation is still helpful.
  \end{enumerate}
\end{proof}

The following theorem provides an upperbound to the sum-rate. It is proved in Appendix~\ref{app:sumrateconverse}. This theorem together with the previous one imply Theorem~\ref{thm:sumrate}.
\begin{thm}\label{thm:converse}
Let
  \begin{align*}
    Cut(\delta) =& \frac{1}{2+\delta}\Big[\delta \log(1+\SNR P_{1A})+\delta \log(1+\SNR P_{2A})\\
        &\log(1+(\SNR+\CNR)P_{1B})+\log(1+(\SNR+\CNR)P_{2C})\Big]\\
    Z(\delta) = & \frac{1}{2+\delta}\Big[\delta \log(1+2\SNR P_{1A}+2\INR P_{2A})+\log(1+\SNR P_{1B})\\
        &+\log(1+(\SNR+\INR+\CNR)P_{2C})+\delta\log(1+\frac{\SNR P_{2A}}{1+\INR P_{2A}})\Big]\\
    V(\delta) =& \frac{1}{2+\delta}\Big[\delta\log\left(1+\INR P_{2A}+\frac{2\SNR P_{1A}+\INR P_{2A}}{1+\INR P_{1A}}\right)+\log(1+(\SNR+\INR+\CNR)P_{1B})\\
        &
        +\delta\log\left(1+\INR P_{1A}+\frac{2\SNR P_{2A}+\INR P_{1A}}{1+\INR P_{2A}}\right)+\log(1+(\SNR+\INR+\CNR)P_{2C})\Big]\\
    Cut'(\delta) =& \frac{1}{2+\delta}\Big[\delta\log(1+2(\SNR+\INR)(P_{1A}+P_{2A})+P_{1A}P_{2A}(\SNR^2+\INR^2-2\SNR\INR\cos\theta))\\
        &+\log(1+(\SNR+\INR)P_{1\B})+\log(1+(\SNR+\INR)P_{2\C})\Big]
  \end{align*}
Define
  $\displaystyle\overline{C_{sum}^{HD}} = \max_{\delta, P_{1A},P_{1B}}\min(Cut(\delta), Z(\delta),
  V(\delta), Cut'(\delta))$, where the maximization is over all
non-negative $\delta, P_{1A}, P_{1B}, P_{2A}, P_{2C}$ which satisfy the power
constraints
  \begin{align*}
    \frac{\delta P_{1A}+P_{1B}}{2+\delta}\leq 1\text{  and  }\frac{\delta P_{2A}+P_{2C}}{2+\delta}\leq
    1.
  \end{align*}
Then
\[ \Ci {sum} \leq \overline{\Cii {sum} {HD}}\leq\overline{\Ci {sum}}+7.\]
\end{thm}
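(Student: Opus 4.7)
The plan is to establish both $C_{sum}\le\overline{C_{sum}^{HD}}$ and $\overline{C_{sum}^{HD}}\le\overline{C_{sum}}+7$, with the first being the substantive information-theoretic converse and the second a bookkeeping argument about the power normalization. I fix any codebook of block length $L$ with schedule $\varphi$; let $L_\A,L_\B,L_\C$ be the slot counts and define per-mode average powers $P_{iX}=(1/L_X)\sum_{t\in\varphi^{-1}(X)}E[|X_{i,t}|^2]$, so the overall constraint becomes $(L_\A P_{i\A}+L_X P_{iX})/L\le1$. Any asymmetric schedule with $L_\B\ne L_\C$ can be time-shared with its $\B\leftrightarrow\C$ mirror without reducing the sum rate (by the $1\leftrightarrow 2$ symmetry of the channel), reducing us to $L_\B=L_\C$ and $L=(2+\delta)L_\B$ with $\delta=L_\A/L_\B$. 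After Fano's inequality, each of the four bounds is obtained by splitting mutual informations by mode, applying the Gaussian maximum-entropy bound, and using Jensen's inequality in the power to replace the instantaneous $E[|X_{i,t}|^2]$ by the mode average $P_{iX}$.

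The cut-set bound $Cut(\delta)$ comes from $LR_i\le\sum_t I(X_{i,t};Y_{i,t},Y_{(i+2),t}\mid X_{(3-i),t})+L\epsilon_L$: in mode $\A$ the listening output $Y_{i,t}$ is zero and the genie $X_{(3-i),t}$ cancels interference, leaving $\log(1+\SNR P_{i\A})$; in mode $\B$ Src 1 is heard by both Dst 3 and Src 2 via $h_{13}$ and $h_{12}$, giving $\log(1+(\SNR+\CNR)P_{1\B})$; mode $\C$ contributes zero for $i=1$, and symmetrically for $i=2$. For $Cut'(\delta)$ I use the MIMO cut $L(R_1+R_2)\le\sum_t I(X_{1,t},X_{2,t};Y_{3,t},Y_{4,t})$; in mode $\A$ this is $\log\det(I+HKH^*)$ with $K=\mathrm{diag}(P_{1\A},P_{2\A})$, and expanding the determinant using $|h_{13}^*h_{23}+h_{14}^*h_{24}|^2=2\SNR\,\INR(1+\cos\theta)$ produces $\SNR^2+\INR^2-2\SNR\,\INR\cos\theta$ as the $P_{1\A}P_{2\A}$ coefficient, after a $1$-bit relaxation of the linear coefficient from $(\SNR+\INR)$ to $2(\SNR+\INR)$; modes $\B,\C$ reduce to the MISO bounds $\log(1+(\SNR+\INR)P_{iX})$. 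The bounds $Z(\delta)$ and $V(\delta)$ are Etkin--Tse--Wang style bounds: the former gives Dst 3 a noisy copy of the interference seen by Dst 4 plus $W_2$-side information (and symmetrically for $R_2$), so that in mode $\A$ the mutual-information sum collapses to $\log(1+2\SNR P_{1\A}+2\INR P_{2\A})+\log(1+\SNR P_{2\A}/(1+\INR P_{2\A}))$; the mode-$\B$ contribution is the point-to-point $\log(1+\SNR P_{1\B})$ and the mode-$\C$ contribution is the three-receiver broadcast $\log(1+(\SNR+\INR+\CNR)P_{2\C})$. $V(\delta)$ is the symmetrized version where the $\max$ arises from the choice of giving the genie to Dst 3 versus Dst 4.

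For $\overline{C_{sum}^{HD}}\le\overline{C_{sum}}+7$, I observe that the power constraint $\delta P_{i\A}+P_{iX}\le 2+\delta$ allows each variable to be at most $2+\delta$, so that $\log(1+c P_{iX})\le\log(1+c)+\log(2+\delta)$ in each log; after the division by $2+\delta$ in front of the bound, each such penalty $\frac{1}{2+\delta}\log(2+\delta)$ or $\frac{\delta}{2+\delta}\log((2+\delta)/\delta)$ is bounded by a universal constant (attaining its maximum near $\delta=1$), and summing at most one bit per logarithmic term plus the $1$-bit slack in $Cut'$ gives the $+7$ gap. The main obstacles I expect are (a) selecting the right genies in $Z(\delta)$ and $V(\delta)$ so that the mode-$\A$ mutual information collapses exactly into the two stated terms, because the standard ETW recipe must be adapted to the cooperative setting where $Y_1,Y_2$ carry message information in modes $\C,\B$; and (b) verifying that the time-sharing/symmetrization step preserves the per-source power constraint, which I would handle by constructing the mirrored codebook explicitly and noting that its power spectrum is the $1\leftrightarrow 2$ swap of the original.
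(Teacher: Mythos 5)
Your high-level decomposition (Fano, split by mode, genie-aided mutual-information bounds, Gaussian max-entropy, Jensen in power, then a constant-penalty argument) matches the paper's strategy, and your derivations of $Cut$, $Cut'$, and (roughly) $Z$ are on the right track. However, there is a genuine gap in the treatment of $V(\delta)$. You describe $V(\delta)$ as ``the symmetrized version where the max arises from the choice of giving the genie to Dst 3 versus Dst 4,'' but $V(\delta)$ as stated has no $\max$; more seriously, the bound behind it does not follow from the standard ETW genie recipe. The paper proves a \emph{new} outer bound for interference channels with generalized feedback, namely
\[ R_1+R_2 \;\leq\; I(X_1,X_2;Y_1,Y_3\mid U_1,Y_2,Q) \;+\; I(X_1,X_2;Y_2,Y_4\mid U_2,Y_1,Q), \]
where $U_1,U_2$ are the Telatar--Tse-style noisy components of the cross links. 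Establishing this requires a careful entropy telescoping argument (comparing $H(U_1^n,Y_1^n,Y_2^n)$ and $H(U_2^n,Y_1^n,Y_2^n)$ across the $R_1$ and $R_2$ Fano expansions and cancelling them when the two are added) that is not a cosmetic variant of ETW; the feedback signals $Y_1^n,Y_2^n$ enter the conditioning and must be handled explicitly. Your proposal does not identify that this lemma needs to be proved, so as written the $V(\delta)$ bound is unjustified. You flagged the need to ``adapt the ETW recipe to the cooperative setting,'' which is a correct instinct, but for $V(\delta)$ that adaptation is itself a nontrivial theorem, not a routine modification.

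Two smaller notes. First, on the reduction to symmetric schedules: you propose time-sharing with the $\B\!\leftrightarrow\!\C$ mirror, which is valid for the non-adaptive scheduling function $\varphi(t)$ in the problem statement. The paper instead folds the half-duplex state $S_i$ into the channel input and allows all four states, introducing a slack $\gamma={\mathbb P}(S_1{=}0,S_2{=}0)/{\mathbb P}(S_1{=}1,S_2{=}0)$ and showing $\gamma\geq 0$ only tightens the bound; this handles adaptive schedules as well and avoids any explicit symmetrization step, at the cost of small additive terms $H(S_1), H(S_2), H(S_1,S_2)$ in each bound (which are absorbed into the $+7$). Either route works for the stated model, but you should make sure your time-sharing argument is spelled out against the exact problem formulation you are using. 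Second, your ``$+7$'' bookkeeping is conceptually right, but be careful to track the number of logarithmic penalty terms \emph{per bound} (there are up to four) and the additive entropy constants; the paper sums to $4/(e\ln2)+4 < 7$, and any miscount here will break the constant.
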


\section{The Cognitive Case: LDM}\label{sec:cog_LDM}

In this section, we study the linear deterministic model (LDM) of the cognitive channel. We first characterize the cognitive capacity of the LDM, which is the counterpart of the $R_0$-capacity for the Gaussian case. Next we describe the coding scheme for the channel and provide a simple interpretation of the coding scheme. We then briefly discuss the converse. The intuition from the LDM will be our guideline for studying the Gaussian channel in the next section.

\subsection{Channel Model and Cognitive Capacity}\label{sec:cog_LDM_model}
The LDM of the cognitive channel is parameterized by the nonnegative
integers
\begin{align*}
  &n_1 = \lfloor\log \SNR_1\rfloor^+, n_2 = \lfloor\log \SNR_2\rfloor^+ , \alpha_1 = \lfloor\log \INR_1\rfloor^+,\\
  &\alpha_1 = \lfloor\log \INR_2\rfloor^+ , \beta = \lfloor\log \CNR\rfloor^+
\end{align*}
The channel is depicted in Figure~\ref{Fig:HD_channel_LDM_cog}. Let $S_n$ be the shift matrix in $\mathbb{F}_2^{n\times n}$, as defined in Section~\ref{sec:sym_LDM}. As the cooperation is
only unidirectional, the sources can work in mode $A$ and $B$. In mode $\A$, both
sources transmit and the channel inputs $X_{1t}, X_{2t}$ are in $\mathbb{F}_2^{\max\{n_1, \alpha_1, n_2, \alpha_2\}}$. The nodes receive:
\begin{align*}
Y_{1t} &= 0,\\
Y_{2t} &= 0,\\
Y_{3t} &= S_{\max\{n_1, \alpha_1, n_2, \alpha_2\}}^{\max\{n_1, \alpha_1, n_2, \alpha_2\}-n_1}X_{1t}\oplus S_{\max\{n_1, \alpha_1, n_2, \alpha_2\}}^{\max\{n_1, \alpha_1, n_2, \alpha_2\}-\alpha_1}X_{2t},\\
Y_{4t} &= S_{\max\{n_1, \alpha_1, n_2, \alpha_2\}}^{\max\{n_1, \alpha_1, n_2, \alpha_2\}-n_2}X_{2t}\oplus S_{\max\{n_1, \alpha_1, n_2, \alpha_2\}}^{\max\{n_1, \alpha_1, n_2, \alpha_2\}-\alpha_2}X_{1t}.
\end{align*}

In mode $\B$, source 2 listens and the channel inputs $X_{1t}, X_{2t}$ are in $\mathbb{F}_2^{\max\{n_1, \alpha_1, n_2, \alpha_2, \beta\}}$. Then,
\begin{align*}
Y_{1t} &= 0,\\
Y_{2t} &= S_{\max\{n_1, \alpha_1, n_2, \alpha_2, \beta\}}^{\max\{n_1, \alpha_1, n_2, \alpha_2, \beta\}-\beta}X_{1t},\\
Y_{3t} &= S_{\max\{n_1, \alpha_1, n_2, \alpha_2, \beta\}}^{\max\{n_1, \alpha_1, n_2, \alpha_2, \beta\}-n_1}X_{1t},\\
Y_{4t} &= S_{\max\{n_1, \alpha_1, n_2, \alpha_2, \beta\}}^{\max\{n_1, \alpha_1, n_2, \alpha_2, \beta\}-\alpha_2}X_{1t}.
\end{align*}

\begin{figure}
  \centering\scalebox{0.6}{\input{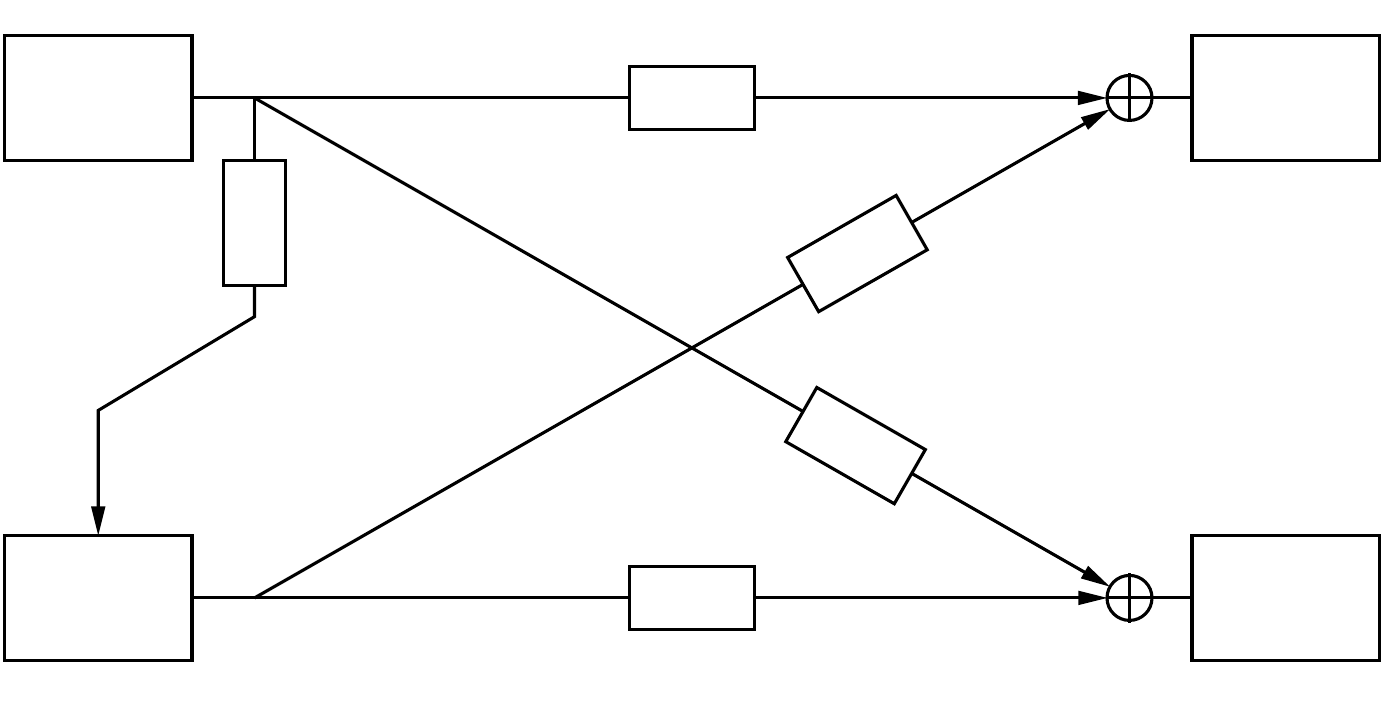_t}}
  \caption{Linear deterministic interference channel with unidirectional half-duplex source cooperation.}\label{Fig:HD_channel_LDM_cog}
\end{figure}

For this channel, source 1 is the primary user and source 2 is the secondary user. As mentioned in Section~\ref{sec:problem}, we would like to know the best rate the secondary can get when the primary is communicating at its link capacity, which is $R_1 = n_1$. We define the cognitive capacity for this LDM as follows, which is similar to the $R_0$-capacity for the Gaussian case.

\begin{definition}
  Assume the capacity region of the channel in Figure~\ref{Fig:HD_channel_LDM_cog} is $\mathscr{C}$. The cognitive
  capacity of the channel is defined as
  \begin{align*}
    \Ci{cog} = \max_{\begin{subarray}{c} (R_1, R_2)\in \mathscr{C}\\R_1 = n_1
    \end{subarray}}R_2.
  \end{align*}
\end{definition}

Note that in this definition, the primary does not need to back-off as in the
$R_0$-capacity. This back-off is not necessary because the linear deterministic model is a coarser
description of the true channel. It characterizes the channel capacity only up
to degree of freedom. Therefore, a constant back-off in the Gaussian model is negligible in this LDM.

\begin{thm}\label{thm:LDcog}
  The cognitive capacity $\Ci{cog}$ of channel in Figure~\ref{Fig:HD_channel_LDM_cog} is given by
  \begin{align*}
    \Ci{cog} =\max_{\delta\geq 0}\min(u_1, u_2, u_3, u_4),
  \end{align*}

  where
  \begin{align*}
    u_1 &= \frac{1}{1+\delta}n_2\\
    u_2 &= \frac{1}{1+\delta}[n_2\vee\alpha_2-\alpha_2\wedge n_1+\delta(\beta\vee\alpha_2\vee n_1-n_1)]\\
    u_3 &= \frac{1}{1+\delta}[(\alpha_1-n_1)^++(n_2-\alpha_1)^+]\\
    u_4 &= \frac{1}{1+\delta}[(\alpha_1-n_1)^+-\alpha_2\wedge n_1+(n_2-\alpha_1)\vee\alpha_2+\delta(\beta\vee\alpha_2\vee
    n_1-n_1)].
  \end{align*}
\end{thm}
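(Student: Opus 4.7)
My plan is to split the proof into achievability and converse, mirroring the structure used for the symmetric LDM in Section~\ref{sec:sym_LDM}.

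For achievability, I would instantiate the scheme of Theorem~\ref{thm:half-duplex} in the unidirectional setting. Setting $\d \C=0$ and $\delta=\d \B/\d \A$, the bit-pipes $\bp 21$ and $\bp 23$ of the induced virtual channel vanish, so in Theorem~\ref{thm:IFconf} the cooperative-private and pre-shared-public parts of source~2 degenerate ($V_2\equiv V_2'\equiv 0$) and source~2 transmits only a public message $W_2$ and a private message $U_2$. Source~1 carries all four streams $(W_1,U_1,V_1,V_1')$, sharing $V_1$ with source~2 over $\bp 12$ and $V_1'$ with destination~4 over $\bp 14$. During mode~$\B$, source~1 is the transmitter in a three-receiver LDM broadcast channel (to source~2, destination~3, destination~4) whose bit levels are nested; bit-level superposition makes the triple $(R_{1\B},\d \A\bp 12/\d \B,\d \A\bp 14/\d \B)$ simultaneously achievable as long as its sum fits in $\beta\vee\alpha_2\vee n_1$. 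During mode~$\A$ I adopt the auxiliary-variable choices from Section~\ref{sec:sym_LDM}: the private bits are placed below the interference floor, and $V_1$ is zero-forcing-beamformed to destination~3 by exploiting source~2's shared knowledge of $V_1$. To pin $R_1=n_1$ I take $R_{1\B}=n_1$ and leave source~1 silent in mode~$\A$ for rate-accounting. Plugging into Theorem~\ref{thm:IFconf} and Fourier-Motzkin eliminating on $R_2=R_{W_2}+R_{U_2}$ should yield $R_2\le \min(u_1,u_2,u_3,u_4)$ after optimizing the sharing rates $\bp 12,\bp 14$ within each parameter subregion (organized by the orderings among $n_1,n_2,\alpha_1,\alpha_2,\beta$), exactly as in the symmetric LDM.

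For the converse I would prove the four bounds separately. The bound $u_1$ is immediate: source~2 is silent in mode~$\B$, and during the $1/(1+\delta)$ fraction of time spent in mode~$\A$ each channel use contributes at most $n_2$ bits to $Y_4$. The bound $u_3$ is a Z-channel-style argument: conditioning on $X_1$ and using the constraint $R_1=n_1$, the top $n_1$ levels at destination~3 are saturated by $X_1$, so after removing source~1's signal the secondary has at most $(\alpha_1-n_1)^+$ usable levels above source~1's direct reception at destination~3 plus $(n_2-\alpha_1)^+$ residual levels at destination~4 below the interference floor. The bounds $u_2$ and $u_4$ come from combining a MIMO cut-set around $\{1,2\}$ vs.\ $\{3,4\}$ in mode~$\A$ with a mode~$\B$ budget: in mode~$\B$ source~1 must deliver $n_1$ bits per slot to destination~3, so the residual capacity available for cooperation is $(\beta\vee\alpha_2\vee n_1)-n_1$ per slot, producing the $\delta(\cdots)$ term. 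The mode~$\A$ portion of each bound follows from a standard cut-set inequality for decoding $(W_1,W_2)$ from $(Y_3,Y_4)$ after revealing the appropriate genie signal.

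The main obstacle I expect is $u_4$, which couples the mode-$\B$ cooperation surplus with a mode-$\A$ bound that mixes cancellation at destination~3 with residual reception at destination~4; the challenge is to pick a genie (e.g.\ revealing $X_1$ together with the high-order levels of $X_2$ visible at destination~3) so that the resulting inequality matches the achievable scheme without double-counting the $(\alpha_1-n_1)^+$ term. For achievability the companion difficulty is verifying that in every parameter subregion one can choose $\bp 12$, $\bp 14$, and an optional decode-and-forward relay rate $\rly{1}{2}{3}$ that simultaneously saturate the binding inequalities; I would handle this by mirroring the case split used in the symmetric LDM and specializing it to the orderings that actually arise under the asymmetric, unidirectional constraints.
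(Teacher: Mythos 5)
Your converse plan is broadly aligned with the paper's: the paper proves an outer bound $\mathscr{C}(\delta)$ on arbitrary rate pairs (Theorem~\ref{thm:cogLDconverse}) and then substitutes $R_1=n_1$ to read off $u_1,\ldots,u_4$. There $u_1$ is a no-interference cut-set on $R_2$, both $u_2$ and $u_3$ arise from Z-channel-style genie bounds on $R_1+R_2$ (not a MIMO cut-set---your description of $u_2$ is off; the MIMO cut-set $I(X_1,X_2;Y_3,Y_4)$ appears only in the symmetric converse), and $u_4$ comes from a $2R_1+R_2$ bound of the Etkin--Tse--Wang type, matching your genie intuition.

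The achievability argument, however, contains a step that fails. You propose to ``take $R_{1\B}=n_1$ and leave source~1 silent in mode~$\A$ for rate-accounting.'' With $\delta_\A=1$, $\delta_\B=\delta$, $\delta_\C=0$, the total primary rate is $R_1=\frac{R_{1\A}+\delta R_{1\B}}{1+\delta}$; setting $R_{1\A}=0$ yields $R_1=\frac{\delta}{1+\delta}n_1<n_1$, so the constraint $R_1=n_1$ in the definition of $\Ci{cog}$ is not met. Moreover, if source~1 is silent in mode~$\A$ there is nothing to zero-force: the cooperative private stream $V_1$ is part of source~1's message to destination~3, and the whole point of beamforming $V_1$ (using source~2's knowledge from $\bp 12$) is to let source~1 keep transmitting at full rate $n_1$ during mode~$\A$ while nulling its contribution at destination~4. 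The paper resolves this by setting $R_{1\A}=R_{1\B}=n_1$---source~1 is active in \emph{both} modes---and then applies Fourier--Motzkin to the virtual-channel constraints with $R_1=R_{W_1}+R_{U_1}+R_{V_1}=n_1$ held fixed. Your sketch as written does not produce a scheme achieving $R_1=n_1$.

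Two smaller deviations from the paper's route: the paper's cognitive scheme explicitly sets $\bp 14=\bp 23=0$ and $\rly 123=\rly 214=0$, i.e.\ it uses neither pre-shared public messages $V_1'$ nor decode-and-forward relaying; the only active bit-pipe is $\bp 12=\delta(\beta-n_1)$, and the auxiliary $V_2$ is set to $0$ with the realizable level $k$ determined by Lemma~\ref{lem:cog_LDM_k}. Invoking $V_1'$ and $\rly 123$ is not harmful in principle, but it adds degrees of freedom the paper shows are unnecessary and complicates the Fourier--Motzkin elimination you would need to carry out. If you repair the achievability by setting $R_{1\A}=R_{1\B}=n_1$ and drop the unused pre-shared/relay components, the rest of your outline tracks the paper's proof.
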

The parameter ¦Ä is a scheduling parameter the same as the scheduling parameter used in Theorem~\ref{thm:cog}. Before continuing to the coding scheme and the converse proof, we summarize here the result for cognitive capacity of the interference channel without cooperation for comparison.
\begin{prop}\label{prop:LDcogIFC}
  The cognitive capacity of linear deterministic interference channel parameterized by
  $n_1, n_2, \alpha_1, \alpha_2$ is
  \begin{align*}
    \Cii{cog}{IFC} = \min(v_1, v_2, v_3, v_4),
  \end{align*}
  \indent where
  \begin{align*}
    v_1 &= n_2\\
    v_2 &= n_2\vee\alpha_2-\alpha_2\wedge n_1\\
    v_3 &= (\alpha_1-n_1)^++(n_2-\alpha_1)^+\\
    v_4 & = (\alpha_1-n_1)^+-\alpha_2\wedge n_1+(n_2-\alpha_1)\vee\alpha_2.
  \end{align*}
\end{prop}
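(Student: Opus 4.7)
The plan is to obtain Proposition~\ref{prop:LDcogIFC} as a direct specialization of Theorem~\ref{thm:LDcog}. The non-cooperative interference channel is exactly the $\beta=0$ case, and when $\beta=0$ any time spent in mode~$\B$ is wasteful since source~$2$ receives only a zero signal from source~$1$; the scheduling parameter should therefore be taken to be $\delta=0$. Substituting $\beta=0$ and $\delta=0$ into the expressions for $u_1,u_2,u_3,u_4$ in Theorem~\ref{thm:LDcog} collapses them exactly to $v_1,v_2,v_3,v_4$, so both achievability and converse of Proposition~\ref{prop:LDcogIFC} inherit directly from Theorem~\ref{thm:LDcog}. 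Concretely, the achievability scheme degenerates to Han--Kobayashi coding on the interference channel with no cooperative or pre-shared messages, with the primary at its link capacity $n_1$ and the secondary using a standard public/private split.

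For a self-contained argument, I would write the achievability directly. Forced to $R_1=n_1$, the primary places $n_1\wedge\alpha_2$ of its bits at the levels of $X_1$ visible at destination~$4$ and the remaining $(n_1-\alpha_2)^+$ bits at levels invisible to destination~$4$. The secondary splits $W_2$ into a public part whose rate the receiver-$3$ MAC bounds by $(\alpha_1-n_1)^+$, together with a private part placed on the levels of $X_2$ invisible to receiver~$3$, of rate at most $(n_2-\alpha_1)^+$. The trivial $R_2\le n_2$ bound gives $v_1$; summing the two rates of $W_2$ gives $v_3$; and the receiver-$4$ MAC after accounting for the unavoidable $\alpha_2\wedge n_1$ bits of primary interference gives $v_2$. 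The joint bound $v_4$ emerges by Fourier--Motzkin eliminating the intermediate rates from the full receiver-$4$ Han--Kobayashi constraints.

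For the converse, $v_1,v_2,v_3$ follow by standard Fano arguments applied one receiver at a time, using the structural consequence of $R_1=n_1$ that $H(S_M^{M-n_1}X_1^L)=L\,n_1-o(L)$, and hence $H(S_M^{M-\alpha_1}X_2^L)\le L(\alpha_1-n_1)^+ + o(L)$ together with $H(S_M^{M-\alpha_2}X_1^L)\ge L(\alpha_2\wedge n_1)-o(L)$, where $M=\max\{n_1,\alpha_1,n_2,\alpha_2\}$. The hardest step will be $v_4$, which is not an individual-receiver bound; I would establish it either by the Fourier--Motzkin just mentioned, or by a genie-aided inequality at destination~$4$ that hands the decoder $S_M^{M-\alpha_1}X_2^L$ and then invokes the two entropy estimates above. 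The real technical burden is just the position-bookkeeping for where secondary public bits, secondary private bits, primary visible bits, and primary private bits land on the receiver-$4$ observation.
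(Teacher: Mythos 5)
Your primary route works and is not circular, but it is genuinely different from (and considerably more roundabout than) the paper's proof. The paper simply cites the known capacity region of the non-cooperative linear deterministic interference channel from Bresler--Tse [BT08] (seven inequalities on $(R_1,R_2)$, given explicitly in the proof) and evaluates them at $R_1=n_1$; the Proposition is a one-line corollary of a prior result, which is why it appears without any independent derivation in the paper. You instead propose to inherit the Proposition from Theorem~\ref{thm:LDcog} by setting $\beta=0$ and $\delta=0$, which requires Theorem~\ref{thm:LDcog} to be proved first (fine, since its proof uses only the $v_i$ notation from the Proposition's \emph{statement}, not its truth) -- but then you still owe a proof that $\delta=0$ is optimal at $\beta=0$. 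Your operational intuition (``mode~$\B$ is wasteful'') is right, but to close the argument at the formula level you need that all four of $u_1,\ldots,u_4$ are non-increasing in $\delta$ once $\beta=0$. That reduces to showing $v_2\geq(\alpha_2-n_1)^+$ and $v_4\geq(\alpha_2-n_1)^+$, both of which do hold (e.g.\ for $\alpha_2>n_1$, $v_2=n_2\vee\alpha_2-n_1\geq\alpha_2-n_1$, and $v_4-(\alpha_2-n_1)=(\alpha_1-n_1)^++(n_2-\alpha_1)\vee\alpha_2-\alpha_2\geq 0$), but you should state this explicitly rather than wave at it.

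Your ``self-contained'' variant is essentially a re-derivation of the [BT08] capacity region followed by the same evaluation at $R_1=n_1$, which is a much heavier lift than either the paper's or your primary route, and the converse sketch has a gap: the step from $R_1=n_1$ to $H(S_M^{M-n_1}X_1^L)=Ln_1-o(L)$ does not follow purely from Fano at receiver~$3$, because $Y_3^L$ depends on $X_2^L$ as well, so you need the additional fact that $X_1^L$ is a deterministic function of $W_1$ alone (true here, since there is no cooperation) before the entropy of the visible part of $X_1^L$ can be pinned down, and even then the bound on the visible levels requires a short argument rather than an assertion. The cleanest fix is simply to do what the paper does and invoke the [BT08] region.
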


\begin{proof}
  The capacity region of the linear deterministic interference channel \cite{BT08} is given by the set of
  $(R_1, R_2)$ satisfying
  \begin{align*}
    R_1\leq & n_1\\
    R_2\leq & n_2\\
    R_1+R_2\leq & (n_1-\alpha_2)^++n_2\vee \alpha_2\\
    R_1+R_2\leq & (n_2-\alpha_1)^++n_1\vee \alpha_1\\
    R_1+R_2\leq & \alpha_1\vee (n_1-\alpha_2)+\alpha_2\vee (n_2-\alpha_1)\\
    2R_1+R_2\leq & n_1\vee \alpha_1+(n_1-\alpha_2)^++\alpha_2\vee (n_2-\alpha_1)\\
    R_1+2R_2\leq & n_2\vee \alpha_2+(n_2-\alpha_1)^++\alpha_1\vee(n_1-\alpha_2).
  \end{align*}
  Evaluating the inequalities at $R_1 = n_1$, the maximum $R_2$ gives the cognitive capacity above.
\end{proof}

Using the notation in the proposition, we can rewrite the cognitive capacity of the
cognitive channel as
\begin{align*}
  \Ci{cog} = \max_{\delta}\frac{1}{1+\delta}\min(v_1, v_2+\delta(\beta\vee\alpha_2\vee n_1-n_1), v_3, v_4+\delta(\beta\vee\alpha_2\vee
  n_1-n_1)).
\end{align*}
When $\beta = 0$, clearly the cognitive channel reduces to the original interference
channel and $\Ci{cog}(\beta = 0) = \Ci{cog}^{IFC}$. When $\beta\leq \alpha_2\vee n_1$, we can see
that $\Ci{cog}(\beta) = \Ci{cog}(\beta = 0)= \Ci{cog}^{IFC}$.  Moreover,
when the channel is aligned, i.e., $n_1+n_2 =\alpha_1+\alpha_2$, we have
\begin{align*}
  \Ci{cog}\leq \max_{\delta}u_3 = v_3 = \max(n_1, n_2, \alpha_1, \alpha_2)-n_1 =
  C_{cog}^{IFC}.
\end{align*}
In both cases, the cooperation link is useless and the optimal interference channel scheme is enough.
Therefore, in the following discussions, we assume $\beta > \alpha_2\vee n_1$, $n_1+n_2\ne
\alpha_1+\alpha_2$, and
\begin{align*}
  \Ci{cog} = \max_{\delta}\frac{1}{1+\delta}\min(v_1, v_2+\delta(\beta-n_1), v_3,
  v_4+\delta(\beta-n_1)).
\end{align*}

\subsection{Coding Scheme}

We consider general asymmetric schemes for the cognitive LDM. Compared with the
symmetric case, we have several differences: (a) the interference channel
is asymmetric and is determined by $(n_1, \alpha_1, n_2, \alpha_2)$; (b)
for the virtual channel, as $n_{21} = 0$, we always have $\bp 21 = 0$.

In our coding scheme, we do not use the pre-shared message and set
$\bp 14 = \bp 23 = 0$. Hence the virtual channel is denoted as
$\IFconf(n_1, \alpha_1, n_2, \alpha_2, \bp 12)$. Moreover, relay is also
not used in this case and we set the relay rates $\rly 123 = \rly 214 = 0$.
By definition of the cognitive capacity, we have $R_1 = n_1$ and our scheme
sets $R_{1B} = R_{1A} = n_1$.

To choose the auxiliary random variables in Theorem~\ref{thm:IFconf} for
this asymmetric virtual channel, let $n = n_1\vee \alpha_1\vee n_2\vee
\alpha_2$ for simplicity.  For source $i\in \{1, 2\}$, we define the public and  private
auxiliary random variables $W_i,U_i$ to be independent random variables on $\mathbb{F}_2^n$. The public auxiliary
random variables are uniformly distributed over $\mathbb{F}_2^n$. The private auxiliary random variables are uniformly distributed over the set of length $n$ vectors in $\mathbb{F}_2^n$ whose
upper $n-(n_i-\alpha_i)^+$ elements are fixed to be 0.
In Theorem~\ref{thm:IFconf}, we set $V_i'=0$ and
\begin{align*}
  X_{W_i} &= W_i,\\
  X_{U_i} &= W_i + U_i.
\end{align*}
Note that $U_i$ occupies the lower $(n_i-\alpha_i)^+$ levels so that it does not
appear at the other destination.This is similar to the choice made
in~\cite{EtkinTseWang} for the (non-cooperative) intereference channel.

For the cooperative private codebook, we set the auxiliary random variable
$V_2 = 0$ and choose $V_1$ independent of the auxiliary random variables
and distributed uniformly over the set of length $n$ vectors in $\mathbb{F}_2^n$ whose
upper $n-k$ elements are fixed to be 0.
The choice of $k$ will be specified later. We choose $(X_{V_1},X_{V_2})$
as deterministic functions of $V_1$ such that
\begin{align}
  \left[
  \begin{array}{c}
    V_1\\
    0
  \end{array}
  \right] = \left[
  \begin{array}{cc}
    S_n^{n-n_1}&S_n^{n-\alpha_1}\\
    S_n^{n-\alpha_2}&S_n^{n-n_2}
  \end{array}
  \right]\left[
  \begin{array}{c}
    X_{V_1}\\
    X_{V_2}
  \end{array}
  \right] \label{eq:invertibility}
\end{align}
For the particular choice of $X_{V_i}$, the sources are effectively doing zero-forcing beamforming such that the primary destination receives $V_1$ and the signal cancels at the secondary destination. For this scheme to be feasible, $k$ is chosen such that for arbitrary $V_1$
in $\mathbb{F}_2^n$ with the upper $n-k$ elements being 0, there exist
$X_{V_1}, X_{V_2}$ satisfying the above equation. Such $k$ is called {\em
realizable}, and we have the following lemma.
\begin{lem}\label{lem:cog_LDM_k}
  For channel with parameters $(n_1, n_2, \alpha_1, \alpha_2)$,
  the largest realizable $k$ is $[n_1-(\alpha_2-n_2)^+]\vee[\alpha_1-(n_2-\alpha_2)^+]$
\end{lem}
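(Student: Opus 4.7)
The plan is to prove the lemma in two parts: achievability of $k^{\ast}:=[n_1-(\alpha_2-n_2)^+]\vee[\alpha_1-(n_2-\alpha_2)^+]$ by exhibiting explicit zero-forcing beamforming schemes, and a matching upper bound via a dimension count on the linear system~\eqref{eq:invertibility}.

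For achievability I would present two complementary schemes and take the better of the two, which accounts for the maximum in $k^{\ast}$. In Scheme~I, $X_{V_1}$ is the information-bearing signal meant for destination~3 through the direct link of strength $n_1$, and $X_{V_2}$ is designed solely to null $X_{V_1}$'s contribution at destination~4 through the direct link of strength $n_2$. Writing the zero-forcing constraint $S_n^{n-\alpha_2}X_{V_1}+S_n^{n-n_2}X_{V_2}=0$ level by level, one sees that when $\alpha_2\le n_2$ every bit of $X_{V_1}$ that reaches destination~4 can be cancelled, so all $n_1$ useful bit positions of $X_{V_1}$ may be freely set and we obtain $k_1=n_1$; when $\alpha_2>n_2$, the top $\alpha_2-n_2$ bits of $X_{V_1}$ reach destination~4 at levels that lie above the reach of $X_{V_2}$ and must be zero, leaving $n_1-(\alpha_2-n_2)$ usable positions. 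Scheme~II is the dual construction in which $X_{V_2}$ carries the information to destination~3 through the cross link of strength $\alpha_1$ and $X_{V_1}$ nulls it at destination~4 through the cross link of strength $\alpha_2$; a symmetric bit-level analysis gives $k_2=\alpha_1-(n_2-\alpha_2)^+$. For each scheme I would then check directly that every $V_1$ supported on the bottom $k_i$ levels can be realized, so that the union (in the sense of taking the better scheme per channel) achieves $k^{\ast}=k_1\vee k_2$.

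For the converse I would identify the set of realizable $V_1$ as the image of the linear map
\[\Phi:(X_{V_1},X_{V_2})\longmapsto S_n^{n-n_1}X_{V_1}+S_n^{n-\alpha_1}X_{V_2}\]
restricted to the subspace $\mathcal K=\{(X_{V_1},X_{V_2}):S_n^{n-\alpha_2}X_{V_1}+S_n^{n-n_2}X_{V_2}=0\}$, and then exhibit $n-k^{\ast}$ independent linear functionals on $\mathbb F_2^n$ that vanish identically on $\Phi(\mathcal K)$; these functionals correspond to the uppermost levels of $V_1$ that can never be populated. Concretely, using the nilpotent structure of $S_n$ one expresses $\mathcal K$ as a coupling between the top bits of $X_{V_1}$ and $X_{V_2}$ (with certain top bits forced to zero), and propagates this constraint through $\Phi$ to read off which top coordinates of the output are identically zero. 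The count of such forbidden levels is precisely $n-k^{\ast}$, matching the achievability bound.

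The main obstacle is the bit-level bookkeeping through $S_n$: depending on the signs of $\alpha_2-n_2$, $n_1-\alpha_2$, $\alpha_1-n_2$ and $n_1-\alpha_1$, the sets of coupled levels and of forced-zero levels in the zero-forcing constraint change, and the analysis naturally splits into several sub-cases that must be handled uniformly so that the construction in each case exactly attains the dimension bound. A secondary subtlety is avoiding degeneracy of the $2\times 2$ block channel matrix when $n_1+n_2=\alpha_1+\alpha_2$, which is ruled out by the non-alignment assumption stated just before the lemma in Section~\ref{sec:cog_LDM_model}.
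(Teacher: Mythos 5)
Your outline is correct and arrives at the same decomposition as the paper, though the presentation differs. The paper proves the lemma by a channel reduction: for $\alpha_2\ge n_2$ the nulling constraint forces the top $\alpha_2-n_2$ bits of $X_{V_1}$ to zero, so the system is equivalent to one for a channel with parameters $(n_1-(\alpha_2-n_2),n_2,\alpha_1,n_2)$, and the trivial upper bound $k\le(n_1-(\alpha_2-n_2))\vee\alpha_1$ together with a ``full row rank'' observation for the reduced matrix closes this case; $\alpha_2<n_2$ is symmetric. Your two-scheme decomposition realizes the same case split: Scheme~I is the same bookkeeping on $X_{V_1}$, Scheme~II is its mirror image on $X_{V_2}$, and your forbidden-top-levels converse plays exactly the role of the paper's trivial bound on the reduced channel. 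The paper absorbs the bit-level work into one compact reduction while you carry it explicitly in both directions, but the mathematical content is the same.

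The one place where you undersell the argument is the role of the non-alignment hypothesis. You call $n_1+n_2\ne\alpha_1+\alpha_2$ a ``secondary subtlety'' about avoiding a singular block matrix, but it is in fact load-bearing for achievability. In Scheme~I the nulling-determined $X_{V_2}$ also contributes to $V_1$ through $S_n^{n-\alpha_1}$, so the map from the usable bits of $X_{V_1}$ to the bottom levels of $V_1$ is not $S_n^{n-n_1}$ alone but $S_n^{n-n_1}$ plus an induced shift by $\alpha_1+\alpha_2-n_1-n_2$ positions. When this offset is nonzero the induced term is a nilpotent shift, the combined map is unipotent and hence invertible, and surjectivity onto the bottom $k_1$ levels holds as you claim; when the offset is zero the induced term coincides with the identity on the relevant block, the two contributions cancel over $\mathbb F_2$ at level $n-k^*+1$, and the claimed $k^*$ is not achievable (the formula overcounts by at least one). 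Your converse remains valid as an upper bound, but the achievability step --- which is exactly where your argument meets the paper's terse ``full row rank'' claim --- genuinely requires the non-alignment assumption recorded in Section~\ref{sec:cog_LDM_model}, and it is worth saying so explicitly rather than relegating it to a footnote on matrix conditioning.
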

\begin{proof} Clearly we have $k\leq n_1\vee \alpha_1$. Assume
$\alpha_2\geq n_2$. As $V_2 = 0$ and the upper $\alpha_2-n_2$ bits of $V_2$
and $X_{V_1}$ are the same, those bits of $X_{V_1}$ must be zero. After
removing the corresponding first $\alpha_2-n_2$ columns, the channel matrix
is equivalent to a channel with parameters $(n_1-(\alpha_2-n_2), n_2,
\alpha_1, n_2)$. Hence we have $k\leq (n_1-(\alpha_2-n_2))\vee \alpha_1$.
Ignoring the all zero rows of this new channel matrix, it is not hard to
see that it is of full row rank and for any $V_1\in\mathbb{F}_2^k$ with its
upper $n-k$ elements being 0, where $k=(n_1-(\alpha_2-n_2))\vee \alpha_1$,
there exists $X_{V_1},X_{V_2}$ satisfying \eqref{eq:invertibility}. Hence
the maximum realizable $k$ is $(n_1-(\alpha_2-n_2))\vee  \alpha_1$. A
similar argument can be made for $\alpha_2<n_2$ and combining the two we
have the lemma.  \end{proof}

According to the above lemma, we set
$k=[n_1-(\alpha_2-n_2)^+]\vee[\alpha_1-(n_2-\alpha_2)^+]$. Source 1 sends
$X_{U_1}^L+X_{V_1}^L$ and source 2 sends $X_{U_2}^L$. The induced channel
$p_{Y_3,Y_4|W_1,W_2,U_1,U_2,V_1}$ is
\begin{align*}
  Y_3 &= S_n^{n-n_1}(W_1+U_1)+S_n^{n-\alpha_1}W_2+V_1\\
  Y_4 &= S_n^{n-n_2}(W_2+U_2)+S_n^{n-\alpha_2}W_1.
\end{align*}
By Theorem~\ref{thm:IFconf} the rate pair $(R_{W_1}+R_{U_1}+R_{V_1},
R_{W_2}+R_{U_2})$ is achievable if the rates
$R_{W_1},R_{U_1},R_{V_1},R_{W_2},R_{U_2}$ are non-negative and they satisfy
the following conditions:
\begin{align*}
  R_{W_1}+R_{U_1}+R_{W_2}+R_{V_1}&\leq \max(\alpha_1, n_1)\\
  R_{U_1}+R_{W_2}+R_{V_1}&\leq \max(\alpha_1, k)\\
  R_{W_1}+R_{U_1}+R_{V_1}&\leq \max(n_1, k)\\
  R_{W_1}+R_{U_1}&\leq n_1\\
  R_{U_1}+R_{W_2}&\leq \max(n_1-\alpha_2, \alpha_1)\\
  R_{U_1}+R_{V_1}&\leq k\\
  R_{U_1}&\leq (n_1-\alpha_2)^+\\
  R_{V_1}&\leq \bp 12\\
  R_{W_1}+R_{W_2}+R_{U_2}&\leq \max(\alpha_2, n_2)\\
  R_{W_1}+R_{U_2}&\leq \max(n_2-\alpha_1, \alpha_2)\\
  R_{W_2}+R_{U_2}&\leq n_2\\
  R_{U_2}&\leq (n_2-\alpha_1)^+
\end{align*}
Set $R_1 = R_{W_1}+R_{U_1}+R_{V_1} = n_1$ and $R_2 =
R_{W_2}+R_{U_2}$. Applying Fourier-Motzkin elimination to the
above inequalities we get the following theorem.
\begin{thm}
  The following is an achievable cognitive rate for
$\IFconf(n_1, \alpha_1, n_2, \alpha_2, \bp
  12)$,
  \begin{align*}
    \Rii{cog}{virtual} = \min(v_1, v_2+\bp 12, v_3, v_4+\bp 12)
  \end{align*}
  in which $v_i, i = 1, 2, 3, 4$ are defined in
Proposition~\ref{prop:LDcogIFC}.
\end{thm}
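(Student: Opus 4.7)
The plan is to apply Fourier--Motzkin elimination to the system of inequalities produced by Theorem~\ref{thm:IFconf} specialized to our choice of auxiliary random variables, imposing $R_1=R_{W_1}+R_{U_1}+R_{V_1}=n_1$ and writing $R_2=R_{W_2}+R_{U_2}$. The goal is to project the polytope in the $(R_{W_1},R_{U_1},R_{V_1},R_{W_2},R_{U_2})$ space onto the single axis $R_2$, subject to $R_1=n_1$, and verify the four resulting inequalities match $v_1$, $v_2+\bp{1}{2}$, $v_3$, $v_4+\bp{1}{2}$.

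First I would exploit the equality $R_{W_1}+R_{U_1}+R_{V_1}=n_1$ to simplify every primary-side constraint. For instance, the constraint $R_{W_1}+R_{U_1}+R_{V_1}+R_{W_2}\leq \max(\alpha_1,n_1)$ collapses to $R_{W_2}\leq(\alpha_1-n_1)^+$, and $R_{U_1}+R_{W_2}\leq \max(n_1-\alpha_2,\alpha_1)$ combined with $R_{U_1}=n_1-R_{W_1}-R_{V_1}$ yields a constraint linking $R_{W_2}$, $R_{W_1}$, $R_{V_1}$. The constraints $R_{V_1}\leq k$ and $R_{V_1}\leq\bp{1}{2}$, together with our choice $k=[n_1-(\alpha_2-n_2)^+]\vee[\alpha_1-(n_2-\alpha_2)^+]$ from Lemma~\ref{lem:cog_LDM_k}, mean that the effective bound on $R_{V_1}$ is $\min(k,\bp{1}{2})$; one checks case by case that in the regime where cooperation helps, $k$ exceeds the binding value so $\bp{1}{2}$ is active and produces the $+\bp{1}{2}$ bonus in $v_2,v_4$.

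Next I would eliminate $R_{W_1},R_{U_1},R_{V_1}$ using the linear equality and the remaining primary constraints, and then eliminate $R_{W_2}$ and $R_{U_2}$ individually while tracking upper bounds on $R_{W_2}+R_{U_2}$. The four surviving upper bounds on $R_2$ should arise as follows: $R_{W_2}+R_{U_2}\leq n_2$ gives $v_1$; the bound obtained by pairing $R_{W_1}+R_{U_2}\leq\max(n_2-\alpha_1,\alpha_2)$ with $R_{W_1}\geq n_1-R_{U_1}-R_{V_1}$ and maximizing $R_{V_1}$ up to $\bp{1}{2}$ gives $v_2+\bp{1}{2}$; the pair $R_{W_2}\leq(\alpha_1-n_1)^+$ and $R_{U_2}\leq(n_2-\alpha_1)^+$ gives $v_3$; and the combination $R_{W_1}+R_{W_2}+R_{U_2}\leq\max(\alpha_2,n_2)$ paired with $R_{U_1}+R_{W_2}\leq\max(n_1-\alpha_2,\alpha_1)$, after substituting $R_{W_1}+R_{U_1}=n_1-R_{V_1}$ and again pushing $R_{V_1}$ up to $\bp{1}{2}$, gives $v_4+\bp{1}{2}$.

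The main obstacle is purely bookkeeping: Fourier--Motzkin on thirteen inequalities produces many candidate projected constraints, most of which must be shown to be dominated by (i.e.\ implied by convex combinations of) the four claimed bounds. I would handle this by splitting into the sign cases of $\alpha_1-n_1$, $n_2-\alpha_1$, $\alpha_2-n_2$ and $n_1-\alpha_2$ that define $v_2,v_3,v_4$; in each case the expressions $\max(\cdot,\cdot)$ and $(\cdot)^+$ resolve and the redundancy of the other candidates can be checked directly. A secondary subtlety is to justify that the nonnegativity constraints $R_{W_1},R_{U_1},R_{V_1},R_{W_2},R_{U_2}\geq 0$ are compatible with $R_1=n_1$ for any $R_2\leq\min(v_1,v_2+\bp{1}{2},v_3,v_4+\bp{1}{2})$; this amounts to exhibiting an explicit allocation, e.g.\ putting as much of $n_1$ as possible into $R_{V_1}$ (up to $\bp{1}{2}$), the remainder into $R_{W_1}$ and $R_{U_1}$ in accordance with the classical Han--Kobayashi split guided by $(n_1-\alpha_2)^+$.
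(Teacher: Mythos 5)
Your overall plan --- substitute the chosen auxiliaries into Theorem~\ref{thm:IFconf}, impose the equality $R_{W_1}+R_{U_1}+R_{V_1}=n_1$, and Fourier--Motzkin eliminate to obtain bounds on $R_2=R_{W_2}+R_{U_2}$ --- is exactly what the paper does; the paper itself supplies no more detail than ``applying Fourier--Motzkin elimination.'' However, the two constraint pairings you name for the $\bp 12$-bearing bounds are crossed up. The combination you attribute to $v_2+\bp 12$, namely $R_{W_1}+R_{U_2}\leq\max(n_2-\alpha_1,\alpha_2)$ together with $R_{W_1}=n_1-R_{U_1}-R_{V_1}$ and $R_{V_1}\leq\bp 12$, $R_{U_1}\leq(n_1-\alpha_2)^+$, gives $R_{U_2}\leq\max(n_2-\alpha_1,\alpha_2)-\alpha_2\wedge n_1+\bp 12$ only; this is a bound on $R_{U_2}$, not on $R_2$, and upon adding $R_{W_2}\leq(\alpha_1-n_1)^+$ it becomes exactly $v_4+\bp 12$, not $v_2+\bp 12$. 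Conversely, the combination you attribute to $v_4+\bp 12$ (summing $R_{W_1}+R_{W_2}+R_{U_2}\leq\max(\alpha_2,n_2)$ and $R_{U_1}+R_{W_2}\leq\max(n_1-\alpha_2,\alpha_1)$ and substituting $R_{W_1}+R_{U_1}=n_1-R_{V_1}$) bounds $2R_{W_2}+R_{U_2}$, not $R_{W_2}+R_{U_2}$, and hence is not by itself a projected inequality on $R_2$. The bound $v_2+\bp 12$ in fact arises from the single constraint $R_{W_1}+R_{W_2}+R_{U_2}\leq\max(\alpha_2,n_2)$ combined with $R_{W_1}=n_1-R_{U_1}-R_{V_1}\geq\alpha_2\wedge n_1-\bp 12$. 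These are bookkeeping errors rather than a wrong method; once the pairings are corrected and the dominance of the remaining Fourier--Motzkin products is checked (which, as you say, requires a case split on the signs of $\alpha_1-n_1$, $n_2-\alpha_1$, $\alpha_2-n_2$, $n_1-\alpha_2$), the argument matches the paper's.
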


With this theorem in hand, showing the achievability of the cognitive capacity for the original
half-duplex channel $\Ci{cog}$ is quite straightforward. Set $\delta_B = \delta,
\delta_C = 0, \delta_A = 1$. For the superposition coding in mode $\B$, source 1 sets rate $R_{1\B} = n_1$ and the shared rate $\frac{\bp 12}{\delta} = \beta - n_1$ or
$\bp 12 = \delta (\beta-n_1)$. As $R_{1\B}= R_{1A} = n_1$, the total rate for the
primary is $R_1 = n_1$. Then by Theorem~\ref{thm:half-duplex}, the cognitive rate achieved by the secondary is
\begin{align*}
  \Ri{cog} = \max_{\delta\geq 0}\frac{1}{1+\delta}\Rii{cog}{virtual}
   =\max_{\delta\geq 0} \min(u_1,u_2,u_3,u_4),
\end{align*}
where $u_1,u_2,u_3,u_4$ were defined in Theorem~\ref{thm:LDcog}.

\subsection{An Interpretation of the Scheme}

For the interesting region $\beta > \alpha_2\vee n_1$ and $n_1+n_2\ne
\alpha_1+\alpha_2$, we can obtain a simple interpretation of the scheme by
optimizing over $\delta$. Let
\begin{align*}
  \Ci{cog}(\delta) = &\frac{1}{1+\delta}\min(v_1, v_2+\delta(\beta-n_1), v_3,
  v_4+\delta(\beta-n_1))\\
  = & \frac{1}{1+\delta}\min(v_1\wedge v_3, v_2\wedge v_4+\delta(\beta-n_1)).
\end{align*}
Define $\delta_0 =\frac{v_1\wedge v_3-v_1\wedge v_2\wedge v_3\wedge v_4}{\beta-n_1}\geq
  0$. When $\delta \geq \delta_0$,
\begin{align*}
  \Ci{cog}(\delta) = \frac{1}{1+\delta}[v_1\wedge v_3]\leq \frac{1}{1+\delta_0}[v_1\wedge
  v_3].
\end{align*}
When $0\leq \delta< \delta_0$, we must have $\delta_0>0$, which means $v_1\wedge
  v_3>v_1\wedge v_2 \wedge v_3\wedge v_4$; hence, $v_2\wedge v_4 = v_1\wedge v_2 \wedge v_3\wedge
  v_4$.
\begin{align*}
  \Ci{cog}(\delta) &= \frac{1}{1+\delta}[v_2\wedge v_4+\delta(\beta-1)]\\
  &\leq \max\left(v_2\wedge v_4, \frac{1}{1+\delta_0}[v_2\wedge v_4+\delta_0(\beta-1)]\right)\\
  & = \max\left(v_1\wedge v_2\wedge v_3\wedge v_4, \frac{1}{1+\delta_0}[v_1\wedge
  v_3]\right).
\end{align*}
The second inequality is due to the fact that $\Ci{cog}(\delta)$ is a monotone function in this region and its
maximum is achieved at the end points. The last equality follows from the fact that $v_2\wedge v_4 = v_1\wedge v_2 \wedge v_3\wedge v_4$.

In summary,
\begin{align*}
  &\Ci{cog}(\delta) \leq \max\left(v_1 \wedge v_2\wedge v_3 \wedge v_4, \frac{1}{1+\delta_0}[v_1\wedge
  v_3]\right)\\
  &\Ci{cog} = \max_{\delta}\Ci{cog}(\delta) = \max\left(v_1 \wedge v_2\wedge v_3 \wedge v_4, \frac{1}{1+\delta_0}[v_1\wedge
  v_3]\right).
\end{align*}
The equality is achieved by taking either $\delta = 0$ or $\delta =  \delta_0$. As
defined in Section~\ref{sec:cog_LDM_model}, $\Cii{cog}{IFC} = v_1 \wedge v_2\wedge v_3 \wedge v_4$. If we
let $\alpha_2 = 0$, the interference channel reduces to the corresponding Z-channel and we
can define its cognitive capacity as
\begin{align*}
  \Cii{cog}{Z} = \Cii{cog}{IFC}(\alpha_2 = 0) = v_1\wedge v_2 \wedge v_3\wedge v_4|_{\alpha_2 =
  0}=  v_1\wedge v_3,
\end{align*}
Then $\Ci{cog}$ can be rewritten as
\begin{align*}
  \Ci{cog} = \max\left(\Cii{cog}{IFC}, \frac{1}{1+\delta_0}\Cii{cog}{Z}\right).
\end{align*}

This expression of $\Ci{cog}$ provides a new interpretation of our scheme. It consists of two optional
schemes. One is the optimal scheme for the interference channel that achieves its
cognitive capacity. In the second scheme, the secondary first listens in mode $B$
long enough to collect information of the interference from source 1 during mode
$A$. In each time instant, it gets $\beta-n_1$ bits. Then in mode $A$, it uses this
information to perform dirty paper coding to fully ``cancel'' the interference. Thus the original channel is
now equivalent to a Z-channel and $\Cii{cog}{Z}$ is achieved for the secondary. The amount
of information needed to cancel interference is $\Cii{cog}{Z}-\Cii{cog}{IFC}$; hence, the time
to listen is $\delta_0 = \frac{\Cii{cog}{Z}-\Cii{cog}{IFC}}{\beta-n_1}$, as defined above. It is easy to see that this scheme achieves
rate $\frac{1}{1+\delta_0}\Cii{cog}{Z}$. Our optimal scheme picks the better of the two and
achieves capacity $\Ci{cog}$.

\subsection{Converse}
To prove the converse, we need the following theorem.
\begin{thm}\label{thm:cogLDconverse}
  The capacity region $\mathscr{C}$ is contained within $\bigcup_{\delta}\mathscr{C}(\delta)$, where $\mathscr{C}(\delta)$
  is the set of rate pairs $(R_1, R_2)$ satisfying
  \begin{align*}
    R_2 \leq & \frac{1}{1+\delta}n_2\\
    R_1+R_2\leq & \frac{1}{1+\delta}[\max(n_2, \alpha_2)+\delta\max(\beta,
\alpha_2,n_1)+(n_1-\alpha_2)^+]\\
    R_1+R_2\leq & \frac{1}{1+\delta}[\max(\alpha_1, n_1)+\delta n_1+(n_2-\alpha_1)^+]\\
    2R_1+R_2\leq & \frac{1}{1+\delta}[\max(\alpha_1,
n_1)+\delta+(n_1-\alpha_2)^++\max(n_2-\alpha_1,
\alpha_2)+\delta\max(\beta, \alpha_2, n_1)]
  \end{align*}
\end{thm}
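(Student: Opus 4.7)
The plan is to fix a codebook achieving $(R_1,R_2)$ with scheduling function $\varphi$, set $L_\A=|\varphi^{-1}(\A)|$, $L_\B=|\varphi^{-1}(\B)|$, $\delta=L_\B/L_\A$, and start from Fano's inequality $LR_i\leq I(W_i;Y_{i+2}^L)+L\varepsilon_L$. Splitting every mutual information into its mode $\A$ and mode $\B$ contributions, each of the four inequalities to be proved becomes a bound of the form $L\cdot(\text{l.h.s.})\leq L_\A(\text{mode-}\A \text{ term})+L_\B(\text{mode-}\B \text{ term})$, where the mode $\A$ term is one of the single-rate or sum-rate bounds appearing in the converse of Proposition~\ref{prop:LDcogIFC} for the non-cooperative interference channel, and the mode $\B$ term is a single-source cut-set bound out of source~1 (the only transmitter in that mode). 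Dividing by $L=L_\A+L_\B$ then gives the stated $(1+\delta)^{-1}(\cdot)$ form.

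For the bound on $R_2$, I would use that $X_1^L$ is a function of $W_1$ alone in the cognitive setting, so $I(W_2;Y_4^L)\leq I(W_2;Y_4^L\mid X_1^L)\leq H(Y_4^L\mid X_1^L)$; in mode $\B$ every $Y_{4,t}$ is a deterministic function of $X_{1,t}$, so only the $L_\A$ slots in mode $\A$ contribute, each by at most $n_2$ bits. Bound~2 is the Etkin--Tse--Wang sum-rate bound at destination~4: I would give destination~4 the clean signal $S_n^{n-\alpha_2}X_1$ as a genie, so that the mode $\A$ contribution to $I(W_1;Y_3)+I(W_2;Y_4,\text{genie})$ is at most $\max(n_2,\alpha_2)+(n_1-\alpha_2)^+$ per slot; in mode $\B$ the cut $\{1\}\mid\{2,3,4\}$ has LDM capacity $\max(n_1,\alpha_2,\beta)$ because $Y_2$, $Y_3$, $Y_4$ are all shift-truncations of the same $X_1$. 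Bound~3 is proved symmetrically, with the genie moved to destination~3 and the mode $\B$ cut collapsed to $\{1\}\mid\{3\}$ contributing $n_1$ per slot. For bound~4, I would add two copies of the $R_1$-Fano inequality (one with each of the two genies) to one copy of the $R_2$-Fano inequality; the mode $\A$ piece then becomes the classical $2R_1+R_2$ interference-channel bound $\max(\alpha_1,n_1)+(n_1-\alpha_2)^++\max(n_2-\alpha_1,\alpha_2)$, and the mode $\B$ piece works out to $L_\B(\max(\beta,\alpha_2,n_1)+1)$, with the extra $L_\B$ accounting for one bit per mode $\B$ slot from the entropy of the scheduling choice, which is charged once more in a $2R_1+R_2$ bound than in an $R_1+R_2$ bound.

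The main obstacle is handling the adaptive scheduling function cleanly: since $\varphi(t)$ is allowed to depend on $(W_2,Y_2^{t-1})$, it carries information about $W_1$ through $Y_2$, which clutters the conditional entropies after the ETW genies are introduced. I would handle this by first proving each inequality for every deterministic realization of $\varphi$ (or, equivalently, by conditioning on $\varphi$ throughout) so that the bound holds for every pair $(L_\A,L_\B)$; the union $\bigcup_\delta\mathscr{C}(\delta)$ on the right-hand side then automatically absorbs every realization, and a standard time-sharing/concavity argument in $\delta$ lets one replace random $(L_\A,L_\B)$ by their expectations to obtain a single representative $\delta=\mathbf{E}[L_\B]/\mathbf{E}[L_\A]$. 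A minor subsidiary check is verifying that the ``$+\delta$'' term in bound~4 appears only there and not in bounds 2 and 3; this follows from the fact that introducing a second copy of the $R_1$-Fano inequality is precisely what forces one to pay for the schedule entropy one additional time.
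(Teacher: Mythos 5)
Your overall plan — split the Fano inequality by mode, use ETW-style genies for the mode-$\A$ contribution, and bound the mode-$\B$ contribution by single-source cuts out of node~1 — is the same in spirit as what the paper does for the parallel Gaussian statement (Theorem~\ref{thm:cog_outerbound}, proved in Appendix~\ref{app:cogouterboundtheorem}; the LDM theorem itself is left essentially unproved, with the details deferred to the Gaussian calculation). The paper packages the derivation differently: it first establishes a general outer bound for cognitive interference channels with generalized feedback (Theorem~\ref{thm:cog-gen-ob}) and then splits by the value of $S_2$, charging $H(S_2)\leq 1$ for the scheduling entropy, whereas you propose conditioning on deterministic realizations of $\varphi$ and time-sharing. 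Both routes can be made to work, but the paper's sidesteps the complication you flag about $\varphi$ being jointly coupled to $(W_1,W_2)$ through $Y_2$. Bounds~1, 2 and 3 of your sketch look sound.

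The problem is bound~4. You attribute the ``$+\delta$'' in the statement to ``one bit per mode~$\B$ slot from the entropy of the scheduling choice, charged once more in a $2R_1+R_2$ bound than in an $R_1+R_2$ bound.'' That cannot be right: scheduling entropy contributes at most a constant — one bit per Fano application, which is exactly where the ``$+3$'' in the Gaussian counterpart comes from — not a quantity that scales with $L_\B$, and in the LDM, where additive constants are dropped, it should contribute nothing. What the mode-$\B$ contribution actually is, when you carry through your own plan of adding two $R_1$-Fano inequalities to one $R_2$-Fano inequality, is $L_\B\bigl(n_1+\max(\beta,\alpha_2,n_1)\bigr)$: one copy of the $R_1$-Fano bound contributes, in mode~$\B$, the channel $X_1\to Y_3$ of gain $n_1$, the other contributes the cut $\{1\}\mid\{2,3,4\}$ of gain $\max(\beta,\alpha_2,n_1)$, and the $R_2$-Fano contributes nothing in mode~$\B$, as you argue correctly for bound~1. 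So the term should read $\delta n_1$, not $\delta$. This is confirmed by the Gaussian counterpart, whose two mode-$\B$ terms in the $2R_1+R_2$ bound are $\delta\log(1+\SNR_1 P_{1B})$ and $\delta\log(1+(\SNR_1+\INR_2+\CNR)P_{1B})$, and it is the only reading under which evaluating bound~4 at $R_1=n_1$ reproduces $u_4$ in Theorem~\ref{thm:LDcog}. In short: your high-level plan for bound~4 is correct and, followed through literally, would have produced $\delta n_1$; the schedule-entropy explanation is a post-hoc rationalization of what appears to be a typographical ``$\delta$'' for ``$\delta n_1$'' in the theorem statement, and the argument you offer for it does not survive scrutiny.
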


For schemes with scheduling parameter $\delta$, $\mathscr{C}(\delta)$ can
be shown as an outer bound on the achievable rate region.  The first upper
bound is proved by assuming no interference. The second and third upper
bounds are proved along the lines of the Z-channel bound in
\cite{EtkinTseWang} and the last bound has similarities to the $2R_1+R_2$
upper bound in the same reference. The full details are provided for the
Gaussian model.

By evaluating the upper bounds with $R_1 = n_1$ and optimizing over
$\delta$ we get an upper bound on $R_2$, which matches the cognitive capacity
given in Theorem~\ref{thm:LDcog}.

\section{The Cognitive Case: Gaussian Model}\label{sec:cog_Gaussian}

We follow the intuition in the previous section to approximately
characterize the $R_0$-capacity of the Gaussian cognitive channel.
The auxiliary random variables for the virtual channel in
Theorem~\ref{thm:IFconf} are chosen as follows: For source $i=1,2$,
we define respectively the public and the private auxiliary random variables
$W_i$ and $U_i$ to be independent, zero-mean Gaussian random variables with
variances $\sigma_{W_i}^2,\sigma_{U_i}^2$, respectively.
In Theorem~\ref{thm:IFconf}, we define
\begin{align*}
  X_{W_i} &=  W_i,\\
  X_{U_i} &=  W_i + U_i.
\end{align*}
The variance $\sigma_{U_i}^2$ for the private message is set below the noise power level
at the destination where it causes interference. Following
the intuition from the linear deterministic case, we will employ {\em
zero-forcing beamforming} for the cooperative private messages. We choose
$V_2=0$ and $V_1$ to be zero-mean
Gaussian random variables with variance $\sigma_{V_1}^2$, independent of each other and all previously
defined auxiliary random variables. When the channel
matrix is invertible, $X_{V_1}$ and $X_{V_2}$ are chosen such that
\begin{align*}
  \left[
    \begin{array}{c}
    V_1\\
    0
    \end{array}
    \right] =
    \left[
    \begin{array}{cc}
    h_{13} & h_{23}\\
    h_{14} & h_{24}
    \end{array}
    \right]
    \left[
    \begin{array}{c}
    X_{V_1}\\
    X_{V_2}
    \end{array}
    \right]
\end{align*}
In this case, $X_{V_i}, i=1,2$ are correlated Gaussian random varaibles
with variances
\begin{align}
  \Var{X_{V_1}} &=
\frac{|h_{24}|^2}{|h_{13}h_{24}-h_{14}h_{23}|^2}\sigma_{V_1}^2\notag\\
 &=
\frac{\SNR_2}{\SNR_1\SNR_2+\INR_1\INR_2-2\sqrt{\SNR_1\SNR_2\INR_1\INR_2}\cos\theta}\sigma_{V_1}^2 \label{eq:cogcooppvtpower1}\\
  \Var{X_{V_2}} &=
\frac{|h_{14}|^2}{|h_{13}h_{24}-h_{14}h_{23}|^2}\sigma_{V_1}^2\notag\\
&=\frac{\INR_2}{\SNR_1\SNR_2+\INR_1\INR_2-2\sqrt{\SNR_1\SNR_2\INR_1\INR_2}\cos\theta}\sigma_{V_1}^2 \label{eq:cogcooppvtpower2}
\end{align}
When the channel matrix is singular, we set\footnote{In fact, in a
region where the channel matrix is ill-conditioned, we do not employ
cooperative private message.} $\sigma_{V_1}^2=0$, i.e., there is no
cooperative private message.
The variance parameters must satisfy the power constraint
\begin{align*}
  \Var{X_{U_i}}+\Var{X_{V_i}}\leq 1, \quad i = 1, 2
\end{align*}
The destinations receive
\begin{align*}
  Y_3 = & h_{13}(W_1+U_1)+h_{23}W_2+V_1+h_{23}U_2+Z_3\\
  Y_4 = & h_{24}(W_2+U_2)+h_{24}W_1+h_{14}U_1+Z_4
\end{align*}
In Theorem~\ref{thm:half-duplex}, as mentioned earlier, we set $\bp 21 =
\bp 14 = \bp 23 = \rly 123 = \rly 214 = 0$, i.e., only $\bp 12$ is
non-zero, in general.

In appendix~\ref{app:cogachievability} we show that with the above choice
of auxiliary random variables, there are power and rate allocations under
which we achieve an $R_1$ which is within $R_0$ of the point-to-point
capacity $C_0=\log(1+\SNR_1)$ of the primary link and an $R_2$ which is
within a constant of $\overline{C_{R_0}}$ as defined in
Theorem~\ref{thm:cog}. Specifically, we prove that
\begin{thm}\label{thm:cogachievability}
If $R_0>7$,
\begin{align*}
\CRo \geq \overline{\CRo}-23-2R_0.
\end{align*}
\end{thm}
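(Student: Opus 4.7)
The plan is to instantiate the general half-duplex scheme of Theorem~\ref{thm:half-duplex} with the Gaussian auxiliary random variables defined just before the theorem, and to choose powers and rates so that each of the four cuts $u_1,u_2,u_3,u_4$ in Theorem~\ref{thm:cog} is matched within a constant. Following the cognitive LDM analysis of Section~\ref{sec:cog_LDM}, I will fix $\d{\A}=1$, $\d{\B}=\delta$, $\d{\C}=0$, take $\bp{2}{1}=\bp{1}{4}=\bp{2}{3}=0$ and $\rly{1}{2}{3}=\rly{2}{1}{4}=0$, and leave only the source-to-source bit-pipe $\bp{1}{2}$ active. The Gaussian analogue of the LDM value $\beta-n_1$ is the per-mode-$\A$ rate $\bp{1}{2}/\delta\approx \log(1+\CNR)-\log(1+\SNR_1)$, which source~1 will carry to source~2 in mode~$\B$ by superposition coding on top of its direct transmission to destination~3.

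First, I would handle the primary's rate budget. In mode~$\B$ the channel from source~1 is a Gaussian broadcast channel to source~2 (capacity $\log(1+\CNR)$) and destination~3 (capacity $\log(1+\SNR_1)$). Superposition coding with the natural ordering achieves $R_{1\B}\geq \log(1+\SNR_1)-O(1)$ while simultaneously realizing the bit-pipe $\bp{1}{2}=\delta(\log(1+\CNR)-\log(1+\SNR_1))^+$ via \eqref{eq:spcond1}. In mode~$\A$, I would choose $\sigma_{U_1}^2=\min(1,1/\INR_2)$ and use the zero-forcing allocation \eqref{eq:cogcooppvtpower1}--\eqref{eq:cogcooppvtpower2} for $V_1$, then invoke the $R_{W_1}+R_{U_1}+R_{V_1}$-type constraints in Theorem~\ref{thm:IFconf} to get $R_{1\A}\geq \log(1+\SNR_1)-O(1)$. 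Combining via Theorem~\ref{thm:half-duplex} gives $R_1=(R_{1\A}+\delta R_{1\B})/(1+\delta)\geq C_0-O(1)$, and the threshold $R_0>7$ is set so that this $O(1)$ loss fits inside the back-off budget.

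Second, I would read off the secondary rate. With $\sigma_{U_2}^2=\min(1,1/\INR_1)$ the private signal $U_2$ appears below the noise level at destination~3; by the zero-forcing choice $V_1$ is cancelled at destination~4 and arrives at destination~3 with power proportional to $\sigma_{V_1}^2$, where destination~3 can treat it as additional interference to $W_2$ during its successive decoding. Plugging these Gaussian inputs into the Han--Kobayashi-style inequalities of Theorem~\ref{thm:IFconf}, and using the standard bounds $\log(1+x+y)\leq \log(1+x)+\log(1+y)$ and $\log(1+x)\geq \log(x)-1$, produces four bounds on $R_2$ that match $u_1,u_2,u_3,u_4$ up to additive constants. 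The extra $R_0$ terms appearing in $u_2,u_3,u_4$ are precisely the share of the back-off budget that the primary spends on cooperation and on trading mode-$\B$ time for $\bp{1}{2}$; optimizing over $\delta\geq 0$ then yields $R_2\geq \overline{\CRo}-23-2R_0$.

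The main obstacle will be the ill-conditioned regime where $|h_{13}h_{24}-h_{14}h_{23}|$ is small, so that the zero-forcing powers in \eqref{eq:cogcooppvtpower1}--\eqref{eq:cogcooppvtpower2} blow up and the cooperative private message must be disabled by setting $\sigma_{V_1}^2=0$. There I need to argue separately that $\overline{\CRo}$ itself is already close, within the same constant, to what the scheme without a $V_1$ component achieves; the key observation is that in this singular regime the cut $u_3$ (which does not involve $\CNR$) becomes the binding constraint, mirroring the LDM remark that when $n_1+n_2=\alpha_1+\alpha_2$ cooperation does not help. A secondary bookkeeping burden is partitioning the $(\SNR_1,\SNR_2,\INR_1,\INR_2,\CNR)$ space into subregions analogous to the LDM cases ($\beta\leq \alpha_2\vee n_1$, $n_2\leq \alpha_1\leq n_1$, and $\beta>\alpha_2\vee n_1$ with the channel well-conditioned), and in each subregion verifying that the specific power split above makes the binding inequality in Theorem~\ref{thm:IFconf} match the corresponding $u_k$ within a uniform additive constant; these constants then sum to the $23+2R_0$ gap.
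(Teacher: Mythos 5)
Your proposal follows essentially the same route as the paper: same scheduling ($\d{\A}=1,\d{\B}=\delta,\d{\C}=0$, only $\bp{1}{2}$ active), superposition in mode~$\B$ realizing the bit-pipe alongside $R_{1\B}\approx\log(1+\SNR_1)$, zero-forcing cooperative private in mode~$\A$, Etkin--Tse--Wang private-power levels, the same degenerate/well-conditioned case split, and the same role for $R_0>7$ in absorbing the primary's constant loss in mode~$\A$. The paper merely organizes the constant-gap bookkeeping through an explicit intermediate LDM quantity $\overline{\CRoi{LDM}}$ (Lemmas~\ref{lem:cogGaussianLDMlink} and~\ref{thm:cogGaussianLDMachievability}) and a small received-power lemma for $V_1$ at destination~3 (Lemma~\ref{lem:cogreceivepowers}), rather than comparing directly to $u_1,\dots,u_4$ as you propose; this is an organizational rather than substantive difference.
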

To prove the converse part of Theorem~\ref{thm:cog}, we need the following theorem that
is similar to Theorem~\ref{thm:cogLDconverse}. It is proved in appendix~\ref{app:cogouterboundtheorem}.
\begin{thm}\label{thm:cog_outerbound}
  The capacity region $\mathscr{C}$ is contained within $\bigcup_{\delta}\mathscr{C}(\delta)$, where $\mathscr{C}(\delta)$
  is the set of rate pairs $(R_1, R_2)$ satisfying
  \begin{align*}
    R_2\leq & 1+\frac{1}{1+\delta}\log(1+\SNR_2P_{2A})\\
    R_1+R_2\leq &1+\frac{1}{1+\delta}\Big[\log
    (1+2\SNR_2P_{2A}+2\INR_2P_{1A})+\delta\log(1+(\SNR_1+\INR_2+\CNR)P_{1B})\\
    &+\log(1+\frac{\SNR_1P_{1A}}{1+\INR_2P_{1A}})\Big]\\
    R_1+R_2\leq &2+\frac{1}{1+\delta}\left[\log(1+2\SNR_1P_{1A}+2\INR_1P_{2A})+\delta \log(1+\SNR_1P_{1B})+\log (1+\frac{\SNR_2P_{2A}}{1+\INR_1P_{2A}})\right]\\
    2R_1+R_2\leq &3+\frac{1}{1+\delta}\Big[\log(1+2\SNR_1P_{1A}+2\INR_1P_{2A})+\delta
    \log(1+\SNR_1P_{1B})+\log(1+\frac{\SNR_1P_{1A}}{1+\INR_2P_{1A}})\\
    &+\log(1+\INR_2P_{1A}+\frac{2\SNR_2P_{2A}+\INR_2P_{1A}}{1+\INR_1P_{2A}})+\delta \log(1+(\SNR_1+\INR_2+\CNR)P_{1B})\Big]
  \end{align*}
  with power constraint
  \begin{align*}
    \frac{P_{1A}+\delta P_{1B}}{1+\delta}\leq 1, \quad
    \frac{P_{2A}}{1+\delta}\leq 1, \quad P_{2B} = 0.
  \end{align*}
\end{thm}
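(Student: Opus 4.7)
The plan is to prove each of the four inequalities individually via Fano's inequality combined with genie-aided arguments in the spirit of Etkin--Tse--Wang, adapted to the half-duplex cognitive setting. Fix a length-$L$ code of rate $(R_1,R_2)$ with vanishing error probability, and condition on the realized schedule $\varphi\in\{\A,\B\}^L$, which is a (possibly random) function of $W_2$ and $Y_2^{L-1}$. Given $\varphi$, let $L_{\A},L_{\B}$ be the mode durations, set $\delta=L_{\B}/L_{\A}$, and let $P_{1A},P_{1B},P_{2A}$ be the per-mode average powers (with $P_{2B}=0$), so that the overall power constraint translates to $(P_{1A}+\delta P_{1B})/(1+\delta)\le 1$ and $P_{2A}/(1+\delta)\le 1$. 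The additive constants $+1,+2,+3$ appearing in the theorem will absorb the scheduling entropy $L^{-1}H(\varphi)\le 1$ bit per channel use (once for each rate variable revealed to a destination) plus $O(1)$ slack from the genie construction.

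For the $R_2$ bound, use the cognitive property $Y_1\equiv 0$, so $X_1^L$ is a function of $W_1$ alone and $I(W_2;Y_4^L)\le I(W_2;Y_4^L\mid W_1)$. Conditioned on $W_1$, in mode~$\B$ the output $Y_{4,t}$ is pure noise, while in mode~$\A$ it reduces to the single-user channel $h_{24}X_{2,t}+Z_{4,t}$. Gaussian maximum-entropy together with Jensen over the $L_{\A}$ mode-$\A$ slots yields $I(W_2;Y_4^L\mid W_1)\le L_{\A}\log(1+\SNR_2 P_{2A})$, which gives the stated bound after dividing by $L$ and absorbing the schedule entropy.

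For the two $R_1+R_2$ inequalities, start from $L(R_1+R_2)\le I(W_1;Y_3^L)+I(W_2;Y_4^L)$ and install a genie at one of the destinations. For the second inequality, give destination~$4$ the genie $S_1^L=h_{14}X_1^L+\tilde Z^L$, where $\tilde Z$ has the same marginal as $Z_3$; using the chain rule and the ETW trick of bounding $I(W_1;Y_3^L\mid W_2)$ by $I(W_1;Y_3^L,S_1^L\mid W_2)$, the mode-$\A$ contribution splits into a ``sum'' term $\log(1+2\SNR_2 P_{2A}+2\INR_2 P_{1A})$ coming from the MIMO cut at destination~$4$ and a ``difference'' term $\log(1+\SNR_1 P_{1A}/(1+\INR_2 P_{1A}))$ coming from the effective link at destination~$3$ treating $X_2$ as noise. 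The mode-$\B$ contribution is the three-receiver broadcast bound $\delta\log(1+(\SNR_1+\INR_2+\CNR)P_{1B})$, reflecting that source~$1$'s transmission is observed simultaneously by destination~$3$ (gain $\SNR_1$), destination~$4$ (gain $\INR_2$), and the listening source~$2$ (gain $\CNR$). The third inequality is the symmetric counterpart with the genie $S_2^L=h_{23}X_2^L+\tilde Z^L$ supplied to destination~$3$; now source~$1$ in mode~$\B$ sees no competing transmitter at destination~$3$, so the corresponding mode-$\B$ term simplifies to $\delta\log(1+\SNR_1 P_{1B})$. The $2R_1+R_2$ bound follows by combining both genies and splitting $L(2R_1+R_2)$ into the sum of $I(W_1;Y_3^L)$, $I(W_1;Y_3^L,S_1^L\mid W_2)$, and $I(W_2;Y_4^L,S_2^L)$, each of which decomposes into mode-$\A$ and mode-$\B$ pieces exactly as above.

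The main technical obstacle will be the mode-$\B$ term involving $\CNR$. Because source~$2$ is actively listening and can adapt $X_{2,t}$ in mode~$\A$ to its mode-$\B$ observations of $X_{1,t}$, bounding the joint effect of $(Y_2^L,Y_3^L,Y_4^L)$ in mode~$\B$ requires showing that no rate beyond the three-receiver broadcast capacity is obtainable, despite this causal coupling back into mode~$\A$; the cleanest route is to upper-bound the relevant conditional mutual information by the capacity of a MIMO point-to-point channel from $X_1$ to $(h_{13}X_1+Z_3,h_{14}X_1+Z_4,h_{12}X_1+Z_2)$. A secondary subtlety is that $\varphi$ is random: after establishing each inequality conditional on $\varphi$, one must take expectations, use concavity of $\log$ (Jensen) to pull averages inside the logarithms in the per-mode power variables, and finally enlarge the outer region to $\bigcup_\delta\mathscr{C}(\delta)$ to absorb the residual variability in $\delta$ and the $P$'s.
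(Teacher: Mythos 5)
Your proposal correctly identifies the overall shape of the argument (Fano plus genie, mode decomposition, absorbing one bit per rate variable for the schedule, a MIMO broadcast cut for the mode-$\B$ term involving $\CNR$), but there is a genuine gap in the choice of genies that the remainder of your write-up does not resolve. The genies you propose, $S_1^L = h_{14}X_1^L + \tilde Z^L$ and $S_2^L = h_{23}X_2^L + \tilde Z^L$, are the standard Etkin--Tse--Wang interference-signal genies for a \emph{non-cooperating} interference channel. In the present channel $X_2^L$ is \emph{not} a deterministic function of $W_2$: it depends causally on $Y_2^L$ through the generalized feedback, so the usual ETW step of rewriting $H(\cdot \mid W_2)$ as $H(\cdot \mid X_2^n, W_2)$ does not go through, and the side information $S_1^L$ given to destination 3 does not produce the $\CNR$ term at all --- in mode $\B$ it only yields $\log(1+(\SNR_1+\INR_2)P_{1B})$, missing the listener's channel. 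You notice this yourself when you describe the ``three-receiver broadcast bound'' and flag it as the ``main technical obstacle,'' but you do not say how the genie is to be modified so that $Y_2^L$ actually appears inside the relevant mutual informations.

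The paper's route is precisely to make $Y_2$ a first-class citizen of the genie. It first proves a Tuninetti/Telatar--Tse-style outer bound (Theorem~\ref{thm:cog-gen-ob}) for a class of interference channels with generalized feedback, in which the single-letter bounds have the form $I(X_1;Y_2,Y_3\mid Y_4,X_2,Q) + I(X_1,X_2;Y_4\mid Q)$ and (for $2R_1+R_2$) $I(X_1,X_2;Y_3\mid Q) + I(X_1;Y_2\mid Q) + I(X_1,X_2;Y_4\mid U_2,Y_2,Q) + I(X_1;Y_3\mid X_2,Y_2,Y_4,Q)$ with $U_2$ the degraded component of $Y_3$; here $Y_2$ appears explicitly and the auxiliary $U_1,U_2$ are the interference components, not genies injected later. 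Applying this to the half-duplex channel by treating $(X_2,S_2)$ as the secondary input then gives the schedule-entropy terms $H(S_2)$ mechanically (the $+1,+2,+3$), without conditioning on the realized schedule $\varphi$, which is cleaner because $\varphi$ (being a function of $W_2$ and $Y_2^{L-1}$) carries information about $W_2$ and conditioning on it changes the Fano starting point. Finally, the paper's $2R_1+R_2$ proof is a specific three-way sum of Fano bounds whose $\pm H(U_2^n\mid Y_2^n)$ and $\pm H(Y_4^n\mid Y_2^n,W_2)$ terms cancel; your proposed decomposition into $I(W_1;Y_3^L)$, $I(W_1;Y_3^L,S_1^L\mid W_2)$, $I(W_2;Y_4^L,S_2^L)$ does not exhibit this cancellation, and because of the feedback it is not clear that the residual entropy terms can be controlled. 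To repair your proposal you would need to (i) replace the side information on the dest-3 side by $(Y_2^L, Y_4^L)$ rather than $S_1^L$, (ii) condition the dest-4 bound on $W_2$ and supply $(Y_2^L, U_2^L)$ so that $X_2^L$ becomes a deterministic function of the conditioning, and (iii) handle the schedule by treating $S_2$ as part of the channel input rather than conditioning on $\varphi$ --- at which point you have essentially reconstructed the paper's proof.
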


Setting the power terms to their maximum possible value, i.e., $P_{iA} =
1+\delta, P_{1B} =\frac{1+\delta}{\delta}, i = 1, 2$, we get a new
outer bound on the capacity region that is easier to use. The following lemma is shown in
appendix~\ref{app:cogouterboundlemma}.
\begin{lem}\label{lem:cog_outerbound}
  The capacity region $\mathscr{C}$ is contained within $\bigcup_{\delta}\mathscr{C}(\delta)$, where $\mathscr{C}(\delta)$
  is the set of rate pairs $(R_1, R_2)$ satisfying
  \begin{align*}
    R_2\leq & \frac{1}{1+\delta}\log(1+\SNR_2)+2\\
    R_1+R_2\leq &\frac{1}{1+\delta}\left[\log (1+2\SNR_2+2\INR_2)+\delta\log(1+(\SNR_1+\INR_2+\CNR))+\log(1+\frac{\SNR_1}{1+\INR_2})\right]+3\\
    R_1+R_2\leq &\frac{1}{1+\delta}\left[\log(1+2\SNR_1+2\INR_1)+\delta \log(1+\SNR_1)+\log (1+\frac{\SNR_2}{1+\INR_1})\right]+4\\
    2R_1+R_2\leq &\frac{1}{1+\delta}\Big[\log(1+2\SNR_1+2\INR_1)+\delta
    \log(1+\SNR_1)+\log(1+\frac{\SNR_1}{1+\INR_2})\\
    &+\max(\log(1+\INR_2+\frac{2\SNR_2+\INR_2}{1+\INR_1}), \log(1+2\INR_2))+\delta \log(1+(\SNR_1+\INR_2+\CNR))\Big]+6
  \end{align*}
\end{lem}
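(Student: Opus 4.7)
The plan is to derive Lemma~\ref{lem:cog_outerbound} from Theorem~\ref{thm:cog_outerbound} by replacing the power variables in each of the four inequalities by the extremal admissible values $(P_{1A},P_{2A},P_{1B})=(1+\delta,1+\delta,(1+\delta)/\delta)$ and absorbing the resulting overhead into the additive constants $2,3,4,6$.

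\textbf{Step 1 (Reduce to extremal powers).} Given any achievable $(R_1,R_2)$, Theorem~\ref{thm:cog_outerbound} furnishes a $\delta$ and powers satisfying all four inequalities, so it is enough to show the new RHS dominates the supremum of the old RHS over the feasible power set. In the $R_2$ bound and the two $R_1+R_2$ bounds every log term is monotone non-decreasing in each power it contains: affine-in-$P$ arguments are trivially so, and for the composite $\log(1+\SNR_i P_{iA}/(1+\INR_j P_{iA}))$ the derivative with respect to $P_{iA}$ equals $\SNR_i/(1+\INR_j P_{iA})^2>0$. Hence the supremum is attained at the boundary. In the $2R_1+R_2$ bound the sub-term $\log\bigl(1+\INR_2 P_{1A}+(2\SNR_2 P_{2A}+\INR_2 P_{1A})/(1+\INR_1 P_{2A})\bigr)$ has $\partial/\partial P_{2A}$ of sign $\mathrm{sgn}(2\SNR_2-\INR_1\INR_2 P_{1A})$, so its supremum over $P_{2A}\in[0,1+\delta]$ is attained at an endpoint. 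The endpoint $P_{2A}=0$ yields $1+2\INR_2 P_{1A}$, and $P_{2A}=1+\delta$ yields the rational expression; the two cases correspond exactly to the two arguments of the $\max$ appearing in the lemma.

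\textbf{Step 2 (Peel off $(1+\delta)$).} After Step~1 every log is of the form $\log(1+c(1+\delta))$ or $\log(1+c(1+\delta)/\delta)$ for a nonnegative channel-dependent $c$. The elementary estimate $1+cP\le(1+c)P$ valid for $P\ge 1$ gives
\begin{align*}
\log(1+c(1+\delta)) &\le \log(1+c)+\log(1+\delta),\\
\log(1+c(1+\delta)/\delta) &\le \log(1+c)+\log((1+\delta)/\delta).
\end{align*}
For the composite terms the same idea works after a short cross-multiplication: e.g.\ $1+\SNR_1(1+\delta)/(1+\INR_2(1+\delta))\le(1+\delta)\bigl(1+\SNR_1/(1+\INR_2)\bigr)$, and for the $2R_1+R_2$ rational sub-term evaluated at either endpoint, $1+f\le(1+\delta)\cdot(\text{the corresponding max-argument})$. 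Each substituted log therefore contributes exactly one overhead of $\log(1+\delta)$ or $\log((1+\delta)/\delta)$.

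\textbf{Step 3 (Bound the residuals and count).} Dividing by $1+\delta$, each overhead becomes $\log(1+\delta)/(1+\delta)$ or $(\delta/(1+\delta))\log((1+\delta)/\delta)$. Setting $w=1+\delta$ in the first gives $\log w/w\le 1/(e\ln 2)<1$, and setting $u=1/(1+\delta)$ in the second gives $-(1-u)\log(1-u)\le 1/(e\ln 2)<1$. Counting one residual per substituted log term --- one for the $R_2$ bound, three for each $R_1+R_2$ bound, and five for the $2R_1+R_2$ bound --- and adding to the constants $1,1,2,3$ already present in Theorem~\ref{thm:cog_outerbound} produces totals strictly less than $2,3,4,6$, respectively, which matches the lemma. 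The only genuinely non-routine step is the sign-change analysis of $\partial/\partial P_{2A}$ in the composite $2R_1+R_2$ sub-term; a naive substitution $P_{2A}=1+\delta$ would lose a factor growing with $\INR$ in regimes where $\INR_1\INR_2 P_{1A}>2\SNR_2$, and the two-endpoint description is precisely what produces the $\max(\cdot,\cdot)$ in the statement. Everything else is careful but routine logarithmic bookkeeping.
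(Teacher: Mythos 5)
Your proposal is correct and takes essentially the same route as the paper's proof in Appendix~\ref{app:cogouterboundlemma}: plug in the extremal powers (using monotonicity for all terms except the $P_{2A}$-dependent rational subexpression in the $2R_1+R_2$ bound, where the endpoint analysis produces the $\max$), factor out a $(1+\delta)$ or $(1+\delta)/\delta$ per log, and bound each residual by $\frac{1}{e\ln2}<1$ to absorb them into the stated constants. One small imprecision: the derivative you quote, $\SNR_i/(1+\INR_j P_{iA})^2$, is the derivative of the argument inside the logarithm, not of the composite $\log(\cdot)$ expression itself; since both are positive this does not affect the monotonicity conclusion or the validity of the argument.
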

Setting $R_1 = \log(1+\SNR_1)-R_0$ in this lemma we get $C_{R_0}\leq
\overline{C_{R_0}}$.






\appendices
\section{Proof of Theorem~\ref{thm:sumrateachievability}}
\label{app:sumrateachievability}

We prove this sum-rate achievability result in two steps. Instead of
directly comparing $\overline{\Ci {sum}}$ with the rate achievable by the
coding scheme in section~\ref{sec:achievability}, we will first show that
the $\overline{\Ci {sum}}$ is within a constant of $\overline{\Cii {sum}
{LDM}}$, a quantity we define below inspired by the result for the linear
deterministic model. We will then prove that the coding scheme in
section~\ref{sec:achievability} can be used to achieve a sum-rate which is
within a constant of $\overline{\Cii {sum}{LDM}}$.  Specifically, we prove
the following two lemmas which together imply
Theorem~\ref{thm:sumrateachievability}. To simplify the notation, let $\x = \SNR, \y = \INR, \z = \CNR$, and define
$n_D = \lfloor \log \x\rfloor^+, n_I = \lfloor \log \y\rfloor^+, n_C = \lfloor \log \z\rfloor^+$.
\begin{lem}\label{lem:GaussianLDMlink}
  Define
\begin{align*}
  \overline{\Cii {sum} {LDM}} = \max_{\delta}\overline{\Cii {sum} {LDM}}(\delta) =
\max_{\delta}\min(u_1'-6, u_2'-4, u_3', u_4'-4, u_4-10)
\end{align*}
\indent where
\begin{align*}
  u_1' &= \frac{2}{2+\delta}\left(\delta n_D+\max\{n_D,
  n_C\}\right)\\
  u_2' &= \frac{1}{2+\delta}\left(\delta \max\{2n_D-n_I,
n_I\}+n_D+\max\{n_D, n_I, n_C\}\right)\\
  u_3' &= \frac{2}{2+\delta}\left(\delta\max\{n_I, n_D-n_I\}+\max\{n_D, n_I, 
  n_C\}\right)\\
  u_4' &= \frac{2(1+\delta)}{2+\delta}\max\{n_D, n_I\}
\end{align*}
and $u_4$ is as defined in Theorem~\ref{thm:sumrate}.
Then $\overline{\Ci {sum}}\leq\overline{\Cii {sum} {LDM}}+10$.
\end{lem}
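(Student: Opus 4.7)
The plan is to reduce the lemma to termwise comparisons for each fixed $\delta$. Observe that both $\overline{\Ci{sum}}$ and $\overline{\Cii{sum}{LDM}}$ are of the form $\max_\delta \min_i(\cdot)$, so it suffices to show that for every $\delta \ge 0$,
\begin{align*}
u_1(\delta) &\le u_1'(\delta)+4, & u_2(\delta) &\le u_2'(\delta)+6,\\
u_3(\delta) &\le u_3'(\delta)+10, & u_4(\delta) &\le u_4'(\delta)+6.
\end{align*}
Together with the trivial $u_4 \le (u_4-10)+10$, these imply $\min(u_1,u_2,u_3,u_4) \le \min(u_1'-6, u_2'-4, u_3', u_4'-4, u_4-10) + 10$ for every $\delta$, and maximizing over $\delta$ on both sides yields the claimed $10$-bit gap.

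The workhorse will be the elementary gap inequality $\log(1+x) \le \lfloor\log x\rfloor^{+}+2$, together with its many-term version $\log(1+a_1+\cdots+a_k) \le \max_j\lfloor\log a_j\rfloor^{+}+\log k+2$. Since the prefactors $\delta/(2+\delta)$ and $2/(2+\delta)$ never exceed $1$ and are independent of the channel parameters, additive constants inside each log translate into comparable constants outside. With this, the $u_1$ bound follows immediately from bounding $\log(1+\SNR)$ and $\log(1+\SNR+\CNR)$ separately. For $u_4$, the one non-routine step is to observe that $\cos\theta\ge-1$ implies
\[
1+4\SNR+4\INR+\SNR^{2}+\INR^{2}-2\SNR\,\INR\cos\theta \le (2+\SNR+\INR)^{2},
\]
so the first $\log$ inside $u_4$ is at most $2\max(n_D,n_I)+6$; bounding $2\log(1+\SNR+\INR)$ the same way and summing gives $u_4 \le u_4'+6$.

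For $u_2$, the key lemma will be
\[
\log(1+2\SNR+2\INR) + \log\!\Bigl(1+\tfrac{\SNR}{1+\INR}\Bigr) \le \max(2n_D-n_I,\,n_I) + 6,
\]
proved by a case split on $n_D\ge n_I$ versus $n_D<n_I$: in the former regime both logs contribute (using $\SNR/(1+\INR)\le \SNR/\INR$ when $\INR\ge 1$) and sum to $2n_D-n_I+O(1)$; in the latter, the second log is $O(1)$ and the first is $n_I+O(1)$. Combined with $\log(1+\SNR)\le n_D+2$ and $\log(1+\SNR+\INR+\CNR)\le \max(n_D,n_I,n_C)+3$, everything fits within a $6$-bit slack. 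For $u_3$, the crucial algebraic step will be
\[
1+\INR+\frac{2\SNR+\INR}{1+\INR} = \frac{(1+\INR)^{2}+2\SNR+\INR}{1+\INR} \le 2+\INR+\frac{2\SNR}{1+\INR},
\]
whose $\log$ is at most $\max(n_I,(n_D-n_I)^{+})+5$ after a small case split on $\INR\ge 1$ versus $\INR<1$. The outer $\max$ with $\log(1+2\INR)\le n_I+3$ then matches the LDM quantity $\max(n_I,n_D-n_I)$ up to an additive constant which, combined with the bound on $\log(1+\SNR+\INR+\CNR)$, accumulates to at most $10$.

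The hardest step will be the $u_3$ bookkeeping: the combination of a $\max$ inside a $\log$, the algebraic rearrangement above, and the case split on whether $\INR<1$ produces the largest slack of any of the four bounds, which is precisely why the LDM offset chosen for $u_3$ in the lemma statement is the full $10$ rather than the smaller $4$--$6$ used for the other terms. Once this slack is tightened carefully, the remaining comparisons are routine applications of the log gap inequality and the lemma follows.
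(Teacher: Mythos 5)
Your proposal is correct and takes essentially the same route as the paper: reduce to termwise inequalities of the form $u_i(\delta)\le u_i'(\delta)+c_i$ for every fixed $\delta$, using the elementary inequality $\lfloor\log x\rfloor^+\le \log(1+x)\le 2+\lfloor\log x\rfloor^+$ and the $\max_\delta\min_i$ structure. Your bookkeeping (in particular the case splits on $n_D\gtrless n_I$ for $u_2$, the algebraic rearrangement of $1+\INR+(2\SNR+\INR)/(1+\INR)$ for $u_3$, and the observation that $\cos\theta\ge -1$ gives $1+4\SNR+4\INR+\SNR^2+\INR^2-2\SNR\INR\cos\theta\le(2+\SNR+\INR)^2$ for $u_4$) all check out. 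One useful thing to note: you explicitly include the comparison $u_4\le u_4'+6$, which the paper's proof does not mention (it only lists the comparisons for $u_1,u_2,u_3$). That step is in fact needed to show that the term $u_4'+6$ on the right-hand side does not cut below $\min(u_1,\dots,u_4)$ -- once it is established, $u_4'+6$ is dominated by $u_4$ and the remaining termwise bounds finish the argument -- so your write-up fills a small gap that the paper passes over as ``easy to verify.''
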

\begin{proof}
  The following inequality is useful for the proof.
  \begin{align*}
    \lfloor\log \x\rfloor^+\leq (\log \x)^+\leq \log(1+\x)\leq &1+(\log \x)^+\leq 2+\lfloor\log \x\rfloor^+, \quad \forall x>0.
  \end{align*}

  It is easy to verify that $u_1\leq u_1'+4, u_2\leq u_2'+6$ and $u_3\leq u_3'+10$. So we get the result.
\end{proof}

Note that in the definition of $\overline{\Cii {sum} {LDM}}$ we have
preserved the term $u_4$ rather than have all the terms as functions of
$n_D, n_I$ and $n_C$. The reason for this is that the linear deterministic
model is too coarse to model the channel phase information. When the
channel matrix becomes ill-conditioned, the term $u_4$ may dominate
$\overline{\Ci {sum}}$ and also have a large gap with respect to $u'_4$.

Next we show that $\overline{\Cii {sum} {LDM}}$ can be achieved within a constant.
\begin{lem}\label{lem:GaussianLDMachievability}
  $\Ci {sum} \geq \overline{\Cii {sum} {LDM}}-7$.
\end{lem}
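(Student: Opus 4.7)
The plan is to instantiate the coding scheme of Section~\ref{sec:ach.half-duplex} with the symmetric auxiliary random variables specified at the start of Section~\ref{sec:sym_Gaussian}, and to show that suitable choices of the variance parameters $\sigma_W^2,\sigma_U^2,\sigma_V^2,\sigma_{V'}^2$, of the scheduling parameter $\delta$, and of the virtual-channel rates $\bp{ss},\bp{sd}$ and the block-Markov relay rate $\Delta R$ yield an achievable sum rate within $7$ bits of $\overline{\Cii{sum}{LDM}}$. The allocations are read off directly from the LDM scheme of Section~\ref{sec:sym_LDM}, whose explicit rate/power split is tabulated in three parameter regions. Because the LDM cannot distinguish whether the channel matrix is well-conditioned, I partition the Gaussian analysis into the five parameter regions listed in the proof sketch of Theorem~\ref{thm:sumrateachievability} and verify the bound in each.

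First I would dispose of region~(i), $\CNR\le\SNR$ or $\CNR\le 1$ or $\INR\le 1$: here cooperation cannot help beyond a constant, and the standard non-cooperative Han--Kobayashi scheme recovered from Theorem~\ref{thm:IFconf} by setting $\delta=0$ and $\sigma_V^2=\sigma_{V'}^2=0$ already matches the upper bound within a constant in the sense of \cite{EtkinTseWang}. For regions~(ii)--(iv), which correspond respectively to $n_I<n_D<n_C$, $n_D<n_I\le n_C$ and $n_D<n_C\le n_I$, the plan is to mirror the LDM rate/power split: set $\sigma_U^2$ at the noise floor $1/\INR$ of the unintended receiver so the private signal contributes at most one bit of extra interference; choose $\Var{X_{V_i}}=\frac{\SNR+\INR}{\SNR^2+\INR^2-2\SNR\,\INR\cos\theta}\,\sigma_V^2$ so that after zero-forcing beamforming the received power at the intended destination realizes the virtual bit-pipe rate $\bp{ss}$; allocate $\sigma_{V'}^2$ to realize $\bp{sd}$; and give the remainder of unit power to $\sigma_W^2$. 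Take $\delta$ and $\Delta R$ exactly as in the LDM case analysis of Section~\ref{sec:sym_LDM}. Under these substitutions, each mutual-information expression appearing in Theorem~\ref{thm:IFconf} and each superposition-coding constraint in Theorem~\ref{thm:half-duplex} will be within a bounded additive gap of its LDM counterpart $l_i(\delta)$. After absorbing those gaps into the slack constants $-6,-4,0,-4$ built into the definition of $\overline{\Cii{sum}{LDM}}$, the achievable sum rate is at least $\overline{\Cii{sum}{LDM}}-7$ throughout regions~(ii)--(iv).

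The genuinely new work is region~(v), $\SNR\approx\INR<\CNR$, where the LDM picture degenerates ($n_D=n_I$) but the Gaussian channel can still be well-conditioned in terms of the angle $\theta$. Here the beamforming denominator $\SNR^2+\INR^2-2\SNR\,\INR\cos\theta$ governs the cost of delivering cooperative private messages: when $\cos\theta$ is bounded away from $1$ the matrix inversion costs only a constant number of bits and a cooperative private rate of order $u_4$ is deliverable; when $\cos\theta$ is close to $1$ the matrix is ill-conditioned and I fall back on $\sigma_V^2=0$ (no cooperative private message), in which case the $\min$ defining $\overline{\Cii{sum}{LDM}}$ is driven by $u_4-10$ anyway, so no further loss is incurred. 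The technical task is to plug the zero-forcing variance into the mutual information $I(X_{W_1},X_{U_1},V_1;Y_3\mid X_{W_2},X_{V_2'})$ at destination~3 and lower-bound it by $\tfrac{1}{2+\delta}$ times a quantity within $10$ bits of $\delta\log(1+4\SNR+4\INR+\SNR^2+\INR^2-2\SNR\,\INR\cos\theta)+2\log(1+\SNR+\INR)$.

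The main obstacle I anticipate is bookkeeping: regions~(iii) and~(iv) split further into sub-cases, corresponding to whether the relay rate $\Delta R$ is turned on or off depending on how strong the cooperation is relative to the direct and interference links, and each sub-case requires its own verification that the Gaussian mutual-information bounds match the LDM quantities $n_D,n_I-n_D,n_C-n_D$ up to an additive constant. Once the per-region bounds are established and the angle-dependent bound of region~(v) is verified via the zero-forcing computation above, assembling the case analysis and optimizing over $\delta$ yields the lemma, with the remaining routine inequalities deferred to the appendix.
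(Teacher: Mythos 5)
Your proposal follows the same five-region decomposition and the same basic scheme as the paper's proof---superposition coding with cooperative-private zero-forcing and block-Markov relaying, with powers read off from the LDM allocation---so the approach is essentially correct.

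There is, however, one genuine gap: in regions~(iii) and~(iv) the paper does not only split on whether the relay rate $\Delta R$ is turned on; it also separately handles the sub-case $\SNR\le 1$ (i.e.\ $n_D=0$) within each region. The power allocation you describe for modes~$\B$ and~$\C$---reserve roughly $1-1/\SNR$ of the unit budget to carry a direct-link message and use the remainder to seed $\bp{ss},\bp{sd}$---is infeasible when $\SNR\le 1$, since the direct-link share would be negative. In that regime essentially all data must pass through the other source (either via relaying in modes~$\B,\C$ or via the cooperative-private message in mode~$\A$), which is a qualitatively different split and requires its own verification that the resulting rate still tracks $u_1',u_2',u_4'$. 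This is not ``bookkeeping'' absorbed by the slack constants; without a separate allocation for $\SNR\le 1$ the scheme, as you have stated it, does not produce a valid codebook.

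Two further remarks are minor. Your blanket choice $\sigma_U^2=1/\INR$ across regions~(ii)--(iv) is not what the paper does in the strong-interference regions, where the private message is switched off entirely ($R_U=0$), matching the LDM allocation; keeping a nonzero $\sigma_U^2$ does not break the bound (the constraints pin $R_U$ to $O(1)$ anyway), but you should say so explicitly. In region~(v) you propose to set $\sigma_V^2=0$ when $\cos\theta$ is near~$1$, whereas the paper always uses $\Var{X_V}=1/2$ and lets the rate constraint $R_V\le\log(1+\beta_1\SNR/2)\wedge \bp{ss}$ degrade automatically as the matrix becomes ill-conditioned; both routes work, but yours needs a short argument that in the ill-conditioned case the minimum defining $\overline{\Cii{sum}{LDM}}$ is indeed attained at $u_4-10$ so that nothing is lost.
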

\begin{proof}
  To simplify the notation, let
  \begin{align*}
    \beta_1 &= \frac{\x^2+\y^2-2\x\y\cos\theta}{\x(\x+\y)}\\
    \beta_2 &= \frac{\x^2+\y^2-2\x\y\cos\theta}{\y(\x+\y)}.
  \end{align*}
  Then, for the auxiliary random variables in
  section~\ref{sec:sumrateachievability}, we have $\sigma_V^2 = \beta_1 \x
  \Var{X_V}$. We note that $\beta_1\x = \beta_2\y$, and it is easy to show the following properties for $\beta_1$ and
  $\beta_2$.

  \begin{enumerate}
      \item When $\frac{1}{2}\leq \frac{\x}{\y}\leq 2$, we have $\beta_i\leq 3, i = 1, 2$.
      \item When $\frac{\x}{\y}\geq 2$ we have $\beta_1\geq \frac{1}{6}$, and when $\frac{\y}{\x}\geq 2$, we have $\beta_2\geq
      \frac{1}{6}$.
  \end{enumerate}

To satisfy the average power constraints, we always allocate the source powers such that local average power constraints are satisfied, i.e., the average power for each mode is at most $1$.
We consider five different regions which together
cover all possibilities. In the first four regions, we consider the coding schemes for the
corresponding LDM and show that the Gaussian channel can allocate the same rates for all the messages
up to some constant. The last region is unique for Gaussian channel, where following the scheme for the
LDM can be strictly suboptimal. The sum rate is
\begin{align*}
  \Ri {sum} = \frac{1}{2+\delta}(\delta R_\A+R_\B+R_\C+2\Delta R).
\end{align*}

\noindent{\bf Region 1:} $\z\leq \x$ or $\z\leq 1$ or $\y\leq 1$.

In this region we do not use any cooperation ($\delta_B=\delta_C=0$ in
Theorem~\ref{thm:half-duplex}). The scheme reduces to Han and Kobayashi's
scheme for the interference channel\cite{EtkinTseWang, Carleial75}, and it is not hard to show that $\overline{\Cii {sum} {LDM}}$ can be
achieved within 6 bits in this region.

\noindent{\bf Region 2:} $2\y<\x<\z$ and  $\y>1$.

This region corresponds to the case $n_I< n_D< n_C$ for the LDM. The sources share messages with each other and there is no relay.

In this region, $\beta_1\geq \frac{1}{6}$ is a finite constant bounded away
from 0. We set $\bp sd = 0, \Delta R = 0$. In
modes $\B$ and \C, each source uses power $1-\frac{1}{\x}$ to send data to
its own destination and uses power $\frac{1}{\x}$ to share bits with the
other source. By superposition coding, the
following rates are achievable.
\begin{align*}
    R_{\B} = R_\C &=  \log\left(1+\frac{(1-\frac{1}{\x})\x}{2}\right)
           \geq (n_D-1)^+\\
    \delta \bp ss &=  \log\left(1+\frac{\z}{\x}\right)
                  \geq (n_C-n_D-1)^+.
\end{align*}
Therefore we can set $R_\B=R_\C=(n_D-1)^+$ and $\delta \bp ss = (n_C-n_D-1)^+$.

For the virtual channel, we take $R_{V'} = 0$ and set powers
$\sigma_W^2 = \frac{1}{3}, \sigma_U^2 = \frac{1}{3\y}, \Var{X_V} =
\frac{1}{3}$. So destination 3 receives $W_1, U_1, W_2, V_1, U_2$ with
powers $\frac{\x}{3}, \frac{\x}{3\y}, \frac{\y}{3}, \frac{\beta_1\x}{3},
\frac{1}{3}$, respectively, and destination 4 gets $W_2, U_2, W_1, V_2,
U_1$ with powers $\frac{\x}{3}, \frac{\x}{3\y}, \frac{\y}{3},
\frac{\beta_1\x}{3}, \frac{1}{3}$, respectively. It is easy to verify that
the following constraints on non-negatives rates imply all the relevant rate
constraints in Theorem~\ref{thm:IFconf}.
\begin{align*}
  2R_W+R_U+R_V& \leq \log\Big(1+\frac{\beta_1\x}{4}\Big)\\
  R_U+R_W& \leq \log\Big(1+\frac{\frac{\x}{\y}+\y}{4}\Big)\\
  R_U& \leq \log\Big(1+\frac{\x}{4\y}\Big)\\
  R_V& \leq \bp ss
\end{align*}
Hence the following non-negative rates are achievable.
\begin{align*}
  2R_W+R_U+R_V& \leq (n_D-2-\log6)^+\\
  R_U+R_W& \leq (\max(n_D-n_I, n_I)-3)^+\\
  R_U& \leq (n_D-n_I-3)^+\\
  R_V& \leq \bp ss
\end{align*}
Setting $R_\A = 2(R_W+R_U+R_V)$, we can achieve
\begin{align*}
  R_\A &= \min \left\{
  \begin{array}{c}
  (2n_D-4-2\log 6)^+\\
  (2\max(n_D-n_I, n_I)+2\bp ss-6)^+\\
  (2n_D-n_I+\bp ss-5-\log 6)^+\\
  \end{array}
  \right\}.
\end{align*}
Therefore the sum rate is
\begin{align*}
  \Ri {sum} &= \frac{1}{2+\delta}(\delta R_\A+R_\B+R_\C) \geq  \min \left\{u_2'-9, u_3'-6, u_4'-10  \right\}.
\end{align*}
Hence $\overline{\Cii {sum} {LDM}}$ can be achieved within 6 bits in this
region.

\noindent{\bf Region 3:} $2\x<\y\leq \z$ and  $\y>1$.

This region corresponds to the case $n_D < n_I < n_C$ for the LDM. The sources share messages with each other and relay is used when the cooperation link is strong. In particular, we consider two subregions as in the LDM. When $n_C$ is small, we will only use the cooperative private signal to improve the virtual channel sum-rate. But when $n_C$ is big enough to achieve the cut-set bound of the virtual channel, we need to use relaying in modes $\B$ and $\C$ ($\Delta R>0$) to further increase the achievable rate.

We set \bp sd = 0. Firstly, we assume that $x>1$ and consider the following two subregions.
\begin{enumerate}
  \item $\displaystyle \y^{\delta}\x\geq \z$. This subregion corresponds to the case $n_C-n_D\leq \delta n_I$ for the LDM. We set $\Delta R =0$. As in Region~2, we can set $R_\B=R_\C=(n_D-1)^+$ and $\delta \bp ss = (n_C-n_D-1)^+$. For the virtual channel, we choose $R_U = R_{V'} = 0$ and set powers $\sigma_W^2 = \frac{1}{2}, \Var{X_V} = \frac{1}{2}$. Then we apply Theorem~\ref{thm:IFconf} as in Region~2, and get
      $
          \Ri {sum} \geq  \min\left\{u_1'-4, u_2'-11\right\}.
      $
      Hence $\overline{\Cii {sum} {LDM}}$ can be achieved within $7$ bits in this case.
  \item $\displaystyle \y^{\delta}\x\leq \z$. This subregion corresponds to the case $n_C-n_D>\delta n_I$ for the LDM. In modes $\B$ and $\C$, sources use power $\frac{1}{3}$ to send data to its
    own destination and $\frac{1}{3}\sqrt{\frac{\y^{\delta}}{\x\z}}$ and $\frac{1}{3\x}$, respectively, to send to the other source and the other destination, respectively. By superposition coding,
    the following are achievable.
    \begin{align*}
      R_\B =R_\C &=
        \log\left(1+\frac{\frac{\x}{3}}{1+\frac{1}{3}+\frac{1}{3}\sqrt{\frac{\x\y^{\delta}}{\z}}}\right)
        \geq (n_D-\log5)^+\\
      \Delta R &=
        \log\left(1+\frac{\frac{\y}{3\x}}{1+\frac{1}{3}\sqrt{\frac{\y^{\delta+2}}{\x\z}}}\right)
        \geq
        \min(n_I-n_D, \frac{1}{2}(n_C-n_D-\delta
        n_I))-2-\log3-\frac{1}{2}\delta\\
      \Delta R + \delta \bp ss &= \log\left(1+\frac{1}{3}\sqrt{\frac{\z\y^{\delta}}{\x}}\right)
      \geq \frac{1}{2}(n_C+\delta n_I-n_D-1)-\log3
    \end{align*}
    Since the condition $\displaystyle \y^{\delta}\x\leq \z$ implies that $\delta n_I-1\leq n_C-n_D$, it is easy to see that we can set
    \begin{align*}
      R_\B &= R_\C = (n_D - \log 5)^+\\
      \delta \bp ss &= (\delta n_I -1-\log 3)^+\\
      \Delta R &=  (\min(n_I-n_D, \frac{1}{2}(n_C-n_D-\delta
      n_I))-2-\log3-\frac{1}{2}\delta)^+.
    \end{align*}
    For the virtual channel, we use the same scheme as in the previous subregion. Then we apply Theorem~\ref{thm:IFconf} as in Region~2, and get
    $
          \Ri {sum} \geq  \min\left\{u_2', u_4'\right\}-11.
    $
    Hence $\overline{\Cii {sum} {LDM}}$ can be achieved within $7$ bits in this case.
\end{enumerate}

Now we consider the case $x\leq1$. As $n_D=0$, no (significant) direct transmission of data from source to destination is possible; all data must pass through the other source. This can happen in one of two ways: relaying in modes $\B$ and $\C$, and cooperative private message for the virtual channel. We note that the power allocation for $x> 1$ might not satisfy the local power constraints in modes $\B$ and $\C$ now. As in the previous case , we consider the following two subregions separately.

\begin{enumerate}
  \item $\y^{\delta}\geq \z$. In modes $\B$ and $\C$, the sources use all
    their power to send data to the other source and get
    \begin{align*}
      R_\B = R_\C = 0, \quad \delta \bp ss = \log (1+\z)\geq n_C.
    \end{align*}
    For the virtual channel, each source relays the shared data to the other
    destination and the direct link signals are treated as interference. It is easy to show that we can achieve
    \begin{align*}
      R_\A &=  2 \min(\log (1+\frac{\y}{1+\x}), \bp ss) \geq 2(\frac{n_C}{\delta}-2),
    \end{align*}
    Therefore the sum-rate is
    $
      \Ri {sum} \geq u_1'-4.
    $
    Hence $\overline{\Cii {sum} {LDM}}$ can be achieved in this case.
  \item $\y^{\delta} \leq \z$.
    In mode $\B$ and $\C$, each source uses powers
    $\frac{1}{2}\sqrt{\frac{\y^{\delta}}{\z}}$ and $\frac{1}{2}$, respectively,
    to share bits with the other source and the other destination, respectively.
    By superposition coding, the following rates are
    achievable.
    \begin{align*}
      \Delta R =
      &\log\Big(1+\frac{\frac{\y}{2}}{1+\frac{1}{2}\sqrt{\frac{\y^{\delta+2}}{\z}}}\Big)
      \geq \min (n_I, \frac{1}{2}(n_C-\delta n_I))-\frac{\delta}{2}-2\\
      \Delta R+\delta \bp ss&=  \log(1+\frac{1}{2}\sqrt{\y^{\delta}\z})
      \geq  \frac{1}{2}(n_C+\delta n_I)-1.
    \end{align*}
    Therefore we can set
    \begin{align*}
      \delta \bp ss &= \Big(\delta n_I-\frac{3}{2}\Big)^+\\
      \Delta R &=  \Big(\min (n_I, \frac{1}{2}(n_C-\delta
    n_I))-\frac{\delta}{2}-2\Big)^+
    \end{align*}
    For the virtual channel, we use the same scheme as in the previous subregion and achieve
    \begin{align*}
      R_\A = 2\min ((n_I-1), \bp ss) \geq 2\Big(n_I-1-\frac{3}{2\delta}\Big).
    \end{align*}
    Therefore the sum-rate is
    $
      \Ri {sum} \geq \min \left\{u_2', u_4'\right\}-4.
    $
    Hence $\overline{\Cii {sum} {LDM}}$ can be achieved in this case.
\end{enumerate}

\noindent{\bf Region 4:} $\x<\z<\y, 2\x<\y$, and $\z>1$.

This region corresponds to the case $n_D < n_C < n_I$ for the LDM. The sources share messages with each other and the other destinations, and relay is used when both the cooperation link and the interference link are strong. In particular, we consider two subregions as in the LDM. When $n_C$ and $n_I$ are small, we will only use the cooperative private signal and the pre-shared public signal to improve the virtual channel sum-rate. But when $n_C, n_I$ are big enough to achieve the cut-set bound of the virtual channel, we need to use relaying in modes $\B$ and $\C$ (i.e., $\Delta R >0$) to further improve the achievable rate.

Firstly, we assume that $x > 1$ and consider the following two subregions.
\begin{enumerate}
  \item $\y\leq \x\y^{\delta}$ or $n_C-n_D+1\leq \delta(n_I-n_D)$. The condition $\y\leq \x\y^{\delta}$ leads to $n_I\leq n_D+\delta n_I+\delta+1$. In mode $\B, \C$, each source uses power $1-1/\x$
    to send data to its own destination and $1/\x-1/\z$ and $1/\z$ to share bits with the other source and the other destination respectively. By superposition coding, the following rates are achievable.
    \begin{align*}
      R_\B &= R_\C = \log(1+\x)-1 \geq (n_D-1)^+\\
      \delta \bp ss &= \log(1+\frac{\z}{\x})-1\geq (n_C-n_D-2)^+\\
      \delta \bp sd &= \log(1+\frac{\y}{\z})\geq (n_I-n_C-1)^+
    \end{align*}
    Therefore we can set the corresponding rates equal to the lower bounds on right-hand side. By the assumption, we have either $\bp ss\geq n_I-n_D$ or $\bp ss+\bp sd\leq n_I+1$.

    For the virtual channel, we take $R_U=0$ and set powers $\sigma_W^2 =
    \frac{1}{3}, \sigma_{V'}^2 = \frac{1}{3}, \Var{X_V} = \frac{1}{3}$. Then we apply Theorem~\ref{thm:IFconf} as in Region~2, and get
    $
      \Ri {sum} \geq  \min \left\{u_1'-9, u_2'-10 \right\}.
    $
    Hence $\overline{\Cii {sum} {LDM}}$ can be achieved within $6$ bits in this case.
  \item $\y\geq \x\y^{\delta}$. In modes $\B$ and $\C$, each source uses a power of $\frac{1}{3}$ to send data to its own destination and $\frac{1}{3\x}$ and
    $\frac{1}{3\sqrt{\y^{1+\delta}\x^{1-2\delta}}}$ to share bits with the other source and the other destination, respectively.  We note that this is a valid local power allocation since we  have $\y^{1+\delta}\x^{1-2\delta}\geq y^{1-\delta}x\geq 1$. By superposition coding, the following rates are achievable.
    \begin{align*}
      R_\B =  R_\C =& \log\Big(1+\frac{\frac{2}{3}}{1+\frac{1}{3}+\frac{\x}{3\sqrt{\y^{1+\delta}\x^{1-2\delta}}}}\Big)
      \geq (n_D-\log 5)^+\\
      \delta \bp ss+\Delta R =&
      \log\Big(1+\frac{\frac{\z}{3\x}}{1+\frac{\z}{3\sqrt{\y^{1+\delta}\x^{1-2\delta}}}}\Big)
      \geq \min\Big(n_C-n_D,
      \frac{1+\delta}{2}n_I-\frac{1+2\delta}{2}n_D\Big)-2-\log
      3-\frac{\delta}{2}\\
      \delta \bp sd+\Delta R =
      &\log\left(1+\frac{1}{3}\sqrt{\frac{\y^{1-\delta}}{\x^{1-2\delta}}}\right)
      \geq \frac{1-\delta}{2}n_I-\frac{1-2\delta}{2}n_D-\frac{1-\delta}{2}-\log3.
    \end{align*}
    Since the condition $\y\geq \x\y^{\delta}$ implies that $n_I+1\geq n_D+\delta n_I$ and $1-\delta\geq 0$, it is easy to verify that we can set
    \begin{align*}
      \delta \bp ss &=  \delta (n_I-n_D)-3-\log3-\frac{\delta}{2}\\
      \delta \bp sd &=  \delta n_D-\frac{3}{2}+\frac{\delta}{2}-\log 3\\
      \Delta R &=
      \min\Big(n_C-n_D-\delta(n_I-n_D),\frac{1-\delta}{2}n_I-\frac{1}{2}n_D\Big)+1.
    \end{align*}
    For the virtual channel, we use the same scheme as in the previous subregion. Then we apply Theorem~\ref{thm:IFconf} as in Region~2, and get
      $
        R_{sum} \geq \min \left\{u_1', u_2' \right\}-9.
      $
    Hence $\overline{\Cii {sum} {LDM}}$ can be achieved within 6 bits in this case.
\end{enumerate}

Now we consider the case $x\leq1$. As $n_D=0$, no (significant) direct transmission of data from source to destination is possible; all data must pass through the other source. This can happen in one of two ways: relaying in modes $\B$ and $\C$, and cooperative private message for the virtual channel. We note that the power allocation for $x> 1$ might not satisfy the local power constraints in modes $\B$ and $\C$ now. As in the previous case , we consider the following two subregions separately.
\begin{enumerate}
  \item $\y^{\delta}\geq \z$. The analysis here is the same as the corresponding case in Region $3$, i.e., $2\x< \y$, $\x<1<\y\leq \z$ and $\y^{\delta}\geq \z$.
  \item $\y^{\delta}< \z$. In modes $\B$ and $\C$, each source uses powers $\frac{1}{2}$ and $\frac{1}{2\sqrt{\y^{1+\delta}}}$ to share bits with the other source and the other destination,
    respectively. By superposition coding, the following rates are achievable.
    \begin{align*}
      \delta \bp ss+\Delta R =
      &\log\Big(1+\frac{\frac{\z}{2}}{1+\frac{\z}{2\sqrt{\y^{1+\delta}}}}\Big)=  \min\Big(n_C, \frac{1+\delta}{2}n_I\Big)-2\\\
      \Delta R &= \log\Big(1+\frac{1}{2}\sqrt{\y^{1-\delta}}\Big)\geq  \frac{1-\delta}{2}n_I-1
    \end{align*}
    Therefore we can set
    \begin{align*}
      \delta \bp ss &= \delta n_I-3\\
      \Delta R &=  \min\Big(n_C-\delta n_I, \frac{1-\delta}{2}n_I\Big)-1.
    \end{align*}
    For the virtual channel, we use the same scheme as in the previous subregion and achieve
    \begin{align*}
      R_\A = 2\min ((n_I-1), \bp ss) \geq 2\Big(n_I-1-\frac{3}{\delta}\Big).
    \end{align*}
    Therefore the sum-rate is
    $
      \Ri {sum} \geq \min \left\{u_1', u_2'\right\}-4.
    $
    Hence $\overline{\Cii {sum} {LDM}}$ can be achieved in this case.
\end{enumerate}

\noindent{\bf Region 5:} $\frac{1}{2}\leq \frac{\x}{\y}\leq 2, \z> \x, \z>1$, and  $\y>1$.

This region corresponds to the case $n_D = n_I$ for the LDM. In LDM, the channel is degenerated and cooperation is not helpful. However, in the Gaussian case, whether the channel is degenerated further depends on the phase information of the channel, which is not captured by the LDM.

When {$\x<1$}, we have $\y\leq 2\x<2$ and get $n_D = n_I = 0$. Therefore, $\overline{\Cii {sum} {LDM}}=0$, which can be achieved trivially.  Below we assume {$\x\geq 1$}.

In this region, we have $n_I-2\leq n_D\leq n_I+2$ and $\beta_1\leq 3$. We set $\bp sd = 0, \Delta R = 0$. As in Region~2, we can set $R_\B=R_\C=(n_D-1)^+$ and $\delta \bp ss = (n_C-n_D-1)^+$.
For the virtual channel, we set rates $R_U = R_{V'} = 0$ and powers
$\sigma_W^2 = \frac{1}{2}, \Var{X_V} = \frac{1}{2}$. By Theorem~\ref{thm:IFconf}, non-negative rates which satisfy the following
conditions are achievable \footnote{Redundant conditions are not listed here.
Also, conditions corresponding to error events which involve an unwanted
message along with zero-rate messages are also not listed. For example, the
rate constraint on $R_{W_2} + R_{U_1}$ is avoided since it corresponds to
the error event of destination~3 making an error on the unwanted message
$m_{W_2}$ and the message $m_{U_1}$ which is absent in this case.}

\begin{align*}
  2R_W+R_V& \leq  \log\Big(1+\frac{\y}{2}\Big)\\
  R_W+R_V& \leq  \log\Big(1+\frac{\x}{2}\Big)\\
  R_V& \leq  \log\Big(1+\frac{\beta_1\x}{2}\Big)\wedge \bp ss
\end{align*}

Therefore, for the virtual channel, we can achieve
\begin{align*}
  R_\A &= \min \left\{
  \begin{array}{c}
  (2n_D-2)^+\\
  (n_I+\bp ss-1)^+\\
  (n_I+\log(1+\frac{\beta_1\x}{2})-1)^+
  \end{array}
  \right\}.
\end{align*}

By the assumption, it is not hard to verify that
$
  \Ri{sum} \geq \min \left\{u_2'-4, u_4'-6, u_4-6 \right\}.
$
Hence $\overline{\Cii {sum} {LDM}}$ can be achieved within 2 bits in this
region.

\end{proof}

\section{Proof of Theorem~\ref{thm:converse}}
\label{app:sumrateconverse}

We prove the outerbound by first proving an outerbound for a more general channel with generalized feedback of which ours is a special case. Specifically, we consider the following two user interference channel $p(y_1,y_2,y_3,y_4|x_1,x_2)$ whose input alphabets are ${\mathcal X}_1$, ${\mathcal X}_2$ respectively for the first and second sources, output alphabets are ${\mathcal Y}_3$,  ${\mathcal Y}_4$ respectively for first and second destinations, and ${\mathcal Y}_1$ and ${\mathcal Y}_2$ respectively are the output alphabets (of the generalized feedback) for first and second sources. Let $W_1$ and $W_2$ be the messages of the first and second sources. At time $t$, the first source's signal $X_{1,t}$ may depend only on its past outputs $Y_1^{t-1}$ and its message $W_1$, similary for the second source. We also have cost functions $c_1:{\mathcal X}_1\rightarrow {\mathbb R}_+$ and $c_2:{\mathcal X}_2\rightarrow {\mathbb R}_+$ and there are average cost constraints $P_1$ and $P_2$, respectively, on the first and second sources.
Along the lines of~\cite{TelatarTse07}, we focus on channels of the following form $p(y_1,y_2,y_3,y_4|x_1,x_2) = \sum_{u_1,u_2} p(u_1,u_2,y_1,y_2,y_3,y_4|x_1,x_2)$, where
\[  p(u_1,u_2,y_1,y_2,y_3,y_4|x_1,x_2) = p(u_1,y_2|x_1)p(u_2,y_1|x_2)\delta(y_3-f_3(x_1,u_2))\delta(y_4-f_4(x_2,u_1)),\]
where $U_1$ and $U_2$ take values in alphabets ${\mathcal U}_1$ and ${\mathcal U}_2$ respectively, and, for every $x_1\in{\mathcal X}_1$, the map $f_3(x_1,.):{\mathcal U}_2 \rightarrow {\mathcal Y}_3$ defined as $u_2 \mapsto f_3(x_1,u_2)$ is invertible, and similarly, for $f_4$. The capacity region of this channel may be defined as usual.

The following gives an outerbound on the capacity region of the above channel.
\begin{thm}\label{thm:ifc-general-outerbound}
If $(R_1,R_2)$ belongs to the capacity region of the above channel, there there is a $p(q,x_1,x_2)$ with ${\mathbb E}[c_1(X_1)]\leq P_1$ and ${\mathbb E}[c_2(X_2)]\leq P_2$ such that for the joint distribution
\[  p(u_1,u_2,y_1,y_2,y_3,y_4,x_1,x_2)= p(u_1,u_2,y_2,y_3,y_4|x_1,x_2)p(q,x_1,x_2),\]
\begin{align}
R_1 &\leq I(X_1;Y_2,Y_3|X_2,Q), \label{eq:ifc-gen-obR1}\\
R_2 &\leq I(X_2;Y_1,Y_4|X_1,Q), \label{eq:ifc-gen-obR2}\\
R_1+R_2 &\leq I(X_1,X_2;Y_3,Y_4|Q), \label{eq:ifc-gen-obsum0}\\
R_1+R_2 &\leq I(X_1;Y_2,Y_3|Y_4,X_2,Q) + I(X_1,X_2;Y_4|Q), \label{eq:ifc-gen-obsum1}\\
R_1+R_2 &\leq I(X_2;Y_4,Y_1|Y_3,X_1,Q) + I(X_1,X_2;Y_3|Q), \label{eq:ifc-gen-obsum2}\\
R_1+R_2 &\leq I(X_1,X_2;Y_1,Y_3|U_1,Y_2,Q) + I(X_1,X_2;Y_2,Y_4|U_2,Y_1,Q)).  \label{eq:ifc-gen-R1plusR2}
\end{align}
\end{thm}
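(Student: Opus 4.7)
My plan is to run the standard genie-aided Fano / single-letterization pipeline that underlies outer bounds of the Kramer and Etkin--Tse--Wang type. For each of the six inequalities I will (a) start from Fano's inequality $nR_i \le I(W_i; Y_{i+2}^n) + n\epsilon_n$, (b) enrich the decoder's observation with a carefully chosen genie, (c) sum the two Fano bounds in the sum-rate cases, (d) expand via the chain rule, and (e) introduce a time-sharing variable $Q_t$ that absorbs the relevant causal past. Three structural facts are used repeatedly: (i) $W_1 \perp W_2$; (ii) conditioned on $(X_1, X_2)$ the joint output distribution factors as $p(u_1, y_2 \mid x_1)\, p(u_2, y_1 \mid x_2)$, with $Y_3 = f_3(X_1, U_2)$ and $Y_4 = f_4(X_2, U_1)$, so $(U_1, Y_2)$ is driven by $X_1$ alone and $(U_2, Y_1)$ by $X_2$ alone; and (iii) encoder causality $X_{i,t} = f_{i,t}(W_i, Y_i^{t-1})$, which allows us to replace $(W_i, Y_i^{t-1}, \text{past})$ by $(X_{i,t}, Q_t)$ inside conditional mutual information expressions.

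Bound \eqref{eq:ifc-gen-obR1} follows by handing destination~3 the genie $(W_2, Y_2^n)$: using $W_1 \perp W_2$, $nR_1 \le I(W_1; Y_3^n, Y_2^n \mid W_2) + n\epsilon_n$, and since $X_2^t$ is a function of $(W_2, Y_2^{t-1})$, the chain-rule expansion collapses to $\sum_t I(X_{1,t}; Y_{2,t}, Y_{3,t} \mid X_{2,t}, Q_t)$ with $Q_t$ containing the appropriate causal past; \eqref{eq:ifc-gen-obR2} is symmetric. Bound \eqref{eq:ifc-gen-obsum0} is the joint MAC cut-set bound from $n(R_1+R_2) \le I(W_1, W_2; Y_3^n, Y_4^n) + n\epsilon_n$. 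For \eqref{eq:ifc-gen-obsum1} I give destination~3 the genie $(W_2, Y_4^n, Y_2^n)$ and, using $W_1 \perp W_2$ together with Fano applied to $W_2$, decompose
\begin{align*}
n(R_1+R_2) \le I(W_1; Y_3^n, Y_2^n \mid W_2, Y_4^n) + I(W_1, W_2; Y_4^n) + 2n\epsilon_n,
\end{align*}
each term of which single-letterizes to the claimed expression; \eqref{eq:ifc-gen-obsum2} is symmetric.

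The main obstacle is \eqref{eq:ifc-gen-R1plusR2}. The plan is to give destination~3 the genie $(U_1^n, Y_2^n, Y_1^n)$ and destination~4 the genie $(U_2^n, Y_1^n, Y_2^n)$, and to write
\begin{align*}
nR_1 &\le I(W_1; U_1^n, Y_2^n) + I(W_1; Y_3^n, Y_1^n \mid U_1^n, Y_2^n) + n\epsilon_n, \\
nR_2 &\le I(W_2; U_2^n, Y_1^n) + I(W_2; Y_4^n, Y_2^n \mid U_2^n, Y_1^n) + n\epsilon_n.
\end{align*}
The two conditional terms should, via the chain rule and the channel factorization, single-letterize into $\sum_t I(X_{1,t}, X_{2,t}; Y_{1,t}, Y_{3,t} \mid U_{1,t}, Y_{2,t}, Q_t)$ and the analogous symmetric sum, matching the statement after time sharing. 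The delicate---and technically heaviest---step is to dispose of the two residual terms $I(W_1; U_1^n, Y_2^n) + I(W_2; U_2^n, Y_1^n)$. My plan is to exploit $p(u_1, y_2 \mid x_1)\, p(u_2, y_1 \mid x_2)$, $W_1 \perp W_2$, and encoder causality to rewrite these residuals as information already accounted for in the complementary destination's conditional term, yielding the needed cancellation once time indices and Markov orderings are tracked carefully. This bookkeeping across the two destinations' bounds is where the bulk of the work lies.
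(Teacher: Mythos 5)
Your handling of \eqref{eq:ifc-gen-obR1}--\eqref{eq:ifc-gen-obsum2} matches the paper: bounds \eqref{eq:ifc-gen-obR1}--\eqref{eq:ifc-gen-obR2} are cut-set bounds, \eqref{eq:ifc-gen-obsum0} is the MAC bound, and \eqref{eq:ifc-gen-obsum1}--\eqref{eq:ifc-gen-obsum2} are of the type proved in Tuninetti's ITA 2010 paper (the paper cites that reference and omits the argument), and your genie/decomposition $n(R_1+R_2)\le I(W_1,W_2;Y_4^n)+I(W_1;Y_3^n,Y_2^n|W_2,Y_4^n)$ is the right skeleton.

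For \eqref{eq:ifc-gen-R1plusR2}, which the paper proves from scratch, you have the right genie ($Y_3^n,U_1^n,Y_1^n,Y_2^n$ at destination~3 and symmetrically at destination~4), but the expansion you propose and the cancellation mechanism you describe do not work as stated, and this is where the real content of the proof lies. Two concrete problems.

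First, your stated plan is to dispose of the residual $I(W_1;U_1^n,Y_2^n)+I(W_2;U_2^n,Y_1^n)$ by a cancellation with ``the complementary destination's conditional term.'' But both residuals are mutual informations, hence nonnegative, and they are generically strictly positive (since $X_1^n$, and hence $(U_1^n,Y_2^n)$, depends on $W_1$); two nonnegative quantities cannot cancel each other. The cancellation in the paper is not between these objects at all: the paper expands $I(W_1;Y_3^n,U_1^n,Y_1^n,Y_2^n)$ into four $n$-letter \emph{entropies},
\begin{align*}
H(U_1^n,Y_1^n,Y_2^n)+H(Y_3^n\mid U_1^n,Y_1^n,Y_2^n)-H(Y_3^n,Y_1^n,Y_2^n\mid W_1)-H(U_1^n\mid Y_3^n,Y_1^n,Y_2^n,W_1),
\end{align*}
lower-bounds $H(Y_3^n,Y_1^n,Y_2^n\mid W_1)$ by $H(U_2^n,Y_1^n,Y_2^n)$ minus single-letter terms (using invertibility of $f_3$, encoder causality and the Markov chain $(W_1,X_1^t)-(Y_1^{t-1},Y_2^{t-1})-X_{2,t}-(U_{2,t},Y_{1,t})$), and computes the last entropy exactly as $\sum_t H(Y_{4,t}\mid X_{1,t},X_{2,t},Y_{1,t},Y_{2,t},U_{2,t})$. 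The $n$-letter residue is then $H(U_1^n,Y_1^n,Y_2^n)-H(U_2^n,Y_1^n,Y_2^n)$, a \emph{signed difference of entropies} (not a mutual information), and it is this quantity that cancels against the symmetric $H(U_2^n,Y_1^n,Y_2^n)-H(U_1^n,Y_1^n,Y_2^n)$ from the $R_2$ bound.

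Second, the conditional term $I(W_1;Y_3^n,Y_1^n\mid U_1^n,Y_2^n)$ does not single-letterize the way you claim. When you expand its negative part, the conditioning set contains the full $n$-letter blocks $U_1^n,Y_2^n$ --- including \emph{future} values $U_{1,s},Y_{2,s}$ with $s>t$ --- and under feedback these future signals are correlated with $(Y_{3,t},Y_{1,t})$ through $Y_{1,t}\to X_{1,t+1}\to(U_{1,t+1},Y_{2,t+1})$. So the inner entropy at time $t$ is \emph{strictly smaller} than $H(Y_{3,t},Y_{1,t}\mid X_{1,t},X_{2,t},U_{1,t},Y_{2,t})$ in general, and the inequality runs the wrong way to produce $\sum_t I(X_{1,t},X_{2,t};Y_{3,t},Y_{1,t}\mid U_{1,t},Y_{2,t})$. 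The paper's proof avoids this precisely because its negative entropy terms --- $H(Y_3^n,Y_1^n,Y_2^n\mid W_1)$ and $H(U_1^n\mid Y_3^n,Y_1^n,Y_2^n,W_1)$ --- are expanded along the time axis conditioning only on past variables (plus the explicit Markov-chain step), so the telescoping and single-letterization are clean. Your reorganization of the same $n$-letter mutual information into a chain of two conditional mutual informations happens to bury exactly the entropies that need to telescope, and it puts an $n$-letter future conditioning into the term you hope will single-letterize. You would need to revert, essentially, to the paper's entropy decomposition to make this go through.
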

\begin{IEEEproof}
The bounds \eqref{eq:ifc-gen-obR1}-\eqref{eq:ifc-gen-obR2} are simple cutset bounds. The next three \eqref{eq:ifc-gen-obsum0}-\eqref{eq:ifc-gen-obsum2} were proved in~\cite[Theorem II.1]{TuninettiITA10}. We omit the proofs here. The last one is new and its proof follows.
By Fano's inequality, for any $\epsilon>0$, we have a sufficiently large blocklength $n$ such that
\begin{align}
n(R_1 - \epsilon) &\leq I(W_1;Y_3^n\notag)\\
	&\leq I(W_1;Y_3^n,U_1^n,Y_1^n,Y_2^n)\notag\\
	&= H(Y_3^n,U_1^n,Y_1^n,Y_2^n) - H(Y_3^n,U_1^n,Y_1^n,Y_2^n|W_1)\notag\\
	&= H(U_1^n,Y_1^n,Y_2^n)  + H(Y_3^n|U_1^n,Y_1^n,Y_2^n) - H(Y_3^n,Y_1^n,Y_2^n|W_1) - H(U_1^n|Y_3^n,Y_1^n,Y_2^n,W_1).
\label{eq:ifc-gen-ob-proof1}
\end{align}
But,
\begin{align*}
H(Y_3^n|U_1^n,Y_1^n,Y_2^n) \leq& \sum_{t=1}^n H(Y_{3,t}|U_{1,t},Y_{1,t},Y_{2,t}),\\
H(Y_3^n,Y_1^n,Y_2^n|W_1) = &\sum_{t=1}^n H(Y_{3,t},Y_{1,t},Y_{2,t}|W_1,Y_3^{t-1},Y_1^{t-1},Y_2^{t-1})\\
	=&\sum_{t=1}^n H(Y_{3,t},Y_{1,t},Y_{2,t}|X_1^t,W_1,Y_3^{t-1},Y_1^{t-1},Y_2^{t-1})\\
	=&\sum_{t=1}^n H(U_{2,t},Y_{1,t},Y_{2,t}|X_1^t,W_1,U_2^{t-1},Y_1^{t-1},Y_2^{t-1})\\
	=&\sum_{t=1}^n H(U_{2,t},Y_{1,t}|X_1^t,W_1,U_2^{t-1},Y_1^{t-1},Y_2^{t-1}) + H(Y_{2,t}|X_1^t,W_1,U_2^{t},Y_1^{t},Y_2^{t-1})\\
	\stackrel{\text{(a)}}{=}&\sum_{t=1}^n H(U_{2,t},Y_{1,t}|U_2^{t-1},Y_1^{t-1},Y_2^{t-1}) + H(Y_{2,t}|X_{1,t},U_{2,t},Y_{1,t})\\
	=&\sum_{t=1}^n (H(U_{2,t},Y_{1,t},Y_{2,t}|U_2^{t-1},Y_1^{t-1},Y_2^{t-1}) - H(Y_{2,t}|U_2^t,Y_1^t,Y_2^{t-1})) \\&+ H(Y_{2,t}|X_{1,t},U_{2,t},Y_{1,t})\\
	\geq&\sum_{t=1}^n H(U_{2,t},Y_{1,t},Y_{2,t}|U_2^{t-1},Y_1^{t-1},Y_2^{t-1}) - H(Y_{2,t}|U_{2,t},Y_{1,t}) \\&+ H(Y_{2,t}|X_{1,t},X_{2,t},U_{2,t},Y_{1,t})\\
	\geq& H(U_2^n,Y_1^n,Y_2^n) - \sum_{t=1}^n  I(X_{1,t},X_{2,t};Y_{2,t}|U_{2,t},Y_{1,t}),\\
H(U_1^n|Y_3^n,Y_1^n,Y_2^n,W_1) =& H(U_1^n|X_1^n,Y_3^n,Y_1^n,Y_2^n,W_1)\\
	=& \sum_{t=1}^n H(U_{1,t}|U_1^{t-1},X_1^n,Y_3^n,Y_1^n,Y_2^n,W_1)\\
	=& \sum_{t=1}^n H(U_{1,t}|X_{1,t},Y_{2,t})\\
	=& \sum_{t=1}^n H(U_{1,t}|X_{1,t},X_{2,t},Y_{1,t},Y_{2,t},U_{2,t})\\
	=& \sum_{t=1}^n H(Y_{4,t}|X_{1,t},X_{2,t},Y_{1,t},Y_{2,t},U_{2,t}),
\end{align*}
where (a) follows from the fact that $(W_1,X_1^t) - (Y_1^{t-1},Y_2^{t-1}) - X_{2,t} - (U_{2,t},Y_{1,t})$ is a Markov chain.
Substituting in \eqref{eq:ifc-gen-ob-proof1}, we get
\begin{align*}
n(R_1-\epsilon) \leq& H(U_1^n,Y_1^n,Y_2^n) - H(U_2^n,Y_1^n,Y_2^n)
	\\&
	+ \left(\sum_{t=1}^n  I(X_{1,t},X_{2,t};Y_{2,t}|U_{2,t},Y_{1,t}) +  H(Y_{3,t}|U_{1,t},Y_{1,t},Y_{2,t}) - H(Y_{4,t}|X_{1,t},X_{2,t},Y_{1,t},Y_{2,t},U_{2,t}) \right).\\
\intertext{Similarly,}
n(R_2-\epsilon) \leq& H(U_2^n,Y_1^n,Y_2^n) - H(U_1^n,Y_1^n,Y_2^n)
	\\&
	+ \left(\sum_{t=1}^n  I(X_{1,t},X_{2,t};Y_{1,t}|U_{1,t},Y_{2,t}) +  H(Y_{4,t}|U_{2,t},Y_{1,t},Y_{2,t}) - H(Y_{3,t}|X_{1,t},X_{2,t},Y_{1,t},Y_{2,t},U_{1,t}) \right).
\end{align*}
Adding up,
\begin{align*}
n(R_1+R_2-2\epsilon) \leq &\sum_{t=1}^n  I(X_{1,t},X_{2,t};Y_{2,t}|U_{2,t},Y_{1,t}) + I(X_{1,t},X_{2,t};Y_{1,t}|U_{1,t},Y_{2,t})
	\\&
 		+ I(X_{1,t},X_{2,t};Y_{3,t}|U_{1,t},Y_{1,t},Y_{2,t}) + I(X_{1,t},X_{2,t};Y_{4,t}|U_{2,t},Y_{1,t},Y_{2,t})\\
	= &\sum_{t=1}^n I(X_{1,t},X_{2,t};Y_{1,t},Y_{3,t}|U_{1,t},Y_{2,t}) +  I(X_{1,t},X_{2,t};Y_{2,t},Y_{4,t}|U_{2,t},Y_{1,t}).
\end{align*}
Proceeding as usual by picking $Q$ to be uniformly distributed over $\{1,\ldots,n\}$ and letting $X_1=X_{1,Q}$ and so on, we obtain \eqref{eq:ifc-gen-R1plusR2}.

\end{IEEEproof}


We will use the above theorem to prove our outerbound. Without loss of generality, we may rewrite our channel (by absorbing phases into the inputs and outputs) in the following symmetric form.
\begin{align}
Y_{1,t}&= (|h_{21}|X_{2,t}+Z_{1,t})1_{S_{1,t}=0} \label{eq:symchrewrite1}\\
Y_{2,t}&= (|h_{12}|X_{1,t}+Z_{2,t})1_{S_{2,t}=0}\\
Y_{3,t}&= |h_{13}|e^{j\theta/2}X_{1,t}1_{S_{1,t}=1} + |h_{23}|X_{2,t}1_{S_{2,t}=1} + Z_{3,t},\\
Y_{4,t}&= |h_{14}|X_{1,t}1_{S_{1,t}=1} + |h_{24}|e^{j\theta/2}X_{2,t}1_{S_{2,t}=1} + Z_{4,t}. \label{eq:symchrewrite4}
\end{align}
Recall that $\theta=\theta_{13}+\theta_{24}-\theta_{14}-\theta_{23}$, and we assume $|h_{13}|^2=|h_{24}|^2=\SNR, |h_{14}|^2=|h_{23}|^2=\INR, |h_{12}|^2=|h_{21}|^2=\CNR$.
Notice that our channel fits the model of Theorem~\ref{thm:ifc-general-outerbound} if we identify the first and second sources' channel inputs as $(X_1,S_1)$ and $(X_2,S_2)$ respectively, the outputs for the two sources are $Y_1$ and $Y_2$ respectively, and $U_1 = h_{14}X_{1}1_{S_1=1} + Z_{4}$, $U_2 = h_{23}X_{2}1_{S_{2}=1} + Z_{3}$. The two destinations' channel outputs are $Y_3 = h_{13}X_11_{S_1=1} + U_2$, and $Y_4 = h_{24}X_21_{S_2=1} + U_1$ respectively. And the cost functions are $c_1(x_1,s_1)=|x_1|^21_{s_1=1}$ and $c_2(x_2,s_2)=|x_2|^2 1_{s_2=1}$ with unit power constraints $P_1=P_2=1$.

Using Theorem~\ref{thm:ifc-general-outerbound} we get an upperbound on the sum-rate, namely, the minimum of the right hand sides of  \eqref{eq:ifc-gen-obsum0}-\eqref{eq:ifc-gen-R1plusR2} and the sum of the right hand sides of \eqref{eq:ifc-gen-obR1} and \eqref{eq:ifc-gen-obR2}, maximized over $p(q,x_1,x_2)$ which satisfy the power constraints. First of all, let us notice that when the channel and the power constraints are symmetric, as is the case for the channel in \eqref{eq:symchrewrite1}-\eqref{eq:symchrewrite4}, without loss of generality, we may assume that ${\mathbb P}(S_1=1,S_2=0)={\mathbb P}(S_1=0,S_2=1)$. Let $\delta={\mathbb P}(S_1=1,S_2=1)/{\mathbb P}(S_1=1,S_2=0)$, and $\gamma={\mathbb P}(S_1=0,S_2=0)/{\mathbb P}(S_1=1,S_2=0)$. Also, let
\begin{align*}
P_{1\A}&={\mathbb E}\left[|X_1|^2 \mid S_1=S_2=1\right],&
P_{1\B}&={\mathbb E}\left[|X_1|^2 \mid S_1=1,S_2=0\right],& P_{1\C}&=0,\text{ and}\\
P_{2\A}&={\mathbb E}\left[|X_2|^2 \mid S_1=S_2=1\right],&  P_{2\B}&=0,&
P_{2\C}&={\mathbb E}\left[|X_2|^2 \mid S_1=0,S_2=1\right].
\end{align*}
We have ${\mathbb E}\left[|X_i|^21_{S_i=1}\right]=(\delta P_{i\A}+ P_{i\B} + P_{i\C})/(2+\delta+\gamma) \leq 1$, for $i=1,2$. We now derive the outerbounds:
\begin{enumerate}

\item  $Cut(\delta)$\\

From \eqref{eq:ifc-gen-obR1}-\eqref{eq:ifc-gen-obR2},
\begin{align*}
R_1+R_2 &\leq I(X_1,S_1;Y_2,Y_3|X_2,S_2,Q) +  I(X_2,S_2;Y_1,Y_4|X_1,S_1,Q)\\
	&\leq H(S_1)+H(S_2)+ I(X_1;Y_2,Y_3|X_2,Q,S_1,S_2) + I(X_2;Y_1,Y_4|X_1,Q,S_1,S_2)\\
	&\leq 2 + (I(X_1;Y_3|Q,S_1=S_2=1)+I(X_2;Y_4|Q,S_1=S_2=1)){\mathbb P}(S_1=S_2=1)
	\\&\qquad
	 + I(X_1;Y_2,Y_3|Q,S_1=1,S_2=0){\mathbb P}(S_1=1,S_2=0)
	\\&\qquad
	 + I(X_2;Y_1,Y_4|Q,S_1=0,S_2=1){\mathbb P}(S_1=0,S_2=1)\\
	&\leq 2 + \frac{\delta}{2+\delta+\gamma}\left(\log(1+xP_{1\A})+\log(1+xP_{2\A})\right)
	\\&\qquad
	 + \frac{1}{2+\delta+\gamma}\log(1+(x+z)P_{1\B}) + \frac{1}{2+\delta+\gamma}\log(1+(x+z)P_{2\C})\\
	&\leq 2 + \frac{\delta}{2+\delta}\left(\log(1+xP_{1\A})+\log(1+xP_{2\A})\right)
	\\&\qquad
	 + \frac{1}{2+\delta}\log(1+(x+z)P_{1\B}) + \frac{1}{2+\delta}\log(1+(x+z)P_{2\C}).
\end{align*}

\item $Z(\delta)$\\
From \eqref{eq:ifc-gen-obsum2},
\begin{align*}
R_1+R_2 &\leq I(X_2,S_2;Y_1,Y_4|Y_3,X_1,S_1,Q) + I(X_1,S_1,X_2,S_2;Y_3|Q)\\
	&\leq H(S_2) + H(S_1,S_2) + I(X_2;Y_1,Y_4|Y_3,X_1,Q,S_1,S_2) + I(X_1,X_2;Y_3|Q,S_1,S_2)\\
	&\leq 3 + (I(X_2;Y_4|Y_3,X_1,Q,S_1=S_2=1) + I(X_1,X_2;Y_4|Q,S_1=S_2=1)){\mathbb P}(S_1=S_2=1)
	\\&\qquad
	 + I(X_1;Y_3|Q,S_1=1,S_2=0){\mathbb P}(S_1=1,S_2=0)
	\\&\qquad
	 + I(X_2;Y_1,Y_3,Y_4|Y_3,Q,S_1=0,S_2=1){\mathbb P}(S_1=0,S_2=1)\\
	&\leq 3 + \frac{\delta}{2+\delta+\gamma}\left(\log\left(1+\frac{\x P_{2A}}{1+\y P_{2A}}\right) +  \log(1+2\x P_{1A}+2\y P_{2A})\right)
	\\&\qquad
	 +  \frac{1}{2+\delta+\gamma}\log(1+\x P_{1B})  + \frac{1}{2+\delta+\gamma}\log(1+(\x+\y+\z)P_{2C}\\
	&\leq 3 + \frac{\delta}{2+\delta}\left(\log\left(1+\frac{\x P_{2A}}{1+\y P_{2A}}\right) +  \log(1+2\x P_{1A}+2\y P_{2A})\right)
	\\&\qquad
	 +  \frac{1}{2+\delta}\log(1+\x P_{1B})  + \frac{1}{2+\delta}\log(1+(\x+\y+\z)P_{2C}.
\end{align*}

\item $V(\delta)$\\
From \eqref{eq:ifc-gen-R1plusR2},
\begin{align*}
R_1+R_2 &\leq I(X_1,S_1,X_2,S_2;Y_1,Y_3|U_1,Y_2,Q) + I(X_1,S_1,X_2,S_2;Y_2,Y_4|U_2,Y_1,Q))\\
	&\leq 2H(S_1,S_2) + I(X_1,X_2;Y_1,Y_3|U_1,Y_2,Q,S_1,S_2) + I(X_1,X_2;Y_2,Y_4|U_2,Y_1,Q,S_1,S_2)\\
	&\leq 4 + (I(X_1,X_2;Y_3|U_1,Q,S_1=S_2=1)+ I(X_1,X_2;Y_4|U_2,Q,S_1=S_2=1)){\mathbb P}(S_1=S_2=1)
	\\&\quad
	 + (I(X_1;Y_3|U_1,Y_2,Q,S_1=1,S_2=0)+ I(X_1;Y_2,U_1|Q,S_1=1,S_2=0)){\mathbb P}(S_1=1,S_2=0)
	\\&\quad
	 + (I(X_2;Y_1,U_2|Q,S_1=0,S_2=1)+ I(X_2;Y_4|U_2,Y_1,Q,S_1=0,S_2=1)){\mathbb P}(S_1=0,S_2=1)\\
	&\leq 4 + (I(X_1,X_2;Y_3|U_1,Q,S_1=S_2=1)+ I(X_1,X_2;Y_4|U_2,Q,S_1=S_2=1)){\mathbb P}(S_1=S_2=1)
	\\&\qquad
	 + I(X_1;Y_3,U_1,Y_2|Q,S_1=1,S_2=0){\mathbb P}(S_1=1,S_2=0)
	\\&\qquad
	 + I(X_2;Y_4,U_2,Y_1|Q,S_1=0,S_2=1){\mathbb P}(S_1=0,S_2=1)\\
	&\leq 4 + \frac{\delta}{2+\delta+\gamma}\left(\log\left(1+\y P_{2A}+\frac{2\x P_{1A}+\y P_{2A}}{1+\y P_{1A}}\right)+\log\left(1+\y P_{1A}+\frac{2\x P_{2A}+\y P_{1A}}{1+\y P_{2A}}\right)\right)\\
	\\&\qquad
	 + \frac{1}{2+\delta+\gamma}\log(1+(\x+\y+\z)P_{1B}) + \frac{1}{2+\delta+\gamma}\log(1+(\x+\y+\z)P_{2C})\\
	&\leq  4 + \frac{\delta}{2+\delta}\left(\log\left(1+\y P_{2A}+\frac{2\x P_{1A}+\y P_{2A}}{1+\y P_{1A}}\right)+\log\left(1+\y P_{1A}+\frac{2\x P_{2A}+\y P_{1A}}{1+\y P_{2A}}\right)\right)\\
	\\&\qquad
	 + \frac{1}{2+\delta}\log(1+(\x+\y+\z)P_{1B}) + \frac{1}{2+\delta}\log(1+(\x+\y+\z)P_{2C}).
\end{align*}

\item $Cut'(\delta)$\\
From \eqref{eq:ifc-gen-obsum0},
\begin{align*}
R_1+R_2 &\leq I(X_1,S_1,X_2,S_2;Y_3,Y_4|Q)\\
	&\leq H(S_1,S_2) + I(X_1,X_2;Y_3,Y_4|Q,S_1,S_2)\\
	&\leq 2 + I(X_1,X_2;Y_3,Y_4|Q,S_1=S_2=1){\mathbb P}(S_1=S_2=1)
	\\&\qquad
	 + I(X_1;Y_3,Y_4|Q,S_1=1,S_2=0){\mathbb P}(S_1=1,S_2=0)
	\\&\qquad
	 + I(X_2;Y_3,Y_4|Q,S_1=0,S_2=1){\mathbb P}(S_1=0,S_2=1)\\
	&\stackrel{\text{(a)}}{\leq} 2 + \frac{\delta}{2+\delta+\gamma}\left( \log(1+2(\x+\y)(P_{1A}+P_{2A})+P_{1A}P_{2A}(\x^2+\y^2-2\x\y\cos\theta))\right)
	\\&\qquad
	 +  \frac{1}{2+\delta+\gamma}\log(1+(x+y)P_{1B}) +  \frac{1}{2+\delta+\gamma}\log(1+(x+y)P_{2C})\\
	&\leq  2 + \frac{\delta}{2+\delta}\left( \log(1+2(\x+\y)(P_{1A}+P_{2A})+P_{1A}P_{2A}(\x^2+\y^2-2\x\y\cos\theta))\right)
	\\&\qquad
	 +  \frac{1}{2+\delta}\log(1+(x+y)P_{1B}) +  \frac{1}{2+\delta}\log(1+(x+y)P_{2C}),
\end{align*}
where (a) follows from the fact that
\begin{align*}
&I(X_1,X_2;Y_3,Y_4|Q,S_1=S_2=1)\\
	&= h(Y_3,Y_4|Q,S_1=S_2=1) - h(Y_3,Y_4|X_1,X_2,Q,S_1=S_2=1)\\
	&\leq \log(\det K), \text{ where $K$ is the covariance matrix of $(Y_3,Y_4)$}\\
	&\leq  1+(\x^2+\y^2)(1-|\rho|^2)P_{1A}P_{2A}+(\x+\y)(P_{1A}+P_{2A})+2Re(h_{13}h_{23}^*\rho)\sqrt{P_{1A}P_{2A}}\\
	&\qquad + 2Re(h_{14}h_{24}^*\rho)\sqrt{P_{1A}P_{2A}}-2Re(h_{13}h_{23}^*h_{14}^*h_{24})(1-|\rho|^2)P_{1A}P_{2A}\\
	&\leq 1+(\x+\y)(P_{1A}+P_{1A})+4\sqrt{\x\y}|\rho|\sqrt{P_{1A}P_{1A}}\cos\frac{\theta}{2}+(\x^2+\y^2-2\x\y\cos\theta)(1-|\rho|^2)P_{1A}P_{2A}\\
	&\leq \log(1+2(\x+\y)(P_{1A}+P_{2A})+P_{1A}P_{2A}(\x^2+\y^2-2\x\y\cos\theta)).
\end{align*}

\end{enumerate}

It remains to show that
 $\overline{\Cii {sum}{HD}}\leq\overline{\Ci {sum}}+7$.
By power constraint, we have $P_{1A}\leq
\frac{2+\delta}{\delta},P_{2A}\leq \frac{2+\delta}{\delta},
P_{1B}\leq 2+\delta, P_{2C}\leq 2+\delta$.

In $Cut(\delta), Z(\delta), Cut'(\delta)$, each term is a monotone increasing function of
$P_{iA}, P_{iB}, P_{iC}, i = 1, 2$, so

\begin{align*}
  Cut(\delta) & \leq  2+ \frac{1}{2+\delta}\Big[\delta \log(1+\x\frac{2+\delta}{\delta})+\delta \log(1+\x\frac{2+\delta}{\delta})\\
        &\log(1+(\x+\z)(2+\delta))+\log(1+(\x+\z)(2+\delta))\Big]\\
    Z(\delta) & \leq 3+ \frac{1}{2+\delta}\Big[\delta \log(1+2\x\frac{2+\delta}{\delta}+2\y\frac{2+\delta}{\delta})+\log(1+\x(2+\delta))\\
        & \qquad + \log(1+(\x+\y+\z)(2+\delta))+\delta\log(1+\frac{\x\frac{2+\delta}{\delta}}{1+\y\frac{2+\delta}{\delta}})\Big]\\
    Cut'(\delta) & \leq  2+ \frac{1}{2+\delta}\Big[\delta\log(1+2(\x+\y)(\frac{2+\delta}{\delta}+\frac{2+\delta}{\delta})+(\frac{2+\delta}{\delta})^2(\x^2+\y^2-2\x\y\cos\theta))\\
        & \qquad + \log(1+(\x+\y)(2+\delta))+\log(1+(\x+\y)(2+\delta))\Big].
\end{align*}

In $V(\delta)$, observe that
\begin{align*}
  &1+\y P_{2A}+\frac{2\x P_{1A}+\y P_{2A}}{1+\y P_{1A}}\\
  & \leq  1+\y\frac{2+\delta}{\delta}+\frac{2\x P_{1A}+\y\frac{2+\delta}{\delta}}{1+\y P_{1A}}\\
  & \leq  \max \left\{
  \begin{array}{c}
  1+\y\frac{2+\delta}{\delta}+\frac{(2\x+\y)\frac{2+\delta}{\delta}}{1+\y\frac{2+\delta}{\delta}}\\
  1+2\y\frac{2+\delta}{\delta}
  \end{array}
  \right\}
\end{align*}
So we have
\begin{align*}
  V(\delta) & \leq  4+ \frac{1}{2+\delta}\Big[\delta\log\left(\max \left\{
  \begin{array}{c}
  1+\y\frac{2+\delta}{\delta}+\frac{(2\x+\y)\frac{2+\delta}{\delta}}{1+\y\frac{2+\delta}{\delta}}\\
  1+2\y\frac{2+\delta}{\delta}
  \end{array}
  \right\}\right)+\log(1+(\x+\y+\z)(2+\delta))\\
        &
        +\delta\log\left(\max \left\{
  \begin{array}{c}
  1+\y\frac{2+\delta}{\delta}+\frac{(2\x+\y)\frac{2+\delta}{\delta}}{1+\y\frac{2+\delta}{\delta}}\\
  1+2\y\frac{2+\delta}{\delta}
  \end{array}
  \right\}\right)+\log(1+(\x+\y+\z)(2+\delta))\Big]\\
\end{align*}

Comparing them term by term with $u_i, i = 1,2, 3, 4$, then we get
\begin{align*}
  Cut(\delta)-u_1& \leq {2+} \frac{1}{2+\delta}\Big[\delta \log\frac{2+\delta}{\delta}+\delta \log\frac{2+\delta}{\delta}+\log(2+\delta)+\log(2+\delta)\Big]\\
  Z(\delta)-u_2 & \leq  {3+} \frac{1}{2+\delta}\Big[\delta \log\frac{2+\delta}{\delta}+\log(2+\delta)+\log(2+\delta)+\delta\log\frac{2+\delta}{\delta}\Big]\\
  V(\delta)-u_3 & \leq  {4+} \frac{1}{2+\delta}\Big[\delta \log\frac{2+\delta}{\delta}+\log(2+\delta)+\delta\log\frac{2+\delta}{\delta}+\log(2+\delta)\Big]\\
    Cut'(\delta)-u_4 & \leq {2+} \frac{1}{2+\delta}\Big[\delta\log\left(\frac{2+\delta}{\delta}\right)^2+\log(2+\delta)+\log(2+\delta)\Big].
\end{align*}

For $\delta\geq 0$,
\begin{align*}
  \frac{\delta}{2+\delta}\log(\frac{2+\delta}{\delta})\leq\frac{1}{e\ln2}\ ,
  \quad\frac{1}{2+\delta}\log(2+\delta)\leq \frac{1}{e\ln2}.
\end{align*}

So we can conclude that
\begin{align*}
  \overline{\Cii {sum} {HD}} &= \max_{\delta}\min(Cut(\delta), Z(\delta),
  V(\delta), Cut'(\delta))\\
  & \leq \max_{\delta}\min(u_1, u_2,
  u_3, u_4)+\frac{4}{e\ln2} {+4}\leq \overline{\Ci {sum}}+7.
\end{align*}

\section{Proof of Theorem~\ref{thm:cogachievability}}
\label{app:cogachievability}

As in the sum-rate case, we will prove this achievability result in two
steps. Instead of directly comparing $\overline{\CRo}$ with the rate
achievable by the coding scheme in section~\ref{sec:achievability}, we will
first show that the $\overline{\CRo}$ is within a constant of
$\overline{\CRoi {LDM}}$, a quantity we define below inspired by the result
for the linear deterministic model. We will then prove that the coding
scheme in section~\ref{sec:achievability} can be used to achieve an $R_1$
which is within $R_0$ of the point-to-point capacity $C_0=\log(1+SNR_1)$ of
the primary link and an $R_2$ which is within a constant of
$\overline{\CRoi{LDM}}$.  Specifically, we prove the following two lemmas
which together imply Theorem~\ref{thm:cogachievability}. To simplify the notation, let
$\x_i = \SNR_i, \y_i = \INR_i, \z = \CNR, i = 1, 2$, and define $n_i = \lfloor\log \x_i\rfloor^+,
\alpha_i=\lfloor\log \y_i\rfloor^+, \beta=\lfloor\log \z\rfloor^+, i = 1, 2. $
\begin{lem}\label{lem:cogGaussianLDMlink}
  Define
\begin{align*}
  \overline{\CRoi{LDM}} = \max_{\delta}\overline{\CRoi{LDM}}(\delta) =
\max_{\delta>0}\min(u_1'-10-2R_0, u_2'-5-R_0, u_3'-5-R_0, u_4'),
\end{align*}
\indent where
\begin{align*}
  u_1' &=  \frac{1}{1+\delta}n_2\\
  u_2' &=  \frac{1}{1+\delta}[n_2\vee\alpha_2-\alpha_2\wedge n_1+\delta(\beta\vee\alpha_2\vee n_1-n_1)]\\
  u_3' &=  \frac{1}{1+\delta}[(\alpha_1-n_1)^++(n_2-\alpha_1)^+]\\
  u_4' &=  \frac{1}{1+\delta}[(\alpha_1-n_1)^+-\alpha_2\wedge
n_1+(n_2-\alpha_1)\vee\alpha_2+\delta(\beta\vee\alpha_2\vee n_1-n_1)].
\end{align*}
Then $\overline{\CRo}<\overline{\CRoi{LDM}}+13+2R_0$.
\end{lem}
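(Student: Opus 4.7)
The plan is to mimic the proof of Lemma~\ref{lem:GaussianLDMlink} for the symmetric case, namely, to verify termwise bounds of the form
\[ u_i \leq u_i' + c_i \qquad \text{for } i = 1,2,3,4, \]
where $c_1=3,\;c_2=8+R_0,\;c_3=8+R_0,\;c_4=13+2R_0$. Once these are established, for every $\delta\ge 0$ we get $\min_i u_i(\delta)\le u_j(\delta)\le u_j'(\delta)+c_j$ for each $j$, hence $\min_i u_i(\delta)\le \min_j(u_j'(\delta)+c_j)$, and since the corresponding shifts in the definition of $\overline{\CRoi{LDM}}$ are $-10-2R_0,-5-R_0,-5-R_0,0$ (all of which differ from $c_i$ by exactly $13+2R_0$), taking $\max_\delta$ yields the claim.

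The key tool is the elementary chain
\[ \lfloor\log x\rfloor^+\;\le\;\log(1+x)\;\le\;2+\lfloor\log x\rfloor^+,\qquad x>0, \]
which lets me translate each Gaussian log-expression to its LDM floor-log counterpart up to a constant per log. The comparison for $u_1$ is immediate: $\log(1+\SNR_2)\le n_2+2$, so $u_1\le u_1'+\tfrac{2}{1+\delta}+1\le u_1'+3$. For the other three, I will handle each log term separately, being careful about the \emph{combinations} that mirror LDM expressions: for instance $\log(1+2\SNR_2+2\INR_2)$ maps to $n_2\vee\alpha_2$ within a constant, $-\log(1+\SNR_1)+\log(1+\SNR_1/(1+\INR_2))$ to $-(\alpha_2\wedge n_1)$ within a constant (by a short case analysis on whether $\INR_2\lessgtr 1$ and whether $\INR_2\lessgtr \SNR_1$), and $\log(1+(\INR_2+\CNR)/(1+\SNR_1))$ to $(\beta\vee\alpha_2\vee n_1)-n_1$ within a constant (again by cases on which of $\INR_2,\CNR,\SNR_1$ dominates). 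These case analyses produce $O(1)$ slack per log, and summing up with the $\frac{1}{1+\delta}$ prefactor and the explicit $+2+R_0,+2+R_0,+3+2R_0$ constants in $u_2,u_3,u_4$ yields the constants listed above.

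The main obstacle will be $u_4$, which has two compounding difficulties: it contains a $\max$ of two log terms that must be matched to $(n_2-\alpha_1)\vee\alpha_2$, and it stacks \emph{four} separate log contributions plus the $\delta$-weighted cooperation term, so I must be careful not to let the $O(1)$ slacks accumulate to more than the budget of $13+2R_0$. I plan to group the terms $\log(1+2\SNR_1+2\INR_1)-\log(1+\SNR_1)$ (yielding $(\alpha_1-n_1)^+$ within a constant), $\log(1+\SNR_1/(1+\INR_2))-\log(1+\SNR_1)$ (yielding $-(\alpha_2\wedge n_1)$), the explicit $\max$ term (yielding $(n_2-\alpha_1)\vee\alpha_2$), and $\delta\log(1+(\INR_2+\CNR)/(1+\SNR_1))$ (yielding $\delta(\beta\vee\alpha_2\vee n_1-n_1)$) separately, absorb each group's $O(1)$ slack, and verify that the total remains within the required $13+2R_0$. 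The additive terms $+R_0$ and $+2R_0$ already present in $u_2,u_3,u_4$ are passed through unchanged, since they play no role in the LDM translation and exactly match the offsets built into $\overline{\CRoi{LDM}}$.
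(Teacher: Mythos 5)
Your proposal matches the paper's proof: the paper also establishes exactly the termwise bounds $u_1\leq u_1'+3$, $u_2\leq u_2'+8+R_0$, $u_3\leq u_3'+8+R_0$, $u_4\leq u_4'+13+2R_0$ (stated there without further detail as "easy to verify") and then concludes via the same $\min$/$\max$ manipulation. Your more explicit sketch of the per-log translations is consistent with, and fills in, what the paper leaves implicit.
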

\begin{proof}
  It is easy to verify that $u_1\leq u_1'+3, u_2\leq u_2'+8+R_0, u_3\leq u_3'+8+R_0$ and $u_4\leq u_4'+13+2R_0$. So we get the result.
\end{proof}

Next we show that the secondary user can achieve $\overline{\CRoi{LDM}}$ within a constant given that the primary user achieves a rate within $R_0$ of its link capacity.

\begin{lem}\label{thm:cogGaussianLDMachievability}
For $R_0>7$, $(R_1,R_2)= (C_0-R_0,\overline{\CRoi{LDM}}-10)$ is
achievable.\qed
\end{lem}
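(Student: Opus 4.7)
The plan is to instantiate Theorem \ref{thm:half-duplex} with the schedule $\d\B=\delta$, $\d\C=0$, $\d\A=1$, all relay rates and bit-pipes $\bp{21}$, $\bp{14}$, $\bp{23}$ set to zero, and $\bp{12}=\delta(\log(1+\CNR)-R_{1\B}-c)$ for a small constant $c$ needed to absorb finite-SNR slack. In mode~$\B$, source~1 uses superposition coding with allocated powers $1-1/\CNR$ and $1/\CNR$ for its direct and cooperation-carrying layers, achieving $R_{1\B}\geq \log(1+\SNR_1)-c_1$ and a sharing rate at least $\log(1+\CNR/\SNR_1)-c_2$, both within a constant of $n_1$ and $\beta-n_1$ respectively. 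The virtual channel of mode~$\A$ is then instantiated with the Section~\ref{sec:cog_Gaussian} choice: $V_2=0$, zero-forcing $(X_{V_1},X_{V_2})$ with variances from \eqref{eq:cogcooppvtpower1}--\eqref{eq:cogcooppvtpower2}, private variances $\sigma_{U_i}^2\sim 1/\INR_i$ so private signals sit at the noise floor of the unintended destination, and public power filling the remainder.

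I would then mirror the LDM case analysis of Section~\ref{sec:cog_LDM}, dividing into parameter regions determined by the ordering of $\SNR_1,\SNR_2,\INR_1,\INR_2,\CNR$ and by whether each $\x_i,\y_i,\z$ exceeds unity; in each region I pick a specific power split and then plug into the rate inequalities of Theorem~\ref{thm:IFconf}. The computations parallel those used to prove Lemma~\ref{lem:GaussianLDMachievability}: each mutual information in Theorem~\ref{thm:IFconf} reduces, up to an $O(1)$ additive loss, to the corresponding LDM expression $v_1,\ldots,v_4$ evaluated at $(n_1,n_2,\alpha_1,\alpha_2)$, and the cooperative bit-pipe contributes $\delta(\beta-n_1)$ shifted by an $O(1)$ loss. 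Combining via Theorem~\ref{thm:half-duplex} produces $R_2\geq\min(u_1',u_2',u_3',u_4')-c_3$ for a universal constant $c_3\leq 10$, which is the second coordinate in the lemma.

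For the primary rate, observe that the contribution from mode~$\A$ is $R_{1\A}=R_{W_1}+R_{U_1}+R_{V_1}$, which by our choice is close to $n_1$, and the mode~$\B$ contribution is $R_{1\B}\approx n_1$; averaging with weights $1/(1+\delta)$ and $\delta/(1+\delta)$ gives $R_1\geq \log(1+\SNR_1)-R_0$ provided the aggregated $O(1)$ losses are bounded by $R_0$. This is exactly where the hypothesis $R_0>7$ enters: the budget must cover the Gaussian-vs-LDM gap for the primary's public, private, and cooperative-private layers, plus the power split in mode~$\B$.

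The principal technical obstacle is the ill-conditioned regime where the channel matrix $\bigl[\begin{smallmatrix}h_{13}&h_{23}\\ h_{14}&h_{24}\end{smallmatrix}\bigr]$ is nearly singular, i.e.\ $|h_{13}h_{24}-h_{14}h_{23}|$ is small. In this case the zero-forcing variances in \eqref{eq:cogcooppvtpower1}--\eqref{eq:cogcooppvtpower2} blow up and cooperative private beamforming becomes infeasible under the unit-power constraint; I would handle this by reverting to $\sigma_{V_1}^2=0$ and using only the public/private Han--Kobayashi layers together with the shared-bits message (rerouted as extra public rate via the bit-pipe), and then verify that in this regime the LDM bound $u_3'$ (which does not depend on $\beta$) is the active constraint, so discarding the cooperative beamforming loses nothing to within a constant. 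The remaining bookkeeping is region-by-region case analysis analogous to Regions~1--5 of the proof of Lemma~\ref{lem:GaussianLDMachievability}, with all constants absorbed into the additive $-10$ slack and the $R_0>7$ hypothesis.
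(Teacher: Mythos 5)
Your high-level plan matches the paper's proof: schedule $\d\A=1,\,\d\B=\delta,\,\d\C=0$; realize only $\bp{12}$ via superposition broadcast in mode~$\B$; instantiate the virtual channel of Theorem~\ref{thm:IFconf} with zero-forced cooperative private beamforming plus Han--Kobayashi public/private layers; then mirror the LDM regions. However, there is a concrete error and a glossed-over step. First, the mode-$\B$ power split $(1-1/\CNR,\;1/\CNR)$ does not do what you claim: with cooperation-layer power $1/\CNR$, the received power at source~2 is $\CNR\cdot(1/\CNR)=1$, so after cancelling the direct layer source~2 decodes only about one bit of shared data, not $\log(1+\CNR/\SNR_1)-c_2$. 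The split must be $(1-1/\SNR_1,\;1/\SNR_1)$, which is what the paper uses; then the cooperation layer lands at source~2 with power $\CNR/\SNR_1$ and the sharing rate is $\log(1+\CNR/\SNR_1)\geq \beta-n_1-1$. Your private-variance assignment also has a subscript swap ($\sigma_{U_1}^2\sim 1/\INR_2$, not $1/\INR_1$), though your verbal description of ``noise floor of the unintended destination'' is the correct principle.

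Second, your treatment of the ill-conditioned regime is not quite the one the paper needs. You gate cooperative private beamforming on $|h_{13}h_{24}-h_{14}h_{23}|$ being small, but for the cognitive channel phases do not matter (as the paper remarks), and the region split is determined by the magnitude ratio $\SNR_1\SNR_2/(\INR_1\INR_2)\in[1/4,4]$. More importantly, when that ratio lies \emph{outside} this interval you cannot simply quote the LDM expressions for the cooperative-private term: the zero-forcing variances \eqref{eq:cogcooppvtpower1}--\eqref{eq:cogcooppvtpower2} involve $\cos\theta$, so the \emph{received} cooperative-private power $\beta_1\x_1\bigl(1\wedge\tfrac{\x_2}{\y_2}\bigr)$ must be shown to be within a constant factor of the LDM quantity $\tilde k = \bigl[\x_1(1\wedge\tfrac{\x_2}{\y_2})\bigr]\vee\bigl[\y_1(1\wedge\tfrac{\y_2}{\x_2})\bigr]$. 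This is exactly the content of Lemma~\ref{lem:cogreceivepowers}, which your sketch omits but which is the key technical enabler of the constant-gap claim; without it the per-region rate calculations do not close. Finally, the paper's derivation that $R_{1\A}=(\log(1+\x_1)-R_0)^+$ is supported (the step where $R_0\geq 7$ actually appears) requires examining the specific constraints of Theorem~\ref{thm:IFconf} under the chosen powers, not just ``aggregating $O(1)$ losses''; you should carry that out explicitly since it fixes the threshold $7$.
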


Before proving Lemma~\ref{thm:cogGaussianLDMachievability}, we first prove the following $R_0$-capacity result for the interference channel, i.e., the cognitive rate achievable without source cooperation.
\begin{lem}\label{lem:cog_IF}
  For $R_0\geq 7$, $\overline{\Cii{cog}{IFC-LDM}}\leq \Cii{$R_0$}{IFC}+1$, where
  \begin{align*}
    \overline{\Cii{cog}{IFC-LDM}} = \min\left(
    \begin{array}{c}
    n_2\\
    n_2\vee \alpha_2-\alpha_2\wedge n_1\\
    (\alpha_1-n_1)^++(n_2-\alpha_1)^+\\
    (\alpha_1-n_1)^+-\alpha_2\wedge n_1+(n_2-\alpha_1)\vee \alpha_2
    \end{array}
    \right)
  \end{align*}
  and $C_{R_0}^{IFC}$ is the $R_0$-capacity for the interference channel.
\end{lem}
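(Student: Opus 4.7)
The plan is to exhibit a Han--Kobayashi style scheme for the Gaussian interference channel (no cooperation) that achieves primary rate $R_1 \geq C_0 - R_0$ and secondary rate $R_2 \geq \overline{\Cii{cog}{IFC-LDM}} - 1$, using power allocations guided by the linear deterministic picture. Since the coding strategy involves no listening, this directly gives a lower bound on $\Cii{$R_0$}{IFC}$.

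First I would set up the scheme. Source~1 transmits a Gaussian codeword $X_1$ of unit power carrying only its message at rate $R_1 = \log(1+\SNR_1) - R_0$. Source~2 splits its message into a public part $W_2$ (decodable at both destinations) and a private part $U_2$ (decoded only at destination~4), with independent zero-mean Gaussian auxiliary random variables of variances $\sigma_{W_2}^2$ and $\sigma_{U_2}^2$ summing to $1$, and transmits $X_2 = W_2 + U_2$. Following the Etkin--Tse--Wang prescription, set $\sigma_{U_2}^2 = \min(1, 1/\INR_1)$ so that the private part reaches destination~3 at or below noise level, and $\sigma_{W_2}^2 = 1 - \sigma_{U_2}^2$. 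Destination~3 jointly decodes $(W_1, W_2)$ treating $U_2$ as additional noise; destination~4 decodes both $W_2$ and $U_2$ treating $X_1$ as noise (this is enough to match $v_1,v_3,v_4$ up to constants; when $\INR_2$ is large one additionally lets destination~4 decode $W_1$ first, which is what gives the $v_4$-type $2R_1+R_2$ bound).

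Next I would evaluate the achievable region and compare term-by-term with $v_1,\ldots,v_4$. Each of the four LDM expressions corresponds to one of the MAC-type sum-rate constraints in the Han--Kobayashi analysis:
\begin{itemize}
\item the $v_1 = n_2$ bound comes from the single-user constraint $R_{W_2}+R_{U_2} \leq \log(1+\SNR_2)$ at destination~4;
\item the $v_2$ bound comes from the joint constraint at destination~3, where backing the primary off by $R_0$ creates exactly the slack needed so that $R_1+R_2 \leq \log(1+\SNR_1)+\log(1+\SNR_2/(1+\INR_1))$ (which is within a constant of $v_1 + v_2$) yields $R_2 \geq v_2$ up to a constant;
\item the $v_3$ bound comes from the constraint on $R_{W_2} + R_{U_2}$ at destination~4 after decoding $W_1$;
\item the $v_4$ bound is the corresponding $2R_1+R_2$ constraint obtained by adding destination~3's and destination~4's constraints.
\end{itemize}
Using the elementary bounds $\lfloor \log \x \rfloor^+ \leq \log(1+\x) \leq 2 + \lfloor \log \x \rfloor^+$ on each of $\SNR_1,\SNR_2,\INR_1,\INR_2$, each constraint matches its LDM counterpart within a universal additive constant of at most a few bits.

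The main obstacle is bookkeeping of these constants and ensuring $R_0 \geq 7$ is sufficient. Two sources of loss must be absorbed: the $O(1)$ gap between $\log(1+\cdot)$ and $\lfloor \log(\cdot)\rfloor^+$ in each of the four bounds, and the $\log(1+\INR_1\sigma_{U_2}^2) \leq 1$ bit lost to residual private-message noise at destination~3. The $R_0$-backoff on the primary is precisely what converts the joint constraint $R_1+R_2 \leq \ldots$ at destination~3 into a clean constraint on $R_2$ alone: after substituting $R_1 = C_0 - R_0$, the surplus $R_0$ absorbs all accumulated constants provided $R_0$ is at least the sum of these losses, which a direct count (carried out case by case over the interference regimes $\alpha_1 \lessgtr n_1$ and $\alpha_2 \lessgtr n_2$, as in~\cite{EtkinTseWang}) shows is at most $7$. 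This gives $R_2 \geq \overline{\Cii{cog}{IFC-LDM}} - 1$, proving the claim.
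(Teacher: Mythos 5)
The paper's own proof is shorter: it invokes the Etkin–Tse–Wang result as a black box, namely that any rate pair $(R_1,R_2)$ with $(R_1+1,R_2+1)$ inside their outer-bound region $\overline{C^{IFC}}$ is achievable, and then reduces the lemma to the algebraic statement that $(\log(1+\SNR_1)-R_0+1, \overline{\Cii{cog}{IFC-LDM}})$ lies in $\overline{C^{IFC}}$ when $R_0\geq 7$ (verified in the weak, mixed, and strong interference regimes). You instead re-derive the achievability from scratch by constructing and analyzing a Han--Kobayashi scheme with ETW-style power allocation. Both routes are valid: yours is self-contained but repeats the hard part of \cite{EtkinTseWang}; the paper's turns the lemma into a clean inclusion check between two explicit polytopes and offloads the coding-theoretic content onto the cited one-bit-gap result.

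One point in your construction deserves more care. You have source~1 send a single full-power Gaussian codeword with no rate-splitting, whereas the ETW scheme rate-splits at \emph{both} sources. For the cognitive point this is probably fine, since with $R_1$ forced near $\log(1+\SNR_1)$ the private/public split at source~1 is essentially determined, and whenever treating $X_1$ entirely as noise at destination~4 costs rate, one of the $v_i$ terms already caps $R_2$ at that value. But this should be argued, not assumed: in particular the transition region where $\INR_2$ is comparable to but not larger than $\SNR_1$ (so destination~4 can decode neither all nor none of $X_1$) is exactly where ETW's source-1 split earns its keep, and you gesture at it only parenthetically. The paper's route sidesteps this entirely because the ETW achievability already handles the split uniformly. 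Finally, your claim that a ``direct count\ldots shows is at most $7$'' is asserted rather than carried out — the paper is equally terse there (``it is straightforward to verify''), so this is not a gap relative to the paper, but for a complete argument the constant should actually be tracked through the regime-by-regime comparison.
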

\begin{proof}
Let $\overline{C^{IFC}}$ be the outer bound to the interference channel capacity
region derived in \cite{EtkinTseWang}. From the achievability result there,
we know that given $R_1 = \log(1+SNR_1)-R_0$, $R_2$ is achievable if
\begin{align*}
  (\log(1+SNR_1)-R_0+1, R_2+1)\in \overline{\Cii{}{IFC}}.
\end{align*}

It is straightforward to verify that $R_2 = \overline{\Cii{cog}{IFC-LDM}}-1$ is achievable by considering the weak, mixed, and strong interference regions separately.
\end{proof}

Similar to the symmetric case, let
\begin{align*}
    \beta_1 &=
    \frac{\x_1\x_2+\y_1\y_2-2\sqrt{\x_1\x_2\y_1\y_2}\cos\theta}{\x_1\x_2}\\
    \beta_2 &=
    \frac{\x_1\x_2+\y_1\y_2-2\sqrt{\x_1\x_2\y_1\y_2}\cos\theta}{\y_1\y_2},
\end{align*}
and it is easy to show that when $\frac{x_1x_2}{y_1y_2}\geq 4 (\frac{x_1x_2}{y_1y_2}\leq \frac{1}{4})$, we have $\beta_1\geq \frac{1}{4}(\beta_2\geq \frac{1}{4})$. Then we can show the following lemma, which is the counterpart of Lemma~\ref{lem:cog_LDM_k} for the Gaussian case.

\begin{lem}\label{lem:cogreceivepowers}
    When $\frac{\x_1\x_2}{\y_1\y_2}\geq 4 \text{ or } \frac{\x_1\x_2}{\y_1\y_2}\leq
  \frac{1}{4}$, we have $\beta_1\x_1(1\wedge \frac{\x_2}{\y_2})\geq \frac{1}{4}[\x_1(1\wedge
  \frac{\x_2}{\y_2})]\vee[\y_1(1\wedge\frac{\y_2}{\x_2})]
\stackrel{\text{def}}{=} \frac{\tilde{k}}{4}$.
  \end{lem}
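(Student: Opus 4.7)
The plan is to reduce the bound to two ingredients: a uniform constant lower bound $\beta_i \geq 1/4$ in the appropriate regime, and an identification of which of the two terms attains the maximum in $\tilde{k}$. As noted in the text immediately preceding the lemma, one has $\beta_1 \geq 1/4$ when $\x_1\x_2/(\y_1\y_2) \geq 4$ (since $\beta_1 = (1 - \sqrt{\y_1\y_2/(\x_1\x_2)})^2 + 2\sqrt{\y_1\y_2/(\x_1\x_2)}(1-\cos\theta) \geq (1-1/2)^2 = 1/4$), and symmetrically $\beta_2 \geq 1/4$ when $\x_1\x_2/(\y_1\y_2) \leq 1/4$; I would simply quote these as the starting point.

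The crucial observation is the algebraic identity
\begin{equation*}
\beta_1\, \x_1 \left(1 \wedge \frac{\x_2}{\y_2}\right) \;=\; \beta_2\, \y_1 \left(1 \wedge \frac{\y_2}{\x_2}\right),
\end{equation*}
which follows from $\beta_1 \x_1 \x_2 = \beta_2 \y_1 \y_2$ (immediate from the definitions) together with the elementary case-check $(\y_2/\x_2)(1 \wedge \x_2/\y_2) = 1 \wedge \y_2/\x_2$. This identity lets me use either $\beta_1$ or $\beta_2$ to bound the same quantity, matching whichever one is conveniently lower-bounded in the regime at hand.

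Next I would compare the two candidates appearing in $\tilde{k}$. A short case split on whether $\x_2 \geq \y_2$ or $\x_2 < \y_2$ shows, in both subcases, that
\begin{equation*}
\frac{\x_1 (1 \wedge \x_2/\y_2)}{\y_1 (1 \wedge \y_2/\x_2)} \;=\; \frac{\x_1 \x_2}{\y_1 \y_2}.
\end{equation*}
Hence when $\x_1\x_2/(\y_1\y_2) \geq 4$ the first term dominates and $\tilde{k} = \x_1(1 \wedge \x_2/\y_2)$, while when $\x_1\x_2/(\y_1\y_2) \leq 1/4$ the second term dominates and $\tilde{k} = \y_1(1 \wedge \y_2/\x_2)$.

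Putting the pieces together: in the first regime I apply $\beta_1 \geq 1/4$ directly to the left-hand side and substitute $\tilde{k} = \x_1(1 \wedge \x_2/\y_2)$; in the second regime I first rewrite the left-hand side using the identity as $\beta_2 \y_1(1 \wedge \y_2/\x_2)$, then apply $\beta_2 \geq 1/4$ with $\tilde{k} = \y_1(1 \wedge \y_2/\x_2)$. There is no real obstacle here; the only mildly fiddly part is keeping track of the $1 \wedge (\cdot)$ factors through the substitutions, and the entire argument reduces to a two-case computation exploiting the $(\x_1,\x_2) \leftrightarrow (\y_1,\y_2)$ symmetry between the two regimes.
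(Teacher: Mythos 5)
Your proof is correct and uses the same two core ingredients as the paper (the constant bounds $\beta_1,\beta_2\geq\frac14$ in their respective regimes, and the algebraic identity $\beta_1\x_1\x_2=\beta_2\y_1\y_2$ together with the $1\wedge(\cdot)$ manipulation), but it streamlines the case analysis in a useful way. The paper, in each regime, proves the lower bound separately against \emph{each} of the two terms inside the maximum defining $\tilde k$ (e.g.\ in the first regime it also shows $\beta_1\x_1(1\wedge\x_2/\y_2)\geq\beta_1\tfrac{4\y_1\y_2}{\x_2}(1\wedge\x_2/\y_2)\geq\y_1(1\wedge\y_2/\x_2)\geq\tfrac14\y_1(1\wedge\y_2/\x_2)$). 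You instead observe the ratio identity $\x_1(1\wedge\x_2/\y_2)\big/[\y_1(1\wedge\y_2/\x_2)]=\x_1\x_2/(\y_1\y_2)$, which pins down exactly which term attains the maximum in each regime, so only a single application of $\beta_i\geq\frac14$ is needed per case. This also makes transparent that the entire factor-of-$4$ loss in the lemma comes from the bound on $\beta_i$ rather than from any slack in choosing between the two terms. Both arguments are essentially a two-line computation once the symmetric identity is in hand; yours is the cleaner presentation.
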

  \begin{proof}
  If $\frac{\x_1\x_2}{\y_1\y_2}\geq 4$, we have $\beta_1\geq
    \frac{1}{4}$ and $\x_1\geq4\frac{\y_1\y_2}{\x_2}$. Hence
    \begin{align*}
      \beta_1\x_1(1\wedge \frac{\x_2}{\y_2})&\geq  \frac{1}{4}\x_1(1\wedge \frac{\x_2}{\y_2})\\
      \beta_1\x_1(1\wedge \frac{\x_2}{\y_2})&\geq  \beta_1\frac{4\y_1\y_2}{\x_2}(1\wedge \frac{\x_2}{\y_2})\geq \y_1(1\wedge\frac{\y_2}{\x_2})\geq \frac{1}{4}\y_1(1\wedge\frac{\y_2}{\x_2})
    \end{align*}
    If $\frac{\x_1\x_2}{\y_1\y_2}\leq
  \frac{1}{4}$, we can rewrite the LHS as
  \begin{align*}
    \beta_1\x_1(1\wedge \frac{\x_2}{\y_2}) = \beta_2\frac{\y_1\y_2}{\x_2}(1\wedge
    \frac{\x_2}{\y_2}) =
          \beta_2\y_1(1\wedge \frac{\y_2}{\x_2}).
  \end{align*}
Now, using the fact that $\beta_2\geq \frac{1}{4}$ and $\y_1\geq
4\frac{\x_1\x_2}{\y_2}$ when $\frac{\x_1\x_2}{\y_1\y_2}\leq
\frac{1}{4}$, we can show similarly that
\begin{align*}
          \beta_2\y_1(1\wedge \frac{\y_2}{\x_2})\geq
\frac{1}{4}[\x_1(1\wedge \frac{\x_2}{\y_2})]\vee[\y_1(1\wedge
\frac{\y_2}{\x_2})].
\end{align*}
\end{proof}

\begin{proof}[Proof of Lemma~\ref{thm:cogGaussianLDMachievability}]
  When $\z\leq \x_1\vee \y_2, \y_2\leq 1, \x_1\leq 1$ or $\x_2\leq 1$, it is easy to see from the LDM that the cooperate is not needed and $\overline{\CRoi{LDM}}$ can be achieved by the scheme for the interference channel. So we assume $z>x_1\vee y_2$ and $x_1, x_2, y_2>1$ below.

  When $\frac{1}{4}\leq \frac{x_1x_2}{y_1y_2}\leq 4$, it corresponds to the region $n_1+n_2 = \alpha_1+\alpha_2$ for the LDM. As the channel gains are aligned, the cooperation is also not helpful. In fact, $\overline{\CRoi{LDM}}$ is dominated by $u_1'$ and $u_3'$ in this region, and it is not hard to verify that it is smaller than $\CRoi{IFC}$ using Lemma~\ref{lem:cog_IF}. Hence $\overline{\CRoi{LDM}}$ can be achieved by the scheme for the interference channel. Below we further assume that $\frac{x_1x_2}{y_1y_2}\geq 4$ or $\frac{x_1x_2}{y_1y_2}\geq\frac{1}{4}$.

  We assume that $y_1>1$. According to the LDM, we set $\delta_\A=1, \delta_\B=\delta$, and $\delta_C=0$, and cooperation is achieved through cooperative-private messages. For simplicity, we will require that $R_{1\B}, R_{1\A}\geq \log(1+\x_1)-R_0$.

  In mode \B, source 1 uses power $\frac{1}{\x_1}$ to share bits with source 2 and power $1-\frac{1}{\x_1}$ to send data to destination 3. Under the natural order of superposition coding, the following rates are suppported.
  \begin{align*}
    R_{1B} &=  \log(1+\frac{(1-\frac{1}{\x_1})\x_1}{2}) = \log(1+\x_1)-1\\
    \frac{\bp 12}{\delta} &=  \log(1+\frac{\z}{\x_1})\geq \beta -n_1-1.
  \end{align*}

  For the virtual channel, source 1 uses three messages $W_1, U_1, V_1$ and source 2 uses two messages $W_2, U_2$. For source~1, we allocate powers $\sigma_{W_1}^2= \frac{1}{3}, \sigma_{U_1}^2 = \frac{1}{3\y_2}, \Var{X_{V_1}} = \frac{1}{3}(1\wedge \frac{\x_2}{\y_2})$, and for source~2, $\sigma_{W_2}^2= \frac{1}{3}, \sigma_{U_2}^2 = \frac{1}{3\y_1}, \Var{X_{V_2}} = \frac{\y_2}{\x_2}\Var{X_{V_1}}=\frac{1}{3}(1\wedge \frac{\y_2}{\x_2})$. Destination 1 gets $W_1, U_1, V_1, W_2, U_2$ with powers $\frac{\x_1}{3}, \frac{\x_1}{3\y_2}, \frac{\beta_1\x_1}{3}(1\wedge \frac{\x_2}{\y_2}), \frac{\y_1}{3}, \frac{1}{3}$, resp., and $U_2$ is treated as noise. Destination 2 gets $W_2, U_2, W_1, U_1$ with powers $\frac{\x_2}{3}, \frac{\x_2}{3\y_1}, \frac{\y_2}{3}, \frac{1}{3}$, resp., and $U_1$ is treated as noise.
  Using lemma~\ref{lem:cogreceivepowers}, it is easy to verify that the  following constraints on non-negative rates imply all the relevant constraints in Theorem~\ref{thm:IFconf}.
  \begin{align*}
    R_{W_1}+R_{U_1}+R_{W_2}+R_{V_1}&\leq \log(1+\frac{\x_1+\y_1}{4})\\
    R_{U_1}+R_{W_2}+R_{V_1}&\leq \log(1+\frac{\y_1+\tilde{k}/4}{4})\\
    R_{W_1}+R_{U_1}+R_{V_1}&\leq \log(1+\frac{\x_1+\tilde{k}/4}{4})\\
    R_{W_1}+R_{U_1}&\leq \log(1+\frac{\x_1}{4})\\
    R_{U_1}+R_{W_2}&\leq \log(1+\frac{\frac{\x_1}{\y_2}+\y_1}{4})\\
    R_{U_1}+R_{V_1}&\leq \log(1+\frac{\tilde{k}/4}{4})\\
    R_{U_1}&\leq \log(1+\frac{\x_1}{4\y_2})\\
    R_{V_1}&\leq \bp 12\\
    R_{W_1}+R_{W_2}+R_{U_2}&\leq \log(1+\frac{\x_2+\y_2}{4})\\
    R_{W_1}+R_{U_2}&\leq \log(1+\frac{\frac{\x_2}{\y_1}+\y_2}{4})\\
    R_{W_2}+R_{U_2}&\leq \log(1+\frac{\x_2}{4})\\
    R_{U_2}&\leq \log(1+\frac{\x_2}{4\y_1}).
  \end{align*}

  First we will get the condition on $R_0$ such that $R_{1A} =  \log(1+\x_1)-R_0$ is supported by the above constraints. Set $R_2=0$. In the worst case, we have $\bp 12 = 0$ when $R_{V_1} = 0$. So at least we can achieve $R_{1A} = R_{W_1}+R_{U_1}$, where non-negative $R_{W_1}$ and $R_{U_1}$ satisfy the constraints
  \begin{align*}
      R_{W_1}+R_{U_1}&\leq  \log(1+\frac{\x_1}{4})\\
      R_{U_1}&\leq  \log(1+\frac{\x_1}{16\y_2})\\
      R_{W_1}&\leq  \log(1+\frac{\x_2+\y_2}{4}).
  \end{align*}
  Hence a rate $R_{1A}$ which is the minimum of $\log(1+\frac{\x_1}{4})$ and  $\log(1+\frac{\x_1}{16\y_2})+\log(1+\frac{\x_2+\y_2}{4})$ is acheivable.  Thus,  we may conclude that $R_{1A} = (\log(1+\x_1)-R_0)^+$ is achievable when  $R_0\geq 7$.

  Now in the original constraints, set $R_{1\A} = (\log(1+\x_1)-R_0)^+$. Then by Fourier-Motzkin elimination, we can show that $R_{2\A} = \min(v_1-9, v_2+\bp 12-7+R_0, v_3-19, v_4+\bp 12-16+R_0)$
  is achievable, where $v_i, i = 1, 2, 3, 4$ are defined in Proposition~\ref{prop:LDcogIFC}. When $R_0\geq 7$, using the fact that $\bp 12\geq \delta(\beta-n_1-1)$, we get
  \begin{align*}
    R_2(\delta) = \frac{1}{1+\delta}R_{2\A} \geq\min(u_1'-9,
    u_2'-7+R_0-1, u_3'-19, u_4'-16+R_0-1).
  \end{align*}
  Hence $\overline{\CRoi{LDM}}$ can be achieved within 10 bits in this region.

  The case $y_1\leq 1$ is similar and we can show that $\overline{\CRoi{LDM}}$ can be achieved in this region. The proof is omitted due to space limit.
\end{proof}

\section{Proof of Theorem~\ref{thm:cog_outerbound}}\label{app:cogouterboundtheorem}

We prove the outerbound by first proving an outerbound for a more general channel with generalized feedback of which ours is a special case. Specifically, we consider the following two user cognitive interference channel $p(y_2,y_3,y_4|x_1,x_2)$ whose input alphabets are ${\mathcal X}_1$, ${\mathcal X}_2$ respectively for primary and secondary sources, output alphabets are ${\mathcal Y}_3$,  ${\mathcal Y}_4$ respectively for primary and secondary destinations, and  ${\mathcal Y}_2$ is the output alphabet for the secondary source. Let $W_1$ and $W_2$ be the messages of the primary and secondary sources. At time $t$, the secondary sources signal $X_{2,t}$ may depend only on its past outputs $Y_2^{t-1}$ and its message $W_2$. We also have cost functions $c_1:{\mathcal X}_1\rightarrow {\mathbb R}_+$ and $c_2:{\mathcal X}_2\rightarrow {\mathbb R}_+$ and there are average cost constraints $P_1$ and $P_2$, respectively, on the primary and secondary sources.
Along the lines of~\cite{TelatarTse07}, we focus on channels of the following form $p(y_2,y_3,y_4|x_1,x_2) = \sum_{u_1,u_2} p(u_1,u_2,y_2,y_3,y_4|x_1,x_2)$, where
\[  p(u_1,u_2,y_2,y_3,y_4|x_1,x_2) = p(u_1,y_2|x_1)p(u_2|x_2)\delta(y_3-f_3(x_1,u_2))\delta(y_4-f_4(x_2,u_1)),\]
where $U_1$ and $U_2$ take values in alphabets ${\mathcal U}_1$ and ${\mathcal U}_2$ respectively, and, for every $x_1\in{\mathcal X}_1$, the map $f_3(x_1,.):{\mathcal U}_2 \rightarrow {\mathcal Y}_3$ defined as $u_2 \mapsto f_3(x_1,u_2)$ is invertible, and similarly, for $f_4$. The capacity region of this channel may be defined as usual.

The following gives an outerbound on the capacity region of the above channel.
\begin{thm}\label{thm:cog-gen-ob}
If $(R_1,R_2)$ belongs to the capacity region of the above channel, there there is a $p(q,x_1,x_2)$ with ${\mathbb E}[c_1(X_1)]\leq P_1$ and ${\mathbb E}[c_2(X_2)]\leq P_2$ such that for the joint distribution
\[  p(u_1,u_2,y_2,y_3,y_4,x_1,x_2)= p(u_1,u_2,y_2,y_3,y_4|x_1,x_2)p(q,x_1,x_2),\]
\begin{align}
R_2 &\leq I(X_2;Y_4|X_1,Q), \label{eq:cog-gen-obR2}\\
R_1+R_2 &\leq I(X_1;Y_2,Y_3|Y_4,X_2,Q) + I(X_1,X_2;Y_4|Q), \label{eq:cog-gen-obsum1}\\
R_1+R_2 &\leq I(X_2;Y_4|Y_3,X_1,Q) + I(X_1,X_2;Y_3|Q), \label{eq:cog-gen-obsum2}\\
2R_1+R_2 &\leq I(X_1,X_2;Y_3|Q) + I(X_1;Y_2|Q) + I(X_1,X_2;Y_4|U_2,Y_2,Q) + I(X_1;Y_3|X_2,Y_2,Y_4,Q). \label{eq:cog-gen-2R1plusR2}
\end{align}
\end{thm}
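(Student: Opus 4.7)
The overall strategy is standard in spirit: start from Fano's inequality at each decoder, reveal carefully chosen genie side--information (built from $Y_2^n$, $U_1^n$, and $U_2^n$), apply the chain rule and Csisz\'ar's sum identity to rearrange the resulting entropies into per--letter terms, and finally single--letterize by introducing a time--sharing variable $Q$ uniform on $\{1,\ldots,n\}$. The channel's product structure $p(u_1,y_2\mid x_1)\,p(u_2\mid x_2)$ together with the invertibility of $f_3(x_1,\cdot)$ and $f_4(x_2,\cdot)$ is exactly what is needed so that $Y_3$ (resp.\ $Y_4$) and $U_2$ (resp.\ $U_1$) determine each other given $X_1$ (resp.\ $X_2$), and so that $X_1$ and $X_2$ behave as ``independent inputs to product channels with common outputs'' in the chain expansions below. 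The cost constraints $\mathbb{E}[c_i(X_i)]\le P_i$ survive single--letterization by the usual Jensen argument.

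Bound \eqref{eq:cog-gen-obR2} is the cut--set bound across destination~$4$: Fano gives $n(R_2-\epsilon_n)\le I(W_2;Y_4^n\mid W_1)$, and since $X_1^n$ is a function of $W_1$ and $X_2^t$ depends only on $(W_2,Y_2^{t-1})$, the right--hand side telescopes into $\sum_t I(X_{2,t};Y_{4,t}\mid X_{1,t})$. The two sum--rate bounds \eqref{eq:cog-gen-obsum1}--\eqref{eq:cog-gen-obsum2} follow by specializing the argument of \cite[Theorem~II.1]{TuninettiITA10} to the cognitive setting $Y_1\equiv 0$: in each case one Fano--bounds one of the rates while the other is bounded after revealing the \emph{opposite} destination's output as a genie, using the invertibility of $f_3,f_4$ to convert conditional entropies of $Y_3$ (resp.\ $Y_4$) given $X_1$ (resp.\ $X_2$) into conditional entropies of $U_2$ (resp.\ $U_1$) that then cancel when summed.

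The main new ingredient is \eqref{eq:cog-gen-2R1plusR2}. The plan is to apply Fano three times and then telescope:
\begin{align*}
n(R_1-\epsilon_n)&\le I(W_1;Y_3^n),\\
n(R_2-\epsilon_n)&\le I(W_2;Y_4^n\mid W_1,Y_2^n,U_2^n),\\
n(R_1-\epsilon_n)&\le I(W_1;Y_2^n,Y_3^n,U_2^n,Y_4^n\mid W_2).
\end{align*}
The second and third inequalities are valid because $W_1\perp W_2$ and because conditioning/augmenting at the decoder side cannot decrease the mutual information that Fano already bounds. Adding the three, the joint entropies involving $(Y_2^n,U_2^n)$ are split using the chain rule, the pieces $H(Y_4^n\mid\cdots)$ in the second and third lines are paired and the common $H(U_2^n,Y_2^n\mid W_1,W_2)$ term cancels against a matching piece obtained from $I(W_1;Y_2^n\mid W_2)$, leaving exactly four families of per--letter terms: $I(X_1,X_2;Y_3)$ from the first line, $I(X_1;Y_2)$ from the $Y_2^n$--part of the third, $I(X_1,X_2;Y_4\mid U_2,Y_2)$ from the paired $Y_4^n$--terms, and the residual $I(X_1;Y_3\mid X_2,Y_2,Y_4)$ from the $Y_3^n$--part of the third line after $X_2$ (equivalently $U_2$) is absorbed into the conditioning.

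The step that will require the most care, and the one on which the whole argument hinges, is the telescoping of $I(W_1;Y_2^n,Y_3^n,U_2^n,Y_4^n\mid W_2)$. The na\"ive swap $W_1\mapsto X_1^t$ inside the chain fails because future outputs $Y_{2,t'}$ with $t'>t$ depend on $X_1^{t'}\setminus X_1^t$; the fix is to insert $U_1^{t-1}$ as an auxiliary on the conditioning side and to exploit the Markov chain
\[
X_1^t\;-\;(U_1^{t-1},Y_2^{t-1},W_2)\;-\;(X_{2,t},U_{2,t},Y_{2,t}),
\]
which is precisely the content of the factorization $p(u_1,y_2\mid x_1)p(u_2\mid x_2)$ assumed for the channel. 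Once this Markov chain is in place, Csisz\'ar's sum identity lets one swap the past--$U_2$ terms against the future--$U_1$ terms, and the combined expression collapses into the four per--letter mutual informations stated in \eqref{eq:cog-gen-2R1plusR2}. The final step is the usual time--sharing with $Q\sim\mathrm{Unif}\{1,\dots,n\}$ and $(X_1,X_2)=(X_{1,Q},X_{2,Q})$, after which the cost constraint follows because $\mathbb{E}[c_i(X_i)]=\tfrac{1}{n}\sum_t\mathbb{E}[c_i(X_{i,t})]\le P_i$.
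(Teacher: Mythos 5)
Your reading of \eqref{eq:cog-gen-obR2} as a cut--set bound and your attribution of \eqref{eq:cog-gen-obsum1}--\eqref{eq:cog-gen-obsum2} to Tuninetti's outer bound is consistent with the paper, which simply cites \cite[Theorem~II.1]{TuninettiITA10} for those two. The problem is in your plan for the new bound \eqref{eq:cog-gen-2R1plusR2}, and the gap is not merely cosmetic.

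Your second Fano step,
\begin{align*}
n(R_2-\epsilon_n)\le I(W_2;Y_4^n\mid W_1,Y_2^n,U_2^n),
\end{align*}
does \emph{not} follow from ``genie augmentation plus $W_1\perp W_2$.'' Augmentation gives $I(W_2;Y_4^n)\le I(W_2;Y_4^n,W_1,Y_2^n,U_2^n)$. To drop to the conditional form you need $I(W_2;W_1,Y_2^n,U_2^n)=0$. Independence kills $I(W_2;W_1)$, and the Markov chain $Y_2^n-X_1^n-W_1-W_2$ kills $I(W_2;Y_2^n\mid W_1)$, but $I(W_2;U_2^n\mid W_1,Y_2^n)$ is \emph{not} zero: given $(W_1,Y_2^n)$, $X_2^n$ is a deterministic function of $(W_2,Y_2^n)$ and $U_2^n$ is generated from $X_2^n$, so $U_2^n$ carries information about $W_2$. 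You have silently discarded a positive term. The paper sidesteps exactly this: it does not condition on $W_1$, keeps the term $H(U_2^n|Y_2^n)$ that comes from expanding $I(W_2;Y_4^n,U_2^n|Y_2^n)$, and engineers a cancellation of that term against $-H(U_2^n|Y_2^n)$, which it manufactures from the \emph{first} Fano step via the chain $H(Y_3^n\mid W_1)=H(U_2^n\mid X_1^n,W_1)\ge H(U_2^n\mid Y_2^n,X_1^n,W_1)=H(U_2^n\mid Y_2^n)$, the last equality coming from $U_2^n-X_2^n-(W_2,Y_2^n)-(W_1,X_1^n)$. Your sketch's first line $n(R_1-\epsilon_n)\le I(W_1;Y_3^n)$ does not set up this cancellation, so even after correcting the second line you would be left with an unmatched $H(U_2^n|\cdot)$ term.

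Also, the telescoping machinery you propose --- inserting $U_1^{t-1}$ as an auxiliary and invoking Csisz\'ar's sum identity --- is genuinely different from what the paper does and is not needed. The paper's proof of \eqref{eq:cog-gen-2R1plusR2} never uses $U_1$ at all; after the three Fano inequalities are added, every multi--letter entropy is bounded term--by--term using the chain rule, the Markov relations from the channel factorization $p(u_1,y_2|x_1)p(u_2|x_2)$, and the fact that $X_2^n$ is a function of $(W_2,Y_2^n)$. In particular, the ``difficult'' term $H(Y_3^n\mid Y_2^n,Y_4^n,W_2)$ is handled simply by first adjoining $X_2^n$ (determined by $(W_2,Y_2^n)$) and then single--letterizing directly; no sum identity, no $U_1$. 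Your third Fano line $I(W_1;Y_2^n,Y_3^n,U_2^n,Y_4^n\mid W_2)$ also adds $U_2^n$ where the paper's corresponding step has only $(Y_2^n,Y_3^n,Y_4^n)$; this loosens the bound and would produce yet more terms that would need to cancel, so I would not trust the bookkeeping as sketched without writing it out explicitly.
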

\begin{IEEEproof}
The first bound \eqref{eq:cog-gen-obR2} is a simple cutset bound. The next two \eqref{eq:cog-gen-obsum1}-\eqref{eq:cog-gen-obsum2} were proved in~\cite[Theorem II.1]{TuninettiITA10}. We omit the proofs here. The last one is new and its proof follows.

By Fano's inequality, for any $\epsilon>0$, we have a sufficiently large blocklength $n$ such that
\begin{align*}
n(R_1 - \epsilon) &\leq I(W_1;Y_3^n)= H(Y_3^n) - H(Y_3^n|W_1) = H(Y_3^n) - H(Y_3^n|X_1^n,W_1).
\end{align*}
But, $H(Y_3^n|X_1^n,W_1) = H(U_2^n|X_1^n,W_1) \geq H(U_2^n|Y_2^n,X_1^n,W_1) = H(U_2^n|Y_2^n)$, where the last equality follows from the facts that $U_2^n - X_2^n - (W_2,Y_2^n) - (W_1,X_1^n)$ is a Markov chain and $W_1,W_2$ are independent. Hence,
\begin{align}
n(R_1-\epsilon) &\leq H(Y_3^n) - H(U_2^n|Y_2^n). \label{eq:cog-gen-ob-proof1}
\end{align}
Another application of Fano's inequality gives
\begin{align}
n(R_1-\epsilon) &\leq I(W_1;Y_3^n)\notag\\ &\leq I(W_1;Y_3^n,Y_2^n,Y_4^n,W_2)\notag\\
 		&= I(W_1;Y_3^n,Y_2^n,Y_4^n|W_2)\notag\\
                &= H(Y_2^n|W_2) + H(Y_4^n|Y_2^n,W_2) + H(Y_3^n|Y_2^n,Y_4^n,W_2) - H(Y_2^n,Y_3^n,Y_4^n|W_2). \label{eq:cog-gen-ob-proof2}
\end{align}
Again, using Fano's inequality,
\begin{align*}
n(R_2-\epsilon) &\leq I(W_2;Y_4^n)\notag\\ &\leq I(W_2;Y_4^n,Y_2^n,U_2^n)\notag\\
		&\stackrel{\text{(a)}}{=} I(W_2;Y_4^n,U_2^n|Y_2^n)\\
		&= H(Y_4^n,U_2^n|Y_2^n) - H(Y_4^n,U_2^n|Y_2^n,W_2)\\
		&= H(U_2^n|Y_2^n) + H(Y_4^n|U_2^n,Y_2^n) - H(Y_4^n|Y_2^n,W_2) - H(U_2^n|Y_4^n,Y_2^n,W_2),
\end{align*}
where (a) follows from the fact that $Y_2^n - X_1^n - W_1 - W_2$ is a Markov chain and $W_1$ is independent of $W_2$. Furthermore, $H(U_2^n|Y_4^n,Y_2^n,W_2) = H(U_2^n|X_2^n,Y_4^n,Y_2^n,W_2) = H(U_2^n|X_2^n)$, where the first equality is due to the fact that $X_2^n$ is a deterministic function of $(W_2,Y_2^n)$ and the second equality follows from $U_2^n - X_2^n - (Y_4^n,Y_2^n,W_2)$ being a Markov chain. Thus,
\begin{align}
n(R_2-\epsilon) &\leq H(U_2^n|Y_2^n) + H(Y_4^n|U_2^n,Y_2^n) - H(Y_4^n|Y_2^n,W_2) - H(U_2^n|Y_4^n,Y_2^n,W_2). \label{eq:cog-gen-ob-proof3}
\end{align}

Adding up \eqref{eq:cog-gen-ob-proof1}-\eqref{eq:cog-gen-ob-proof3}, we have
\begin{align}
n(2R_1+R_2 - 3\epsilon) &\leq H(Y_3^n) + H(Y_2^n|W_2) + H(Y_3^n|Y_2^n,Y_4^n,W_2) + H(Y_4^n|U_2^n,Y_2^n)
	\notag\\&\qquad
	- H(Y_2^n,Y_3^n,Y_4^n|W_2) - H(U_2^n|Y_4^n,Y_2^n,W_2) \label{eq:cog-gen-ob-proof4}
\end{align}
But,
\begin{align*}
H(Y_3^n) &\leq \sum_{t=1}^n H(Y_{3,t}),\\
H(Y_2^n|W_2) &= \sum_{t=1}^n H(Y_{2,t})\\
H(Y_3^n|Y_2^n,Y_4^n,W_2) &= H(Y_3^n|X_2^n,Y_2^n,Y_4^n,W_2) = H(Y_3^n|X_2^n,Y_2^n,Y_4^n) \leq \sum_{t=1}^n H(Y_{3,t}|X_{2,t},Y_{2,t},Y_{4,t}),\\
H(Y_4^n|U_2^n,Y_2^n) &\leq \sum_{t=1}^n H(Y_{4,t}|U_{2,t},Y_{2,t}),\\
H(Y_2^n,Y_3^n,Y_4^n|W_1,W_2) &= H(Y_2^n|W_1,W_2) + H(Y_3^n,Y_4^n|Y_2^n,W_1,W_2)\\
	&= H(Y_2^n|X_1^n,W_1,W_2) + H(Y_3^n,Y_4^n|X_1^n,X_2^n,Y_2^n,W_1,W_2)\\
	&= \sum_{t=1}^n \left(H(Y_{2,t}|X_{1,t}) + H(Y_{3,t},Y_{4,t}|X_{1,t},X_{2,t},Y_{2,t})\right)\\
	&= \sum_{t=1}^n \left(H(Y_{2,t}|X_{1,t}) + H(Y_{3,t}|X_{1,t},X_{2,t},Y_{2,t}) + H(Y_{4,t}|X_{1,t},X_{2,t},Y_{2,t})\right),\\
	&= \sum_{t=1}^n \left(H(Y_{2,t}|X_{1,t}) + H(Y_{3,t}|X_{1,t},X_{2,t},Y_{2,t},Y_{4,t}) + H(Y_{4,t}|X_{1,t},X_{2,t},U_{2,t},Y_{2,t})\right),\\
H(U_2^n|Y_4^n,Y_2^n,W_2) &= H(U_2^n|X_2^n,Y_4^n,Y_2^n,W_2)\\ &= H(U_2^n|X_2^n)\\
	&\geq H(U_2^n|X_1^n,X_2^n)\\
	&= H(Y_3^n|X_1^n,X_2^n)\\
	&= \sum_{t=1}^n H(Y_{3,t}|X_{1,t},X_{2,t}).
\end{align*}
Substituting in \eqref{eq:cog-gen-ob-proof4}, we get
\begin{align*}
n(2R_1&+R_2 - 3\epsilon)\\ &\geq \sum_{t=1}^n I(X_{1,t},X_{2,t};Y_{3,t}) + I(X_{1,t};Y_{2,t}) + I(X_{1,t},X_{2,t};Y_{4,t}|U_{2,t},Y_{2,t}) + I(X_{1,t};Y_{3,t}|X_{2,t},Y_{2,t},Y_{4,t}).
\end{align*}
Proceeding as usual by picking $Q$ to be uniformly distributed over $\{1,\ldots,n\}$ and letting $X_1=X_{1,Q}$ and so on, we obtain \eqref{eq:cog-gen-2R1plusR2}.

\end{IEEEproof}

We will use the above theorem to prove our outerbound. Notice that our channel fits the model if we identify the primary and secondary sources' channel inputs as $X_1$ and $(X_2,S_2)$ respectively, the output for the secondary source is $Y_2$, and $U_1 = h_{14}X_{1} + Z_{4}$, $U_2 = h_{23}X_{2}1_{S_{2}=1} + Z_{3}$. The primary and secondary destinations' channel outputs are $Y_3 = h_{13}X_1 + U_2$, and $Y_4 = h_{24}X_21_{S_2=1} + U_1$ respectively. And the cost functions are $c_1(x_1)=|x_1|^2$ and $c_2(x_2,s_2)=|x_2|^2 1_{s_2=1}$ with unit power constraints $P_1=P_2=1$.

In Theorem~\ref{thm:cog-gen-ob}, let $\delta = {\mathbb P}(S_2=0)/{\mathbb P}(S_2=1)$. Also, let
\begin{align*}
P_{1\A}&={\mathbb E}\left[|X_1|^2 \mid S_2=1\right],&
P_{1\B}&={\mathbb E}\left[|X_1|^2\mid S_2=0\right],\text{ and}\\
P_{2\A}&={\mathbb E}\left[|X_2|^2\mid S_2=1\right],& P_{2\B}&=0.
\end{align*}
We have ${\mathbb E}\left[|X_1|^2\right]=(P_{1\A}+\delta P_{1\B})/(1+\delta) \leq 1$, and
 ${\mathbb E}\left[|X_2|^21_{S_2=1}\right]=P_{1\A}/(1+\delta) \leq 1$. We now derive the outerbounds:
\begin{enumerate}
\item $R_2$\\
From \eqref{eq:cog-gen-obR2},
\begin{align*}
R_2 &\leq I(X_2,S_2;Y_4|X_1,Q)\\
	&\leq H(S_2) + I(X_2;Y_4|X_1,Q,S_2=1){\mathbb P}(S_2=1)\\
	&\leq 1 + \frac{1}{1+\delta}\log(1+\x_2P_{2A}).
\end{align*}

\item $R_1+R_2$\\
From \eqref{eq:cog-gen-obsum1},
\begin{align*}
R_1+R_2 &\leq I(X_1;Y_2,Y_3|Y_4,X_2,S_2,Q) + I(X_1,X_2,S_2;Y_4|Q)\\
	&\leq I(X_1;Y_2,Y_3|Y_4,Q,S_2=0){\mathbb P}(S_2=0) + I(X_1;Y_3|Y_4,X_2,Q,S_2=1){\mathbb P}(S_2=1)
	\\&\qquad
	+ H(S_2) +  I(X_1;Y_4|Q,S_2=0){\mathbb P}(S_2=0) +  I(X_1,X_2;Y_4|Q,S_2=1){\mathbb P}(S_2=1)\\
	&= H(S_2) + I(X_1;Y_2,Y_3,Y_4|Q,S_2=0){\mathbb P}(S_2=0)
	\\&\qquad
	+ ( I(X_1,X_2;Y_4|Q,S_2=1) + I(X_1;Y_3|Y_4,X_2,Q,S_2=1)){\mathbb P}(S_2=1)\\
	&\leq 1 + \frac{\delta}{1+\delta}\log(1+(\x_1+\y_2+\z)P_{1B})
	\\&\qquad
	+\frac{1}{1+\delta}\left(\log (1+2\x_2P_{2A}+2\y_2P_{1A}) + \log\left(1+\frac{\x_1P_{1A}}{1+\y_2P_{1A}}\right)\right)
\end{align*}

\item $R_1+R_2$\\
From \eqref{eq:cog-gen-obsum2},
\begin{align*}
R_1+R_2 &\leq I(X_2,S_2;Y_4|Y_3,X_1,Q) + I(X_1,X_2,S_2;Y_3|Q)\\
	&\leq H(S_2) + I(X_2;Y_4|Y_3,X_1,Q,S_2=1){\mathbb P}(S_2=1)
	\\&\qquad
	+ H(S_2) + I(X_1;Y_3|Q,S_2=0){\mathbb P}(S_2=0) + I(X_1;Y_3|Q,S_2=1){\mathbb P}(S_2=1)\\
	&= 2H(S_2) + I(X_1;Y_3|Q,S_2=0){\mathbb P}(S_2=0)
	\\&\qquad
	+ (I(X_2;Y_4|Y_3,X_1,Q,S_2=1)+I(X_1;Y_3|Q,S_2=1)){\mathbb P}(S_2=1)\\
	&\leq 2 + \frac{\delta}{1+\delta}\log(1+\x_1P_{1B})
	 + \frac{1}{1+\delta}\left(\log\left(1+\frac{\x_2P_{2A}}{1+\y_1P_{2A}}\right) + \log(1+2\x_1P_{1A}+2\y_1P_{2A})\right)
\end{align*}

\item $2R_1+R_2$\\
From \eqref{eq:cog-gen-2R1plusR2},
\begin{align*}
2&R_1+R_2\\
	&\leq I(X_1,X_2,S_2;Y_3|Q) + I(X_1;Y_2|Q) + I(X_1,X_2,S_2;Y_4|Q,U_2,Y_2) + I(X_1;Y_3|Q,X_2,S_2,Y_2,Y_4)\\
	&\leq I(X_1,X_2,S_2;Y_3|Q) + I(X_1,S_2;Y_2|Q) + I(X_1,X_2,S_2;Y_4|Q,U_2,Y_2) + I(X_1;Y_3|Q,X_2,S_2,Y_2,Y_4)\\
	 &\leq 3H(S_2) +  (I(X_1;Y_3|Q,S_2=0) +  I(X_1;Y_3,Y_2,Y_4|Q,S_2=0)){\mathbb P}(S_2=0)
	\\&\quad
	 + (I(X_1,X_2;Y_3|Q,S_2=1) +  I(X_1,X_2;Y_4|Q,U_2,S_2=1) + I(X_1;Y_3|Q,X_2,Y_4,S_2=1)){\mathbb P}(S_2=1)\\
	&\leq 3 + \frac{\delta}{1+\delta}\left(\log(1+\x_1P_{1B})+\log(1+(\x_1+\y_2+\z)P_{1B})\right)
	\\&\quad
	 +  \frac{1}{1+\delta}\left(\log(1+2\x_1P_{1A}+2\y_1P_{2A}) + +\log(1+\y_2P_{1A}+\frac{2\x_2P_{2A}+\y_2P_{1A}}{1+\y_1P_{2A}}) + \log(1+\frac{\x_1P_{1A}}{1+\y_2P_{1A}})\right).
\end{align*}

\end{enumerate}

\section{Proof of Lemma~\ref{lem:cog_outerbound}}\label{app:cogouterboundlemma}

The power constraint implies that we have $P_{1A}\leq 1+\delta, P_{2A}\leq
1+\delta, P_{1B}\leq \frac{1+\delta}{\delta}$.  In the upper bound of $R_2$ and
$R_1+R_2$, each term is a monotone increasing function of $P_{1A}, P_{2A},
P_{1B}$. So
\begin{align*}
  R_2&\leq  {1+} \frac{1}{1+\delta}\log(1+\x_2(1+\delta))\leq  {1+}  \frac{1}{1+\delta}\log(1+\x_2)+\frac{1}{1+\delta}\log(1+\delta),\\
  R_1+R_2&\leq {1+}  \frac{1}{1+\delta}\Bigg[\log (1+2\x_2(1+\delta)+2\y_2(1+\delta))+\delta\log\left(1+(\x_1+\y_2+\z)\frac{1+\delta}{\delta}\right)\\
  &\qquad\qquad\qquad	+\log\left(1+\frac{\x_1(1+\delta)}{1+\y_2(1+\delta)}\right)\Bigg]\\
  &\leq  {1+}
  \frac{1}{1+\delta}\left[\log(1+2\x_2+2\y_2)+\delta\log(1+(\x_1+\y_2+\z))+\log\left(1+\frac{\x_1}{1+\y_2}\right)\right]\\
  &\qquad	+\frac{\delta}{1+\delta}\log\left(\frac{1+\delta}{\delta}\right)+\frac{2}{1+\delta}\log(1+\delta),\\
  R_1+R_2&\leq  {2+} \frac{1}{1+\delta}\left[\log(1+2\x_1(1+\delta)+2\y_1(1+\delta))+\delta \log(1+\x_1\frac{1+\delta}{\delta})
  +\log (1+\frac{\x_2(1+\delta)}{1+\y_1(1+\delta)})\right]\\
  &\leq  {2+}  \frac{1}{1+\delta}\left[\log(1+2\x_1+2\y_1)+\delta \log(1+\x_1)+\log
  \left(1+\frac{\x_2}{1+\y_1}\right)\right]\\
  &\qquad	+\frac{\delta}{1+\delta}\log\left(\frac{1+\delta}{\delta}\right)+\frac{2}{1+\delta}\log(1+\delta).
\end{align*}

In the upper bound for $2R_1+R_2$, observe that
\begin{align*}
  1+\y_2P_{1A}+\frac{2\x_2P_{2A}+\y_2P_{1A}}{1+\y_1P_{2A}}
  &\leq  1+\y_2(1+\delta)+\frac{2\x_2P_{2A}+\y_2(1+\delta)}{1+\y_1P_{2A}}\\
  &\leq  \max \left\{
  \begin{array}{c}
  1+\y_2(1+\delta)+\frac{(2\x_2+\y_2)(1+\delta)}{1+\y_1(1+\delta)},\\
  1+2\y_2(1+\delta)
  \end{array}
  \right\}\\
  &\leq  (1+\delta)\max \left\{
  \begin{array}{c}
  1+\y_2+\frac{2\x_2+\y_2}{1+\y_1},\\
  1+2\y_2
  \end{array}
  \right\}.
\end{align*}

So we have
\begin{align*}
  2R_1+R_2&\leq  {3+}  \frac{1}{1+\delta}\Big[\log(1+2\x_1(1+\delta)+2\y_1(1+\delta))+\delta
  \log\left(1+\x_1\frac{1+\delta}{\delta}\right)+\log\left(1+\frac{\x_1(1+\delta)}{1+\y_2(1+\delta)}\right)\\
  &\qquad+\log\left(
  \max \left\{
  \begin{array}{c}
  1+\y_2(1+\delta)+\frac{(2\x_2+\y_2)(1+\delta)}{1+\y_1(1+\delta)},\\
  1+2\y_2(1+\delta)
  \end{array}
  \right\}
  \right)+\delta \log\left(1+(\x_1+\y_2+\z)\frac{1+\delta}{\delta}\right)\Big]\\
  &\leq  {3+}  \frac{1}{1+\delta}\Big[\log(1+2\x_1+2\y_1)+\delta
  \log(1+\x_1)+\log(1+\frac{\x_1}{1+\y_2})\\
  &\qquad+\max\left(\log\left(1+\y_2+\frac{2\x_2+\y_2}{1+\y_1}\right),\;\; \log(1+2\y_2)\right)+\delta
  \log(1+(\x_1+\y_2+\z))\Big]\\
  &\qquad+\frac{2\delta}{1+\delta}\log\left(\frac{1+\delta}{\delta}\right)+\frac{3}{1+\delta}\log(1+\delta).
\end{align*}
We finish the proof by noticing that for $\delta\geq 0$,
\begin{align*}
  \frac{\delta}{1+\delta}\log(\frac{1+\delta}{\delta})\leq\frac{1}{e\ln2}
\quad\text{ and }
  \quad\frac{1}{1+\delta}\log(1+\delta)\leq \frac{1}{e\ln2}.
\end{align*}

\bibliographystyle{IEEEtran}

\ 

Rui Wu is a Ph.D. candidate in the Department of Electrical and Computer Engineering at University of Illinois at Urbana-Champaign. He received the B.S. degree in Department of Electronic Engineering from Tsinghua University in 2008, and the M.S. degree in Department of Electrical and Computer Engineering at University of Illinois at Urbana-Champaign in 2011. He is a visiting student of the DYOGEN team at Inria, Paris during the fall semester 2013. He is a recipient of the James M. Henderson Fellowship at University of Illinois at Urbana-Champaign in 2009. His research interests include information theory, networking and machine learning. \\

Vinod M. Prabhakaran received his Ph.D. in 2007 from the EECS Department, University of California, Berkeley. He was a Postdoctoral Researcher at the Coordinated Science Laboratory, University of Illinois, Urbana-Champaign from 2008 to 2010 and at Ecole Polytechnique F\'ed\'erale de Lausanne, Switzerland in 2011. In Fall 2011, he joined the Tata Institute of Fundamental Research, Mumbai, where he currently holds the position of a Reader. His research interests are in information theory, wireless communication, cryptography, and signal processing.
He has received the Tong Leong Lim Pre-Doctoral Prize and the Demetri Angelakos Memorial Achievement Award from the EECS Department, University of California, Berkeley, and the Ramanujan Fellowship from the Department of Science and Technology, Government of India. \\

Pramod Viswanath received the Ph.D. degree in EECS from the University of California at Berkeley in 2000. He was a member of technical staff at Flarion Technologies until August 2001 before joining the ECE department at the University of Illinois, Urbana-Champaign. He is a recipient of the Xerox Award for Faculty Research from the College of Engineering at UIUC (2010), the Eliahu Jury Award from the EECS department of UC Berkeley (2000), the Bernard Friedman Award from the Mathematics department of UC Berkeley (2000), and the NSF CAREER Award (2003). He was an associate editor of the IEEE Transactions on Information Theory for the period 2006-2008.\\

Yi Wang received the M.S.E.E. and Ph.D. degrees in information engineering department from Beijing University of Posts and Telecommunications, China, in 1997 and 2000 respectively. He has worked at Tsinghua University and the University of Kiel in Germany as post-doctor. Since 2005 he joined Huawei Technologies Co., Ltd. he led a series of research projects including beyond 3G, superposition coding for LTE-advanced system, distributed antenna system, cloud RAN, and massive MIMO. Currently Dr. Wang is the principal engineer at Huawei leading high frequency research. His research interests cover mobile communications techniques and related standards.

\end{document}